\documentclass[11pt]{article}

\usepackage[T1]{fontenc}
\usepackage[a4paper,margin=1.05in]{geometry}
\usepackage{lmodern}
\usepackage{microtype}
\usepackage{amsmath,amsthm,amssymb,mathtools,bbm}
\usepackage{dsfont}
\usepackage{hyperref}
\usepackage{subfig}
\usepackage{microtype}
\usepackage{booktabs} 
\usepackage[backend=biber,style=nature, backref=true]{biblatex}
\hypersetup{colorlinks=true,linkcolor=blue,citecolor=blue,urlcolor=blue}
\allowdisplaybreaks
\usepackage[utf8]{inputenc}
\usepackage[english]{babel}
\usepackage{xcolor}
\usepackage{parskip}
\usepackage{appendix}
\usepackage{tikz-cd}
\usetikzlibrary{arrows.meta,decorations.pathreplacing,backgrounds,calc,positioning}
\usepackage{braket}
\usepackage{mathrsfs}
\usepackage{graphicx}
\usepackage{listings}
\usepackage{multirow}
\usepackage{mhchem}
\usepackage[export]{adjustbox}
\usepackage{spalign}
\usepackage{cancel}
\usepackage{verbatim}
\usepackage{abstract}
\usepackage{authblk}
\usepackage{bm}

\numberwithin{equation}{section}
\numberwithin{figure}{section}
\numberwithin{table}{section}

\theoremstyle{definition}
\newtheorem{mydef}{Definition}[section]

\theoremstyle{plain}
\newtheorem{thm}{Theorem}[section]
\newtheorem{lem}[thm]{Lemma}
\newtheorem{prop}[thm]{Proposition}

\newtheorem{cor}[thm]{Corollary}

\theoremstyle{remark}
\newtheorem{obs}[thm]{Remark}

\newcommand{\w}{\omega}

\newcommand{\Z}{\mathbb{Z}}
\newcommand{\p}{\rho}

\newcommand{\CC}{\mathbb{C}}
\newcommand{\Hil}{\mathcal{H}}
\newcommand{\Id}{\mathbb{I}}

\newcommand{\RR}{\mathcal{R}}
\newcommand{\Li}{\mathcal{L}}
\newcommand{\LL}{\widetilde{\Li}}
\newcommand{\A}{\mathcal{A}}
\newcommand{\vA}{\bm{A}}

\newcommand{\recovery}{\mathcal{R}_{B\to AB}^t}

\newcommand{\La}{\Lambda}
\newcommand{\G}{\Gamma}
\newcommand{\T}{T^\Lambda}

\newcommand{\TT}{\widetilde{T}^\Lambda}
\newcommand{\g}{g_{\Lambda}}
\newcommand{\norm}[1]{\left\lVert#1\right\rVert}
\newcommand{\normm}[2]{\left\lVert#1\right\rVert_{#2}}

\newcommand{\summ}[1]{\sum\limits_{\substack{ #1 }}}

\newcommand{\tr}{\operatorname{tr}}

\newcommand{\supp}{\operatorname{supp}}
\newcommand{\poly}{\mathrm{poly}}

\newcommand{\abs}[1]{\left \vert#1\right\vert}

\definecolor{ms}{rgb}{0,.4,1}

\begin{document}

\title{Static features from mixing in short- and long-range Lindbladians: Markov property and correlations}



\author[1]{\href{https://orcid.org/0009-0000-9555-5604}{Paul Rosa-Ruiz}\thanks{paul.rosa@ift.csic.es}}
\author[1]{\href{https://orcid.org/0000-0002-6395-3971}{Matteo Scandi}\thanks{matteo.scandi@csic.es}}
\author[2,3]{\href{https://orcid.org/0000-0001-6713-6760}{\'{A}ngela Capel}\thanks{ac2722@cam.ac.uk}}
\author[1]{\href{https://orcid.org/0000-0002-5889-4022}{\'{A}lvaro M. Alhambra}\thanks{alvaro.alhambra@csic.es}}

\affil[1]{Instituto de Física Téorica UAM/CSIC, C. Nicolás Cabrera 13-15, Cantoblanco, 28049 Madrid, Spain}
\affil[2]{Department of Applied Mathematics and Theoretical Physics, University of Cambridge,
Wilberforce Road, Cambridge CB3 0WA, United Kingdom}
\affil[3]{Fachbereich Mathematik, Universität Tübingen, 72076 Tübingen, Germany}
\date{\today}
\maketitle

\begin{abstract}
The classification of mixed-state phases requires criteria beyond two-point correlation functions, such as the decay of the mutual information (MI) and the conditional mutual information (CMI), with the latter encapsulated in the notion of \textit{Markov length}. Here we show how such static properties of the fixed point of a Lindbladian follow from natural dynamical features of its generators: rapid mixing and frustration-freeness. We focus on systems with long-range interactions, and prove \emph{i)} that local Lindbladians satisfying (global) rapid mixing and frustration-freeness have fixed-points whose CMI decays with the shielding distance, and \emph{ii)} that (local) rapid mixing together with primitivity and regularity implies global decay of MI. For long-range interactions decaying with a power law with rate $\alpha$, both quantities decay polynomially rather than exponentially, in contrast to the finite- and short-range regimes where exponential decay (a finite Markov length) is expected within a phase. We further show that Gibbs states of long-range, non-commuting Hamiltonians satisfy a local Markov property at any temperatures, extending the recent results (Chen--Rouzé, 2025) for short-range systems to the long-range regime relevant to a variety of experimental platforms. As a numerical example, we study the long-range Ising model both with and without a transverse field. We find regimes in which the polynomial decay of the CMI holds, in accordance with the bounds proven. 
\end{abstract}

\tableofcontents

\section{Introduction}

Quantum many-body systems display a wealth of different phases, often going beyond the well-established paradigms of phase characterization such as the classic one based on symmetry breaking. Thus, the classification of quantum phases is still an ongoing effort, and constitutes one of the central areas of study at the intersection of condensed matter, quantum information and mathematical physics. For pure ground states of closed systems, a well-established definition of what it means to be within a phase is provided by adiabatic continuity of two points in the phase through Hamiltonians with a finite positive spectral gap \cite{hastings_spectral_2006, NacSim06}. 
As such, the gap is crucial in phase characterization, and its presence has direct consequences for the ground state: a spectral gap implies exponential decay of two-point correlations \cite{hastings_spectral_2006, NacSim06}, and also an area law of entanglement in many important scenarios \cite{Cramer_2006,Has07,arad2013arealawsubexponentialalgorithm,Anshu_2022}. 

Many interesting quantum systems, both for physics and in the context of quantum computation, are coupled to their environment, and as such divert from any theoretical framework tailored to pure states.
Extending the classification of phases to those dissipative settings and mixed states has motivated many recent efforts, spanning the characterization of mixed-state topological order and decoherence-induced transitions~\cite{Bao2023,Lee_2023,Fan+24,Chen_2024,sang2026mixedstatequantumphasesrenormalization,wang2025fractionalquantumhallstates}, the study of paradigmatic phase transitions such as strong-to-weak spontaneous symmetry breaking \cite{Bu_a_2012,Lieu_2020,de_Groot_2022,Sala_2024,Ma_2025,Lessa_2025,wang2026strongtoweakspontaneoussymmetrybreaking}, and their relation to the construction of topological quantum memories \cite{Dennis2002,Sang_Renormalization_2024}. It also includes recent experimental efforts \cite{wang2026observationstrongtoweakspontaneoussymmetry,zhang2025probingmixedstatephasesquantum}. 
A characterization of what it means to be ``within a phase'' has also been proposed in terms of fast dissipative evolutions~\cite{coser_classification_2019,rakovszky_defining_2024}, and stability under local perturbations~\cite{rakovszky_defining_2024, sang_mixed-state_2025, li_unified_2026}.  

It is by now well established that the behaviour of correlation functions is insufficient to provide a characterization of mixed-state phases, since it is unable to detect, as a prime example, the aforementioned strong-to-weak spontaneous symmetry breaking \cite{Lee_2023,Sala_2024,Ma_SPT_2025,Lessa_2025,shu2026universaldynamicalscalingstrongtoweak}. Thus, one needs to consider further criteria, and one of the most prominent are information-theoretic quantities such as the mutual information (MI) $I(A:C)_\rho$ and the conditional mutual information (CMI) $I(A:C|B)_\rho$, which are the main focus of the present work\footnote{Other objects not studied here are the ``fidelity correlators'' \cite{Lessa_2025,Weinstein_2025,Liu2025,Zhang_2026}, which can explicitly detect the strong-to-weak spontaneous symmetry breaking.}. 

The MI of two disjointly supported sets $A$ and $C$ on a fixed-point $\rho$ is defined as
\begin{equation}
    I(A:C)_\rho = S(\rho_A)+S(\rho_C)-S(\rho_{AC}),
\end{equation}
where $S(\sigma):=-\tr(\sigma \log(\sigma))$ is the von Neumann entropy and $\rho_A:=\tr_{A^C}(\rho)$ is the reduced state on the system $A$. It captures the amount of correlations between two separate subsystems \cite{Groisman_2005}, and has a wealth of interesting features in many-body mixed states \cite{Melko_2010,Scalet_2021,Bluhm_2022,yi2025universaldecayconditionalmutual}, including an area law \cite{wolf_area_2008}. 
In data hiding, there are states with arbitrarily small operator correlation and large MI \cite{Hastings2007b, Hayden2004}.

On the other hand, given a tripartition of the total space $A\sqcup B\sqcup C=\La$, the quantum CMI between $A$ and $C$ conditioned by $B$ on the state $\rho$ is defined as
\begin{equation}
    I(A:C|B)_\rho:=S(\rho_{AB})+S(\rho_{BC})-S(\rho_B) - S(\rho_{ABC}).
\end{equation}

The CMI measures the conditional independence of a state between two systems conditioned to a third one, and states with small CMI can be locally reconstructed from their marginals; in other words, the CMI is related to how well the global state can be locally reconstructed from its marginals \cite{Sutter2018,FawRen15}, and its fast decay with distance has been identified as an essential criterion for phase stability making it a crucial quantity in mixed-state phase characterization~\cite{Sang_2025, MKS25, yi2025universaldecayconditionalmutual, chen_quantum_2025, li_unified_2026, chen2026localreversibilitydivergentmarkov}.
The main reason for this is that a decaying CMI allows the recovery of the state after local noise is applied to $A$: if information in $A$ is lost, one can faithfully reconstruct the global state by applying a local recovery map on $AB$.

 When studying the behaviour of information-theoretic quantities such as MI or CMI in the many-body setting, two aspects matter for the classification of mixed-state phases \cite{yi2025universaldecayconditionalmutual,ma2025circuitbasedcharacterizationfinitetemperaturequantum}: (i) how it decays with the shielding distance $d(A,C)$, and (ii) how it scales with the sizes $|A|$ and $|C|$. 
Concerning the dependence on $\abs{A}$ and $\abs{C}$, one can distinguish three levels of CMI decay or \emph{Markov property}:
\begin{itemize}
    \item \textit{Pairwise Markov property}: $I(A:C|B)_\rho \lesssim \exp(\abs{AC})f(d(A,C))$,
    \item \textit{Local Markov property}: $I(A:C|B)_\rho \lesssim p(\abs{A})\exp(\abs{C})f(d(A,C))$,
    \item \textit{Global Markov property}: $I(A:C|B)_\rho \lesssim p(\abs{A},\abs{C})f(d(A,C))$,
\end{itemize}
where $f\colon\mathbb{N}\to\mathbb{R}_{\geq 0}$ is a decaying function and $p$ is a polynomial function. Depending on the Markov level, one can consider a tri-partition with bounded or unbounded $A$ and $C$. 

This distinction has recently been explored in the context of Gibbs states, where a pairwise Markov property has been shown for rather general conditions in  \cite{Kuw24,chen_quantum_2025,bergamaschi2025structuraltheoryquantummetastability}, while at restricted settings such as high temperatures \cite{kato_clustering_2025,bakshi_dobrushin_2025}, 1D \cite{Kato2019,Kuw24}, under correlation decay \cite{chen2026notestrongquantummarkov} and under the existence of an exponentially-decaying effective Hamiltonian \cite{scalet2025classicalestimationfreeenergy} the stronger global property holds. A similar distinction between pairwise, local and global can also be established for the MI, in terms of a decaying function $f$ and the prefactors in terms of $\vert A\vert, \vert C \vert$. For Gibbs states of finite-range Hamiltonians, the decay of the MI is shown to be global in 1D \cite{Bluhm_2022} and pairwise in any D, under the existence of an exponentially-decaying effective Hamiltonian \cite{BCP2024}. Although the literature in decay of MI is scarce, it is expected that in many situations, the MI of Gibbs states will decay in the same way as the correlation functions. The latter is known to be global in 1D at any temperature for finite-range Hamiltonians \cite{Ara69}, above a certain temperature for short-range interactions \cite{Alhambra_2026}, and in any D at high-enough temperature for finite-range interactions \cite{Kliesch2014} and pairwise for short-range interactions \cite{Ueltschi2004}. In all these results, the decay is exponential. The 1D case is extended to any temperature for long-range Hamiltonians with subexponential decay \cite{kimura_clustering_2025}. Other systems that present pairwise correlation decay are for instance weakly interacting quantum systems \cite{adhikari2026uniformintemperaturelocalityestimatesweakly}.

To understand why the dependence with respect to $\abs{A}$ and $\abs{C}$ is important note that, on one hand, an exponential growth with $\abs{A}$ and $\abs{C}$ implies that $A$ and $C$ must be bounded (of order $\mathcal{O}(1)$) and not grow with the system size $\abs{\La}$ in order to have a non-trivial bound. On the other hand, a polynomial dependence allows $A$ and $C$ to grow with the system size $\abs{\La}$. One should say specifically how the distance between $A$ and $C$ scales then, and how the decay with the shielding distance typically is, so that the bound is overall decaying.

Since these are static properties of individual states, it is important to understand under what dynamical setting we expect them to arise. 
Mixed states are generated through a dissipative process with an external, possibly uncontrolled, environment, so their open system dynamics is described by \emph{Lindbladians} generators of completely positive trace-preserving semigroups~\cite{breuer_theory_2002, manzano_short_2020}, with specific mixed states as fixed points. Understanding the interplay between dynamical properties of the Lindbladian and the static properties of their fixed points is thus essential for the characterization of mixed-state phases,
in a similar way as the gap (a property of the Hamiltonian) implies non-trivial features of its (static) ground states.

In this work we show how static properties of the fixed point (decay of CMI and MI) follow from natural dynamical features of their generating Lindbladians. Specifically, we consider the consequences of \textit{rapid mixing} (Definitions~\ref{def:globalRM}, \ref{def:localRM}) and \textit{frustration-freeness} (Definition~\ref{def:frustrationFreeness}), with a special focus in the setting of long-range interactions. We show that the MI decays in a global sense under the assumptions of locality and rapid mixing, and that the CMI also decays similarly under the additional assumption of frustration-freeness. While we show that our results apply to fixed-points of local Lindbladians, they do not directly apply to Gibbs states of non-commuting Hamiltonians since they are fixed points of a \emph{quasi-local} (KMS) Lindbladian. However, we are able to prove a local Markov property on these Gibbs states in the long range regime extending the proof of \cite{chen_quantum_2025}. Such long-range interactions are directly realised in current (noisy) experimental platforms, including Rydberg atom arrays \cite{browaeys_many-body_2020}, trapped-ion quantum simulators \cite{foss-feig_progress_2025}, and magnetic dipolar quantum gases \cite{chomaz_dipolar_2023}.

For short-range interacting systems, it is expected that both the MI and the CMI within a phase decay exponentially with the shielding distance. For the CMI, this is formalised through the notion of \textit{Markov length} $\xi$ ~\cite{Sang_2025, MKS25}, which measures the decay rate of the exponential function $f(d) \propto e^{- {d}/{\xi}}$ and plays the role of a correlation length for multipartite correlations; a finite Markov length has been established as a crucial stability condition for mixed-state phases~\cite{Sang_2025,yi2025universaldecayconditionalmutual}. Our results illustrate how the picture changes when interactions decay algebraically in long-range systems. For these systems, rapid thermalization implies that the fixed point exhibits a \emph{polynomial} decay of CMI and MI, illustrating how a classification of mixed-state phases must accommodate this qualitatively different regime. Put together, we explore the conditions under which dissipative systems do not generate any long-range order in the sense that is relevant for the classification of mixed-state phases \cite{Sang_2025,yi2025universaldecayconditionalmutual,li_unified_2026}, that is, with a fast decay of both the MI and the CMI.

These results provide static properties of fixed points of Lindbladians that are implied by dynamical properties of the Lindbladians. The converse direction has also been studied in the literature \cite{Bardet_2023,Bardet_2024,kochanowski_rapid_2025,Capel2024Gibbs,kastoryano_gibbs_2016,art:2localPaper,BardetCapelLuciaPerezGarciaRouze-HeatBath1DMLSI-2019,bergamaschi2026fastmixingquantumspin}, especially for Gibbs states, by using the existence of a positive modified logarithmic Sobolev inequality/spectral gap in the Lindbladian as an intermediate step to prove rapid/fast mixing. In particular, \cite{Capel2024Gibbs} introduced a condition termed \textit{matrix valued quantum conditional mutual information (MCMI)} whose exponential decay implies, jointly with the assumption of a positive local gap in the generator, rapid mixing of the Lindbladian. The MCMI is an operator version of the CMI, which in particular implies decay of CMI and MI for specific subsystems. This suggests that a possible converse for the main results of the current manuscript might be feasible. 

Finally, we simulate numerically the CMI of the Gibbs state of the classical long-range Ising model in a spin chain. We find it has a polynomial decay of the CMI, providing a concrete example of the results from Section \ref{sec:thermal}. We also see that adding a transverse field and increasing the decay rate of the long-range interactions help the CMI to decay exponentially, since it makes the system behave akin to a short-range system (for which an exponential decay of the CMI is expected \cite{chen_quantum_2025}). The details of the numerical simulations are in Section \ref{sec:numerics}.

\subsection{Summary of Main results}\label{sec:mainResults}

First of all, consider a general dissipative process generated by a local Lindbladian \[\Li=\summ{Z\subset \La}L_Z,\quad \normm{L_Z}{cb}\lesssim F(\text{diam}(Z)),\]for an $F$-function $F$ (see Definition \ref{def:Ffunction}), and such that $\Li$ satisfies global rapid mixing (see Definition \ref{def:globalRM}). The $F$-function $F$ is a monotonically decreasing function used to control the decay of the local terms composing the Lindbladian. Consider a tripartition $A\sqcup B \sqcup C=\La $ with $R:=d(A,C) >0$ as in Figure \ref{fig:geometryCMI}. Our first main result states the following: 

\begin{figure}
    \centering
    \includegraphics[width=0.5\linewidth]{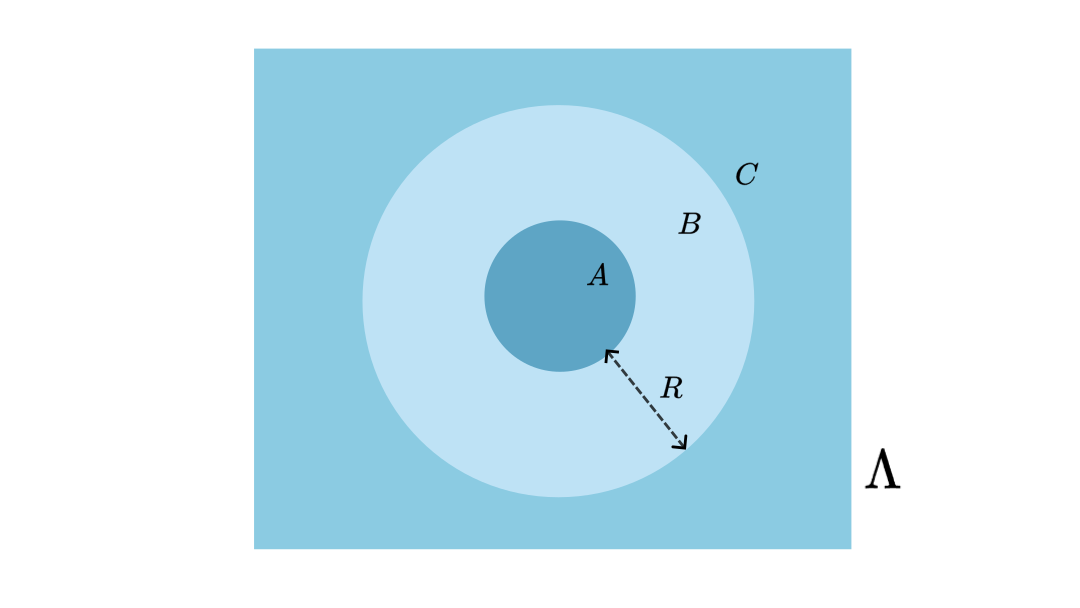}
    \caption{Geometry of the tripartition $A\sqcup B\sqcup C=\La$.}
    \label{fig:geometryCMI}
\end{figure}

\medskip
\begin{thm}[Informal Theorem \ref{thm:decayCMIgeneralLindbladian}]\label{thm:informaldecayCMIgeneralLindbladian}
    The fixed-point of a rapid mixing and frustration-free Lindbladian has a global Markov property
    \begin{equation}
            I(A : C \,|\, B)_\rho \le \, p(\abs{A},\abs{C})\, R^{\alpha_1} (F(R))^{\alpha_2} ,
    \end{equation}
    where $A\sqcup B\sqcup C=\La $ are as in Figure \ref{fig:geometryCMI}, $p$ is a polynomial,  and $\alpha_1,\alpha_2$ are positive constants.
\end{thm}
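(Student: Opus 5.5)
The plan is to build a recovery map on $BC$ and turn its accuracy into a bound on the CMI via a continuity argument. The recovery map is the truncated dynamics $\recovery$, generated by the terms of $\Li$ supported away from $A$.

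\textbf{Step 1: the recovery map.} Let $\Li_{A^c}:=\sum_{Z\cap A=\emptyset} L_Z$. Since each $L_Z$ preserves trace, so does $\Li_{A^c}$, and therefore $e^{t\Li_{A^c}}$ acts as a channel on $BC$. By frustration-freeness, each $L_Z$ annihilates $\rho$, so $\Li_{A^c}(\rho)=0$ and $e^{t\Li_{A^c}}(\rho)=\rho$.

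\textbf{Step 2: replacing $A$ by a fixed state.} Fix any state $\omega_A$ and set $\sigma:=\omega_A\otimes\rho_{BC}$. Write
\[
\rho - e^{t\Li_{A^c}}(\sigma)=e^{t\Li_{A^c}}(\rho-\sigma).
\]
We then compare $e^{t\Li_{A^c}}$ with $e^{t\Li}$ on the region near $C$. Global rapid mixing gives $\|e^{t\Li}(\rho-\sigma)\|_1\lesssim \poly(|\La|)e^{-\gamma t}$. That bound is useless here because of its $|\La|$ dependence, so Step 3 localizes it.

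\textbf{Step 3: localizing with Lieb--Robinson bounds.} The terms in $\Li-\Li_{A^c}$ touch $A$. By a Lieb--Robinson bound for $F$-function interactions, their influence on observables supported on $C$ up to time $t$ is at most
\[
\sim |A|\,|C|\,e^{vt}F(R),
\]
or a polynomial-in-$t$ variant for power laws. A Duhamel expansion then gives
\[
\|\tr_{AB}[e^{t\Li_{A^c}}(\sigma)]-\tr_{AB}[e^{t\Li}(\sigma)]\|_1 \lesssim |A||C| e^{vt}F(R).
\]
The fixed point of $e^{t\Li}$ has marginal $\rho_C$. Localizing the rapid-mixing estimate to observables on $C$, the prefactor is polynomial in $|C|$ (or in $|AC|$) rather than in $|\La|$. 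We then need the stronger statement on $BC$ jointly, namely
\[
\|\rho-(\text{recovered state})\|_1\le\epsilon.
\]
To obtain it, apply a Petz/rotated-Petz-type recovery argument: since $\rho_{AB}$ is approximately reproduced, the Fawzi--Renner direction is not usable, and instead use the identity $I(A:C|B)=$ ``information lost.'' Concretely, bound
\[
I(A:C|B)\le I(A:BC)-I(A:B).
\]
Then use that $\rho_{BC}$ is $\epsilon$-close to a state generated from $\rho_{B'}\otimes\omega$, where $B'$ is a boundary strip of width $R/2$ of $B$. Data processing together with Alicki--Fannes--Winter continuity then gives
\[
I(A:C|B)\lesssim \epsilon\log d_{AC}+h(\epsilon).
\]

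\textbf{Step 4: optimizing over time.} Choosing $t\sim \log(1/F(R))/(v+\gamma)$ balances $e^{-\gamma t}$ against $e^{vt}F(R)$, which produces the exponent $F(R)^{\alpha_2}$. The polynomial Lieb--Robinson corrections for power laws supply the factor $R^{\alpha_1}$, and $|A|,|C|$ enter polynomially through the continuity bound. The main obstacle is Step 3: making the rapid-mixing estimate local, i.e. showing that the effect of resetting $A$ is forgotten in trace norm on $BC$ with only polynomial dependence on $|A|,|C|$ and not on $|\La|$. I would first try a two-region comparison, running $\Li$ on a ball around $A$ versus the full system, and use global rapid mixing only on that ball.
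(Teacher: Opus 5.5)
There is a genuine gap, and it already appears in Steps 1--2. By the same frustration-freeness argument you give, every term of $\Li_{A^c}=\sum_{Z\cap A=\emptyset}L_Z$ also annihilates $\rho_{BC}=\tr_A\rho$, because the partial trace over $A$ commutes with $L_Z$ when $Z\cap A=\emptyset$. Hence $e^{t\Li_{A^c}}(\omega_A\otimes\rho_{BC})=\omega_A\otimes\rho_{BC}$ for every $t$, so your ``recovery'' does nothing. The error $\|\rho-\omega_A\otimes\rho_{BC}\|_1$ does not decay in $t$ and is generically of order one. A channel that never touches $A$ cannot restore $\rho_A$ or the $A$--$B$ correlations; the map must act on the region you erased. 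The paper takes $\mathcal{R}^t_{B\to AB}=e^{t\Li_{AB}^*}$ with $\Li_{AB}=\sum_{Z\subset AB}L_Z$, applied to $\rho_{-A}=\frac{\Id_A}{d_A}\otimes\rho_{BC}$, or equivalently its mirror image with the roles of $A$ and $C$ exchanged. For the same reason your Step 3 comparison of $e^{t\Li_{A^c}}$ with $e^{t\Li}$ cannot be small: the two generators differ exactly by the terms touching $A$.

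The second missing ingredient is the one you flag as the obstacle: trace-norm closeness of the \emph{whole} state on $ABC$. A Heisenberg-picture Lieb--Robinson bound for observables on $C$ only controls the $C$-marginal, and your Petz/continuity sketch does not supply the joint statement. The paper's key tool is a Schr\"odinger-picture Lieb--Robinson bound that exploits frustration-freeness (Lemma \ref{lem:LiebRobinsonTypeBoundGeneral}). Since $L_Z^*(\rho_{-A})=\frac{\Id_A}{d_A}\otimes\tr_A(L_Z^*\rho)=0$ for $Z\cap A=\emptyset$, a recursion gives $\|K(e^{t\Li^*}\rho_{-A})\|_1\lesssim |A||X|(e^{vt}-1)F(d(X,A))$ for $K\in\mathcal{CB}_0(X)$. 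This is inserted into the Duhamel identity, whose error is $\int_0^t\sum_{Z\cap C\neq\emptyset}\|L_Z^*(e^{s\Li^*}\rho_{-A})\|_1\,ds$. Splitting into long terms reaching from $C$ to within $R/2$ of $A$ (bounded via $F(R/2)$) and terms at distance at least $R/2$ from $A$ (bounded by the lemma) yields Lemma \ref{lem:differenceTimeEvolutionsGeneral}, a bound on $\|(e^{t\Li^*}-e^{t\Li_{AB}^*})(\rho_{-A})\|_1$. No localization of rapid mixing is needed. Since $\La=A\sqcup B\sqcup C$, global $\mathrm{RM}(n,\lambda)$ already gives $\|e^{t\Li^*}(\rho_{-A})-\rho\|_1\le c(|A|+|B|+|C|)^n e^{-\lambda t}$, and the paper absorbs $|B|^n$ as $R^{nD}$. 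That, not a power-law Lieb--Robinson correction, is where the factor $R^{\alpha_1}$ comes from. With the joint bound in hand, Fawzi--Renner (Proposition \ref{prop:FR}) applies directly, contrary to your remark; your Alicki--Fannes--Winter continuity route would also work at that stage.
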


In particular, for long-range Lindbladians (i.e. with $F_\alpha(x)=1/(1+x)^\alpha$ for $\alpha >D$), we obtain the following consequence.
\medskip
\begin{cor}[Informal Corollary \ref{cor:decayCMILongRange}]\label{thm:informaldecayCMIgeneralLindbladianLongRange}
    Fixed-points of long-range Lindbladians also satisfy the global Markov property for a sufficiently high interaction decay rate $\alpha >\alpha_0$ 
    \begin{equation}
        I(A:C|B)_\rho\leq  \, p(\abs{A},\abs{C})\, F_{\eta}(R) \, ,
    \end{equation}
    with a decay rate $\eta $ and a polynomial $p$.
\end{cor}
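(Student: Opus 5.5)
The corollary should follow from the general Theorem~\ref{thm:informaldecayCMIgeneralLindbladian} by specializing the $F$-function. The plan has two stages: first recall how the general bound arises, so that the dependence on $\alpha$ is explicit, and then plug in $F_\alpha(x)=(1+x)^{-\alpha}$.

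\textbf{Step 1: the general bound.} I would argue as follows.

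\begin{itemize}
\item \emph{Cutting the generator.} Given the tripartition, enlarge $A$ to a neighbourhood $A_\ell$ of radius $\ell<R$. Decompose $\Li=\Li_{A_\ell}+\Li_{\text{rest}}+\Li_{\partial}$, where $\Li_\partial$ collects the terms $L_Z$ whose support $Z$ straddles the boundary of $A_\ell$.
\item \emph{Using frustration-freeness.} Every local term annihilates $\rho$. Hence the fixed point is also (approximately) stationary for the truncated dynamics.
\item \emph{Using global rapid mixing.} Run $e^{t\Li}$ for a time $t\sim \poly\log$ or $t\sim\poly(\abs{A},\abs{C})$. For an arbitrary input state, this produces $\rho$ up to an error $\poly(\abs{\La})e^{-\gamma t}$.
\item \emph{Lieb--Robinson bound.} A long-range Lieb--Robinson bound for $F$-function interactions approximates $e^{t\Li}$ by a product of channels acting on $AB$ and on $BC$. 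The error is of order $t^{\beta}\,\abs{A}\,F(R)$, up to polylogarithmic corrections.
\item \emph{Recovery map and continuity.} This product structure yields a recovery map $\RR_{B\to AB}$ with $\RR_{B\to AB}(\rho_{BC})\approx\rho$. A Fawzi--Renner type argument, or equivalently continuity of CMI (Alicki--Fannes--Winter), then converts trace-distance closeness into a CMI bound. This step introduces the factors $\log\dim$, which are polynomial in $\abs{A},\abs{C}$.
\item \emph{Optimizing $t$.} Choosing $t$ to balance the mixing error against the Lieb--Robinson error gives the $R^{\alpha_1}F(R)^{\alpha_2}$ form.
\end{itemize}

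\textbf{Step 2: specializing to power laws.} For the long-range case, the main obstacle is that Lieb--Robinson bounds for power-law interactions are only polynomial. Their light cone is linear (or polynomial) in $t$ only when $\alpha$ exceeds a threshold, roughly $\alpha>2D+1$ for the Tran et al.\ type bounds. Below that threshold the bound degenerates or the velocity diverges. This is where the condition $\alpha>\alpha_0$ enters: I would take $\alpha_0$ to be the larger of the Lieb--Robinson threshold and the value needed so that $\alpha_2\alpha-\alpha_1>0$. With that choice,
\[
R^{\alpha_1}F_\alpha(R)^{\alpha_2}\lesssim (1+R)^{-(\alpha_2\alpha-\alpha_1)},
\]
and we can set $\eta=\alpha_2\alpha-\alpha_1$ (possibly reduced by $D$ to account for boundary sums). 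The remaining routine checks are that $F_\alpha$ is a valid $F$-function with finite convolution constant for $\alpha>D$, and that the polynomial $p$ absorbs all $\abs{A}$, $\abs{C}$ dependence coming from the boundary sums $\sum_{x\in A}\sum_{y\notin A_\ell}F(d(x,y))$.
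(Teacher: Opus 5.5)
Your Step~2 is exactly the paper's argument. Corollary~\ref{cor:decayCMILongRange} is Theorem~\ref{thm:decayCMIgeneralLindbladian} with $F=F_\alpha$, where the constants are $\alpha_1=nD/2$ and $\alpha_2=\lambda/(2(\lambda+v))$ with $v=\normm{\Li}{F}C_F$. This gives $R^{nD/2}(1+R)^{-\lambda\alpha/(2(\lambda+v))}\le (1+R)^{-\eta}$ with $\eta=\frac{\lambda\alpha}{2(\lambda+v)}-\frac{nD}{2}$. The threshold $\alpha_0=nD(\lambda+v)/\lambda$ is nothing but the condition $\eta>0$. As in the paper, this condition is implicit, since $v$ depends on $\alpha$ through $\normm{\Li}{F_\alpha}C_{F_\alpha}$; and since $n\ge 1$ it forces $\alpha>D$, so $F_\alpha$ is automatically an $F$-function. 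Because the theorem is available, this substitution is the whole proof, and your Step~1 should simply be a citation.

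You should make that replacement, because your reconstruction in Step~1 and your account of the threshold are not correct, and would not go through if you relied on them.

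\textbf{No light-cone threshold.} No threshold such as $\alpha>2D+1$ is involved. The dissipative bound the paper uses (Theorem~\ref{thm:LR_open}) holds for every $F$-function, hence for every $\alpha>D$, with finite velocity $v$. A linear light cone is unnecessary because the optimal time is only $t\sim\log R$. Taking the maximum with $2D+1$ is harmless for an existential claim, but it misidentifies where $\alpha_0$ comes from.

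\textbf{Form of the localization error.} The error is not $t^\beta F(R)$ but $p(\abs{A},\abs{C})(e^{vt}-1+t)R^DF(R/2)$ (Lemma~\ref{lem:differenceTimeEvolutionsGeneral}). It is precisely balancing $e^{-\lambda t}$ against $e^{vt}F(R)$ that yields $\alpha_2=\lambda/(2(\lambda+v))$. With a polynomial-in-$t$ error, your own optimization would give $\alpha_2=1/2$ up to logarithms. That is a stronger bound than the theorem, and Theorem~\ref{thm:LR_open} does not support it.

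\textbf{What is actually localized.} The argument never approximates $e^{t\Li}$ itself by channels on $AB$ and $BC$. It compares $e^{t\Li^*}$ with $e^{t\Li^*_{AB}}$ only on the single input $\sigma_{-A}$. This uses frustration-freeness a second time, beyond stationarity of $\sigma$ under $\Li_{AB}$. For every $Z$ with $Z\cap A=\emptyset$ one has $L_Z^*(\sigma_{-A})=\frac{\Id_A}{d_A}\otimes\tr_A(L_Z^*\sigma)=0$. This is what starts the recursion in Lemma~\ref{lem:LiebRobinsonTypeBoundGeneral} from $C(Z,0)=0$ away from $A$.

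\textbf{Origin of $\alpha_1$.} The exponent $\alpha_1=nD/2$ comes from the $p(\abs{\La})$ prefactor of global rapid mixing with $\abs{B}\sim R^D$, halved by the square root in Proposition~\ref{prop:FR}. The $R^D$ from the boundary sums is absorbed into it, so $\eta$ is not reduced by a further $D$.
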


Secondly, if a Lindbladian satisfies local rapid mixing and primitivity (See Definitions  \ref{def:localRM} and \ref{def:primitivity}), then its fixed point shows a decay of MI. 
\medskip
\begin{thm}[Informal Theorem \ref{thm:MutualInfoDecay}]\label{thm:informalMutualInfoDecay}
    The fixed-point of a local, primitive and rapid mixing Lindbladian\footnote{No frustration-freeness needed.} satisfies a global decay of mutual information
        \begin{equation}
        I(A:C)_\rho\leq k(\abs{A},\abs{C}) \cdot F_{\alpha '}(d(A,C)),
    \end{equation}
     where $k$ is a polynomial and $\alpha'$ is the decay rate. 
\end{thm}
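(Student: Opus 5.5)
Write $\rho$ for the unique fixed point, which exists by primitivity. The plan is to compare $\rho_{AC}$ with $\rho_A\otimes\rho_C$ by means of a \emph{decoupled} Lindbladian, and then convert the resulting trace-norm bound into a bound on mutual information by continuity of entropy.

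\textbf{Step 1: the decoupled generator.} Fix a distance $R=d(A,C)$. Choose regions $A\subset A_r$ and $C\subset C_r$, where $A_r,C_r$ are the $r$-neighbourhoods of $A$ and $C$ with $r\approx R/3$, so that $A_r$ and $C_r$ are disjoint and separated by about $R/3$. Let $\Li_{A_r}$ and $\Li_{C_r}$ be the restrictions of $\Li$ to terms supported inside $A_r$ and $C_r$ respectively. Let $\Li_{\rm rest}$ collect all remaining terms that act on neither $A_r$ nor $C_r$ alone.

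Local rapid mixing (Definition~\ref{def:localRM}) gives unique fixed points $\sigma_{A_r}$ and $\sigma_{C_r}$. It also gives convergence in time $t\sim \poly\log$ of the subsystem size, with a prefactor polynomial in $|A_r|,|C_r|$.

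Consider $\Li' = \Li_{A_r}+\Li_{C_r}+\Li_{\rm far}$, where $\Li_{\rm far}$ contains only terms supported on $(A_rC_r)^c$. The marginal on $A_rC_r$ of $e^{t\Li'}(\omega)$ then factorises in the limit as $\sigma_{A_r}\otimes\sigma_{C_r}$.

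\textbf{Step 2: compare the two evolutions via Lieb--Robinson bounds.} Apply Duhamel's formula together with the Lieb--Robinson bound for $F$-function Lindbladians (assumed earlier in the paper). This controls $\|\tr_{(AC)^c}(e^{t\Li}-e^{t\Li'})\omega\|_1$ by the following sum:
\[
t\sum_{Z\text{ crossing } \partial A_r\text{ or }\partial C_r}\|L_Z\|_{cb}\,\min\{1,\,e^{vt}F(d(Z,A\cup C))\}.
\]
This sum is bounded by $\poly(|A|,|C|)\, t\, e^{vt}F(r)\,r^{D}$. For power-law $F$ one uses the polynomial Lieb--Robinson version instead, giving $\poly(t)\,F_{\alpha-D-1}(r)$.

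Take $\omega=\rho$, which is invariant under $e^{t\Li}$. By global primitivity and local rapid mixing, $e^{t\Li_{A_r}}$ drives the $A$-marginal to $\sigma_{A_r}$ up to error $\poly(|A_r|)e^{-\gamma t}$. Combining these gives
\[
\|\rho_{AC}-\tau_A\otimes\tau_C\|_1 \le \poly(|A|,|C|,r)\big(e^{-\gamma t}+t^{k}F'(r)\big),
\]
where $\tau_A=\tr_{A_r\setminus A}\sigma_{A_r}$ and $\tau_C$ is defined analogously. The same argument applied to single marginals gives $\|\rho_A-\tau_A\|_1$ and $\|\rho_C-\tau_C\|_1$ with the same error. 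Therefore $\|\rho_{AC}-\rho_A\otimes\rho_C\|_1\le\epsilon$ after choosing $t\sim\log r$.

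\textbf{Step 3: from trace distance to mutual information.} Use $I(A:C)=D(\rho_{AC}\|\rho_A\otimes\rho_C)$ together with the Alicki--Fannes--Winter continuity bound. This gives $I(A:C)\le \epsilon\log(d_Ad_C)+g(\epsilon)$, and $\log(d_Ad_C)=O(|A|+|C|)$ is polynomial.

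The $\epsilon\log(1/\epsilon)$ term only costs a slight reduction of the power. Hence $I(A:C)_\rho\le k(|A|,|C|)F_{\alpha'}(d(A,C))$, with $\alpha'$ smaller than $\alpha$ by the $D$-dependent losses from the boundary sums.

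\textbf{Main obstacle.} The hard part is Step 2. Rapid mixing must be uniform in the region: this is exactly what regularity/local rapid mixing provides, so that the mixing time of $A_r$ is only logarithmic. This logarithmic mixing time is what lets the Lieb--Robinson error $t^kF(r)$ remain polynomially small. It also prevents the polynomial prefactors from absorbing the decay when $|A|,|C|$ are large.
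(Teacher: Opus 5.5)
\textbf{Gap in Step 2.} Your Step 2 relies on a hypothesis you do not have. You assume that $\Li_{A_r}$ and $\Li_{C_r}$ have unique fixed points, and that $e^{t\Li_{A_r}}$ drives the $A$-marginal to $\tau_A$ with error $\poly(\abs{A_r})e^{-\gamma t}$. In other words, you assume the \emph{decoupled} generator $\Li'$ mixes rapidly. You then start from $\rho$ and let $e^{t\Li'}$ erase the $A_r$--$C_r$ correlations.

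However, local rapid mixing (Definition~\ref{def:localRM}) constrains only the semigroup $e^{t\Li}$ of the \emph{full} generator, acting on observables supported in $X$. It says nothing about semigroups generated by truncations $\Li_X$. Regularity gives primitivity of sub-Lindbladians, but no rate. Nothing in the hypotheses controls the mixing time of a truncated generator, whose relaxation near the cut may rely precisely on the removed terms.

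So the bound $\normm{\rho_{AC}-\tau_A\otimes\tau_C}{1}\le \poly\cdot(e^{-\gamma t}+\dots)$ is not justified as written. Your ``Main obstacle'' paragraph misreads both definitions.

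A secondary point: no polynomial-in-$t$ Lieb--Robinson bound is available here for long-range Lindbladians. Only the $e^{vt}$ bound of Theorem~\ref{thm:LR_open} applies, which is why the exponent degrades by a factor like $\lambda/(\lambda+v)$ once you take $t\sim\log R$.

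\textbf{The missing idea} is to run the comparison in the opposite direction, as the paper does in Theorem~\ref{thm:clusteringBoundary}. Take $\pi$ to be \emph{a} fixed point of the decoupled generator that is a product across the cut. In your notation, $\pi=\sigma_{A_r}\otimes\sigma_{C_r}\otimes\pi_{\mathrm{far}}$ with any fixed points of the blocks; uniqueness is not needed. For $O\in\A_{AC}$ write
\[
(\rho-\pi)(O)=(\rho-\pi)\bigl(e^{t\Li}O\bigr)+\pi\bigl((e^{t\Li}-e^{t\Li'})O\bigr).
\]
The first term is at most $p(\abs{A}+\abs{C})\,e^{-\lambda t}\norm{O}$, by local rapid mixing of the \emph{original} $\Li$ on $X=A\cup C$; the initial state $\pi$ is arbitrary there. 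The second term is handled by Duhamel plus Lieb--Robinson, with the long terms touching $A\cup C$ bounded trivially by their norms. The decoupled dynamics never has to mix.

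This yields $\normm{\rho_{AC}-\pi_A\otimes\pi_C}{1}\le\epsilon$. Hence $\normm{\rho_A-\pi_A}{1},\normm{\rho_C-\pi_C}{1}\le\epsilon$, and $\normm{\rho_{AC}-\rho_A\otimes\rho_C}{1}\le 3\epsilon$. So your intermediate estimate is true, just not for the reason you give.

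\textbf{Step 3.} With this repair, your Step 3 is correct and actually cleaner than the paper's finish. The paper bounds $I(A:C)\le S(\rho_{AC}\|\pi_A\otimes\pi_C)$ and applies Proposition~\ref{prop:HOT}. That route requires $\pi$ to be full rank (hence regularity), and a polynomial bound on $\log\norm{(\pi_A\otimes\pi_C)^{-1}}$ that the paper asserts rather than derives. Alicki--Fannes--Winter needs only the trace-distance bound, at the price of an $\epsilon\log(1/\epsilon)$ term.
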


Finally, for the Gibbs state $\rho_\beta$ of a (non-commuting) long-range Hamiltonian $H$ we have the following\footnote{The local Markov property requires a polynomial dependence on $\abs{C}$ and an exponential dependence on $\abs{A}$; see \cite{chen_quantum_2025} for the short-range analogue.}.
\medskip
\begin{thm}[Informal Theorem \ref{thm:cmiCKGLindbladian}]\label{thm:informalCMICKGLindbladian}
    The Gibbs state of a non-commuting long-range Hamiltonian satisfies the local Markov property for every inverse temperature $\beta $
\begin{equation}
    I(A : C \,|\, B)_\rho \le \log(\dim(C)) \,e^{\mu_1 \abs{A}} \dfrac{(\log (R))^{\mu_2}}{R^{\mu_3}} =\tilde{\mathcal{O}}(\abs{C}e^{\mu_1 \abs{A}}R^{-\mu_3}),
\end{equation}
for some positive constants $\mu_1,\mu_2,\mu_3$.
\end{thm}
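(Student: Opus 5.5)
The plan is to follow the strategy of Chen--Rouzé (\cite{chen_quantum_2025}) for short-range Gibbs states, replacing every exponential locality estimate with its power-law analogue. The starting point is the quasi-local KMS (Chen--Kastoryano--Gilyén type) Lindbladian $\Li_{\beta}$ with fixed point $\rho_\beta$. This generator is a sum of jump terms $\Li^a$, and each term is itself exactly detailed-balanced with respect to $\rho_\beta$. I will only use jumps $a$ supported on the region $A$. Let $\Li_A:=\sum_{a\in A}\Li^a$. Then $\Li_A$ generates a semigroup that fixes $\rho_\beta$. On the finite region $A$, the local Pauli jumps make $\Li_A$ primitive "relative to $A$": $e^{t\Li_A}$ converges to a channel $\mathcal{E}_A$ with $\mathcal{E}_A(\rho_\beta)=\rho_\beta$. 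The key property of $\mathcal{E}_A$ is that it erases the information in $A$. Concretely, its output on $A$ is determined by the input on the complement, so that $\mathcal{E}_A$ acts as a recovery-type map.

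\textbf{Step 1 (local gap).} Bound the spectral gap of $\Li_A$ by $\gtrsim e^{-\mu_1|A|}$, uniformly in $\beta$ up to constants. This follows as in \cite{chen_quantum_2025}, using a crude gap bound for a detailed-balance generator on $|A|$ qudits. It gives a mixing time $t_A\sim e^{\mu_1|A|}\log(\cdot)$, which is the source of the exponential dependence on $|A|$.

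\textbf{Step 2 (quasi-locality of jumps).} Each $\Li^a$ is built from the operator Fourier transform of $e^{iHt}ae^{-iHt}$, filtered in energy. For a power-law $H$ with decay $\alpha>2D$, I will use long-range Lieb--Robinson bounds of Tran--Foss-Feig type, which give light cones growing polynomially with polynomial tails. These bounds show that $\Li^a$ is approximated within error $\sim \mathrm{polylog}(\ell)\,\ell^{-\kappa}$ by a superoperator $\Li^a_\ell$ supported on the $\ell$-neighbourhood of $a$. The polylog factors come from truncating the Gaussian time integrals at times $t\sim\sqrt{\log \ell}$ in units of $\beta$.

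\textbf{Step 3 (local recovery and CMI).} Choose $\ell=R/2$ and $T=t_A$. By Duhamel, $\|e^{T\Li_A}-e^{T\Li_{A,\ell}}\|_{\diamond}\le T|A|\,\mathrm{polylog}(R)R^{-\kappa}$. The truncated evolution $e^{T\Li_{A,\ell}}$ acts only on $AB$, and it approximately maps $\rho_\beta$ to a state that is approximately $\mathcal{E}_A(\rho_\beta)=\rho_\beta$. Its action on $\rho_{AB}$ is then, up to the mixing error $e^{-T\cdot\text{gap}}$, a recovery map $\mathcal{R}_{B\to AB}$ applied after tracing out $A$. This yields
$\|\rho_{ABC}-\mathcal{R}_{B\to AB}(\rho_{BC})\|_1\le \epsilon:=e^{\mu_1|A|}\mathrm{polylog}(R)R^{-\kappa}$.
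To turn recoverability into a CMI bound I will use the direction that works for this kind of construction, as in \cite{chen_quantum_2025}. By data processing, $I(A:C|B)\le I(A:C|B)_{\rho}-I(A:C|B)_{\mathcal{R}(\rho_{BC})}$, since the last term vanishes for the recovered state (Markov chain). Continuity of the CMI (Alicki--Fannes--Winter) then bounds this difference by $\epsilon\log\dim(AC)+h(\epsilon)$. Chen--Rouzé instead obtain a bound with only $\log\dim C$, by a more careful argument that conditions on $C$ and uses continuity only in $C$. I will follow that, which gives the prefactor $\log\dim(C)$, with $h(\epsilon)$ absorbed into powers of $\log R$.

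The main obstacle is Step 2: getting a genuinely polynomial quasi-locality bound for the filtered jumps, with exponents independent of $\beta$ apart from constants. This requires combining the long-range Lieb--Robinson bound with the Gaussian filter in time. If the Tran et al.\ bound is only available for $\alpha>2D+1$, then the condition on $\alpha$ and the exponent $\mu_3$ will be inherited from it.
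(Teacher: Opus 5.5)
\textbf{Gap: Step 1 has no justification.} Step 3 (the choice $T=t_A\sim e^{\mu_1|A|}\log(\cdot)$ and the mixing error $e^{-T\cdot\mathrm{gap}}$) depends entirely on it.

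\begin{itemize}
\item The localized CKG generator $\Li_A$ is not a generator on $|A|$ qudits. Its jumps $\vA_a(\omega)$ and coherent term $B_A$ are Heisenberg-evolved by the full $H$, so it acts on the whole lattice. Observables far from $A$ are only approximately invariant under it, through Lieb--Robinson tails.
\item Its spectral gap is therefore a global quantity. No bound $\mathrm{gap}(\Li_A)\gtrsim e^{-\mu_1|A|}$ uniform in system size is available for non-commuting $H$, and Chen--Rouz\'e neither prove nor use one.
\item This is why their result, and its long-range version here (Theorem \ref{thm:quasi-Local-RecoveryMap-longRange}), is stated for the \emph{time-averaged} map with only a polynomial rate: $\|\mathcal{R}^t_A[\rho_{\beta,-A}]-\rho_\beta\|_1\le r\,e^{\mu|A|}t^{-\lambda}$, with $0<\lambda<1$ depending on $\beta$.
\item The exponential dependence on $|A|$ there enters through the prefactor $|A|^2 2^{2|A|}$, not through a gap.
\item Knowing only that the infinite-time limit $\mathcal{E}_A$ sends $\rho_{\beta,-A}$ to $\rho_\beta$ does not help. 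The truncation error grows linearly in $t$, so it has to be beaten by a quantitative convergence rate.
\end{itemize}

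\textbf{Missing ingredient: the polynomial rate in the long-range setting.}
\begin{itemize}
\item That proof works in the KMS-weighted norm $\|\cdot\|_{\rho_\beta}$. It needs bounded imaginary-time evolutions $e^{xH}Oe^{-xH}$ at an auxiliary high temperature $\beta_0$, interpolated down to arbitrary $\beta$.
\item For power-law $H$ this requires the imaginary-time Lieb--Robinson bound of Lemma \ref{lem:imaginary_time_lr}. That lemma fixes $\beta_0=1/(16guk)$ and hence $\lambda(\beta)$.
\item Your proposal never addresses this, yet it is exactly where the range of the interaction enters the recovery estimate.
\item Given the rate, you must take $t$ polynomial in $R$, balancing $e^{\mu|A|}t^{-\lambda}$ against the truncation error $t\,|A|(\log R)^{D+1}R^{-(\alpha-D)}$ from Lemmas \ref{lem:quasiLocality} and \ref{lem:truncationError}. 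This gives $R^{-\lambda'(\alpha-D)}$ with $\lambda'=\lambda/(1+\lambda)$, so $\mu_3=\lambda'(\alpha-D)/2$ after the square root.
\end{itemize}

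\textbf{A smaller error in Step 3.} A state $\mathcal{R}_{B\to AB}(\rho_{BC})$ is not in general a quantum Markov chain, because the channel also acts on $B$. So ``the last term vanishes'' is false. The argument you defer to goes differently:
\begin{itemize}
\item write $I(A:C|B)_\rho=S(C|B)_\rho-S(C|AB)_\rho$;
\item use data processing on the conditioning system, $S(C|AB)_{\mathcal{R}_{B\to AB}(\rho_{BC})}\ge S(C|B)_\rho$;
\item apply Alicki--Fannes--Winter continuity only to $S(C|AB)$.
\end{itemize}
This yields $\Delta\log\dim(C)+h_2(\Delta)$, as in the paper.
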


All the main results are summarized in Table \ref{tab:hypothesesMainResults} with all the explicit constants, including the hypotheses for each theorem, the conditions on the decay rates and the bounds obtained. Note that in all these results the decay for the local terms of the Lindbladian and the decays obtained for the CMI or MI are in general different.

\begin{table}[t]
\centering
\setlength{\tabcolsep}{5pt}
\renewcommand{\arraystretch}{1.5}
\footnotesize
\begin{tabular}{|p{1.9cm}|p{4.2cm}|p{3.6cm}|p{4.6cm}|}
\hline
\textbf{Result} & \textbf{Hypotheses} & \textbf{Condition on $\alpha$ $\Rightarrow$ decay rate} & \textbf{$I(A:C|B)_\rho$} \\
\hline
\hline
Theorem~\ref{thm:informaldecayCMIgeneralLindbladian} \newline
\emph{(CMI, general Lindbladian)}
&
$k$-local, primitive, frustration-free Lindbladian, with decay by $F$-function $F$; global rapid mixing $\mathrm{RM}(n,\lambda)$.
&
General $F$-function $F$;\newline Decay set by the exponent $R^{nD/2}F(R)^{\lambda/2(\lambda+v)}$.
&
$p(\abs{A},\abs{C})\, R^{nD/2} \big(F(R)\big)^{\lambda/[2(\lambda+v)]}$
\newline
\emph{Global} Markov property
\\
\hline
Corollary~\ref{thm:informaldecayCMIgeneralLindbladianLongRange} \newline
\emph{(CMI, long-range Lindbladian)}
&
Same as above, specialised to $F_\alpha(x) = (1+x)^{-\alpha}$ (long-range Lindbladian); global rapid mixing $\mathrm{RM}(n,\lambda)$.
&
$\alpha > nD(\lambda+v)/\lambda$ \  $\Rightarrow$ \newline $ \eta=\dfrac{\lambda \alpha}{2(\lambda+v)} - \dfrac{nD}{2} > 0$
&
$ p(\abs{A},\abs{C})\, F_\eta(R)$
\newline 
\emph{Global} Markov property
\\
\hline
Theorem~\ref{thm:informalMutualInfoDecay} \newline
\emph{(decay of MI)}
&
Long-range, local, primitive, regular Lindbladian (\emph{frustration-freeness not needed}); local rapid mixing $\mathrm{LRM}(\lambda)$.
&
$\alpha > 3D - 1$ \ $\Rightarrow$ \newline $\alpha' = \dfrac{\lambda(\alpha - 2D)}{\lambda + v} > 0$
&
$ k(\abs{A},\abs{C})\, F_{\alpha'}(R-1)$

\emph{Global} decay of the MI
\\
\hline
Theorem \ref{thm:informalCMICKGLindbladian} \newline
\emph{(local Markov property)}
&
Gibbs state of a $k$-local long-range \emph{non-commuting} Hamiltonian.
&
$\alpha > D$ on $H$\newline $\Rightarrow$ \ $\alpha' = \dfrac{\lambda'(\alpha - D)}{2} > 0$
&
$I(A:C|B)_\rho = \tilde{\mathcal{O}}(\abs{C}e^{\mu \abs{A}}R^{-\alpha'})$

\emph{Local} Markov property
\\
\hline
\end{tabular}
\caption{Hypotheses, conditions on $\alpha$ and decay rates of the four main
results (Section~\ref{sec:mainResults}). Here $R = d(A,C)$,
$D$ is the lattice dimension, $v = \normm{\Li}{F}\, C_F$ the Lieb-Robinson
velocity, and $p,k$ polynomial functions. The conditions on $\alpha$ in
rows~2 and~3 are the ones actually required by the proofs.}
\label{tab:hypothesesMainResults}
\end{table}

Numerical results in Section \ref{sec:numerics} show that the Gibbs state of the classical long-range Ising model exhibits a polynomial decay of the CMI. However, for high enough decay rates $\alpha$, the system increasingly resembles a short-range system, and the CMI of the Gibbs state is better approximated by an exponential decay than by a polynomial one. Additionaly, adding a transverse field also seems to contribute to the shift from a polynomial to an exponential decay of the CMI. Figure \ref{fig:cmiDecayCurves} shows the CMI decay for different values of $\alpha$ and $h$. We leave for future work the study of this qualitative change of the CMI.

\subsection{Comparison with prior work}\label{sec:comparisonPriorWork}

\textbf{Correlations in long-range open systems.} Roon--Sims \cite{roon_quasi-locality_2024} proved  decay of correlations for long-range dissipative processes in a very similar setting, bounding $|\pi(O_A O_C)-\pi(O_A)\pi(O_C)|$ for the state $\pi$ and for disjointly supported observables $O_A,O_C$. Our results extend theirs in two ways: we consider the decay of the CMI under additional assumptions, and also our proof technique for the MI allows us to obtain a \emph{global} decay and not a pairwise decay, as defined in the introduction (see Section~\ref{sec:MI} for a more detailed discussion). 
In \cite{kastoryano_rapid_2013}, a similar result is proven but limited to finite-range and detailed-balanced Lindbladians. Another closely related result is the proof of the area law of the fixed points frustration-free Lindbladians in \cite{brandao_area_2015}, under the same assumptions as our CMI decay (in fact, our proof relies on a direct adaptation of their main technical lemma).

\textbf{CMI decay for short-range systems.} Significant advances have recently been made on the decay of CMI for quantum Gibbs states. In particular, \cite{chen_quantum_2025} proves the pairwise Markov property for Gibbs states of finite-range Hamiltonians at all temperatures. Our results extend this to the long-range setting. Also for finite-range Hamiltonians, \cite{kato_clustering_2025} establishes exponential decay of CMI for high temperatures when the subsystems $A$ and $C$ are at most constant, via quantum belief-propagation channels. This decay was also established in \cite{scalet2025classicalestimationfreeenergy} using the decay of the so-called \emph{effective} or \emph{mean-force} Hamiltonian.  Finally, \cite{bakshi_dobrushin_2025} improved the restrictions on the size of $A$ and $C$, proving the global Markov property in the high temperature regime. More recently, \cite{chen2026notestrongquantummarkov} showed the equivalence between the global and the pairwise Markov property under assumptions on clustering of correlations. All of these results are specific to Gibbs states of finite-range systems. Here, we extend that line of work in two complementary ways: by targetting long-range systems, and by proving a CMI decay for the fixed points of general local Lindbladians. While we do not prove it here, we expect the global Markov property akin to \cite{bakshi_dobrushin_2025} to hold at high-enough temperatures in long-range systems. See Table \ref{tab:priorWorkCMI} for a summary of the previous results on the decay of CMI for different regimes.

\textbf{Characterization of mixed phases.} Recent efforts have considered the role of both the CMI and the MI in the characterization of phases in the context of mixed states. In particular, \cite{Sang_2025} showed that a finite Markov length guarantees the existence of finite-depth channel circuits mapping between the states within a phase  (a condition termed local reversibility \cite{sang_mixed-state_2025}), then \cite{yi2025universaldecayconditionalmutual} proved that the decay of MI and CMI are robust properties of a phase, by showing their stability under shallow channel circuits. This equivalence was also recently refined by the notion of local stability (see also \cite{cubitt_stability_2015}), and its relation to the MI and CMI was also recently established in \cite{li_unified_2026}, where they consider the same notions of correlation decay that we study. These definitions are based on channel circuits mapping across phases, with the connection to thermal phase diagrams established in \cite{ma2025circuitbasedcharacterizationfinitetemperaturequantum}. These works do not directly target the properties of systems described by Lindbladians, as we do here. With our results, we narrow down the conditions under which the fixed points of Lindbladians belong to the mixed-state analogue of ``gapped phases''.

\begin{table}[t!]
\centering
\setlength{\tabcolsep}{4pt}
\renewcommand{\arraystretch}{1.25}
\footnotesize
\begin{tabular}{|p{2.2cm}|p{3.2cm}|p{1.2cm}|p{4.5cm}|p{3cm}|}
\hline
\textbf{Reference} & \textbf{State and Regime} & \textbf{Markov property} &\textbf{Decay rate of $I(A:C|B)$}& \textbf{Comments} \\
\hline
\hline
Kuwahara~\cite{Kuw24}
&
Gibbs state of short-range Hamiltonian,\newline arbitrary temperatures;\newline D>1;\newline
&
Pairwise
&
$\mathcal{O}(e^{\abs{AC}}$\newline $e^{-c_1 R/(\beta^{D+1}\log(R))+c_2 \log(\beta \abs{AC})})$
&
$\abs{A},\abs{C}=\mathcal{O}(1)$
\\
&
1D,
&
Global
&
$\mathcal{O}(e^{-c_3 R/\beta+c_4 \beta \log(\beta R)})$
&
Dimensional restriction (1D)
\\
\hline
Kato--Kuwahara ~\cite{kato_clustering_2025}
&
Gibbs state of finite-range Hamiltonian, \newline high temperature
&
Global
&
$\mathcal{O}((\abs{A}+\abs{C})$\newline$d(A,C)^{2D+1}e^{-c \,d(A,C)^{1/(D+3)}})$
&
via Belief Propagation channels
\\
\hline
Scalet et al.~\cite{scalet2025classicalestimationfreeenergy}
&
Gibbs state of short range Hamiltonian,\newline under existence of an effective Hamiltonian
&
Global
&
$\mathcal{O}((\abs{A}+\abs{C})e^{-\Omega (d(A,C))})$
&
At the moment, hypothesis proven for high temperatures for arbitrary dimensions
\\

\hline
Kato--Brandão~\cite{Kato2019}; 
&
1D, Gibbs state of finite-range Hamiltonian
&
Global
&
$\mathcal{O}(e^{-\Theta (\sqrt{d(A,C)})})$
&
Dimensional restriction (1D)
\\
\hline
Bakshi et al. ~\cite{bakshi_dobrushin_2025}
&
Gibbs state of finite range Hamiltonian, \newline high temperature
&
Global
&
$\mathcal{O}(\abs{A},\abs{C})e^{-d(A,C)/\xi}$
&
Improves the restrictions on $\abs{A}$ and $\abs{C}$. Dobroushin condition.
\\
\hline
Chen--Rouzé \cite{chen_quantum_2025}
&
Gibbs state of short-range Hamiltonians,\newline \textit{all} temperatures
&
Local
&
$\mathcal{O}(\abs{A}\abs{C} e^{\abs{A}-d(A,c)/\xi})$
&
First proof for all temperatures
\\
\hline
Theorem \ref{thm:informaldecayCMIgeneralLindbladian}
&
Fixed-point of local, rapid mixing, frustration-free Lindbladians decaying with $F$.
&
Global
&
$\mathcal{O}((\abs{A},\abs{C})\newline d(A,C)^{nD/2} F(d(A,C))^{\lambda/2(\lambda+v)})$
&
Frustration-freeness $+$ global rapid mixing
\\
\hline
Corollary \ref{thm:informaldecayCMIgeneralLindbladianLongRange}
&
Fixed point of local, rapid mixing, frustration-free, long-range Lindbladians
&
Global
&
$\mathcal{O}((\abs{A},\abs{C}) \cdot F_{\alpha '}(d(A,C)-1))$\newline
     where $\alpha'=\dfrac{\lambda(\alpha-2D)}{\lambda +v}$,
&
Frustration-freeness $+$ global rapid mixing
\\
\hline
Theorem \ref{thm:cmiCKGLindbladian}
&
Gibbs state of long-range non-commuting Hamiltonian at \textit{all} temperatures
&
Local
&
$\tilde{\mathcal{O}}(R^{-\alpha'})$\newline
    for $\alpha':=\lambda ' (\alpha-D)/2$
&
Extends \cite{chen_quantum_2025} to the long-range regime
\\
\hline
\end{tabular}
\caption{Comparison with prior results on the Markov property of the CMI for
fixed points / Gibbs states (Section \ref{sec:comparisonPriorWork}). $D$ is the dimension and $c,c_1,c_2,c_3,c_4,\xi$ are constants, $\beta$ is the inverse temperature, $\lambda$ depends on the long-range decay rate $\alpha$ and $v$ is the Lieb-Robinson velocity.}
\label{tab:priorWorkCMI}
\end{table}


\section{Preliminaries}
Throughout this paper, $(\G,d)$ denotes a countable metric whose elements (called sites) represent the spatial positions of the individual particles. Each point $x\in\G$ carries a finite-dimensional local Hilbert space $\Hil_x$ of (local) dimension $D_x$.

Even though our results hold for any countable metric space with any local dimensions, the primary example which is considered in Section \ref{sec:thermal} is the hypercubic lattice $\G=\mathbb{Z}^D$ of dimension $D\in \mathbb{N}$ with the same two-dimensional ($D_x=2$, $\forall x\in\mathbb{Z}^D$) Hilbert space in every point $\Hil_x=\Hil\cong \mathbb{C}^2$ representing the spatial position of the individual particles (qubits\footnote{There is no restriction with considering higher local dimensions such as qutrits ($D_x=3$) and so on.}). The distance used is the $\ell^1$-metric $d(x,y)=\sum_{j=1}^D\abs{x_j-y_j}$. However, the precise structure of $\mathbb{Z}^D$ is not necessary for most of the results discussed here.

If $\La\subset\subset \G$ then $\La$ is a finite subset of $\G$. Furthermore, we define $\mathcal{P}_0(\G)$ as the collection of all finite subsets of $\G$.

\noindent\textbf{Geometric constants.} We say $(\Gamma, d)$ is 
$(\kappa, \nu)$\textbf{-regular} if
\begin{equation}
    \sup_{x \in \Gamma} \left|\{y \in \Gamma : d(x,y) \leq r\}\right| 
    \leq \kappa\, r^\nu 
    \qquad \text{for all } r \geq 1,
    \label{eq:regularity}
\end{equation}
for constants $\kappa < \infty$ and $\nu > 0$. 

In particular, for $\Gamma = \mathbb{Z}^D$ 
with the $\ell^1$-metric, this holds with $\nu=D-1$ and \begin{equation}\kappa = \kappa_D = \frac{D\pi^{D/2}}{\Gamma(D/2+1)}.\label{eq:geometricK}\end{equation}
\medskip

\noindent\textbf{Inflation and distance.} For $X \subset \Gamma$ and 
$r > 0$, the $r$\textbf{-inflation} of $X$ is
\begin{equation}
    X(r) := \{y \in \Gamma : d(X, y) \leq r\}, 
    \qquad d(X,y) := \inf_{x \in X} d(x,y).
\end{equation}
We write $d_y(X) := \sup_{x \in X} d(y,x)$ for the farthest distance 
from $y$ to $X$, and $\mathrm{diam}(X) := \sup_{x,y \in X} d(x,y)$ 
for the diameter of $X$, i.e. the farthest distance between two sites in $X$. 

\subsection{Hilbert Space Structure}\label{sec:hilbertSpaceStructure}

 For every point $x\in\Gamma$ we choose a finite-dimensional Hilbert space $\Hil_x\cong \CC^{n_x}$ of dimension $n_x>0$; from now on, we assume that every point has a Hilbert space of the same local dimension $n$, i.e. $\Hil_x\cong \CC^n,\quad \forall x\in\Gamma$. 
 
 The corresponding algebra of observables of each point $x$ is the set of bounded linear operators $\A_x =\mathcal{B}(\Hil_x)$ from $\Hil_x$ to itself. Then, for every subset $\Lambda\subset \Gamma$, the Hilbert space of the states on $\Lambda$ is defined by taking the tensor product\[\Hil_\La=\bigotimes_{x\in\La}\Hil_x,\]with the corresponding algebra of observables\[\A_\La=\bigotimes_{x\in\La}\A_x\cong \mathcal{B}(\Hil_\La).\]

For every subset $\Lambda'\subset\La\subset\G$, we get a partial order by identifying $\A_{\La'}$ with $\A_{\La'}\otimes\Id_{\La \setminus\La'}$ and write $\A_{\La'}\subset \A_{\La}$. The algebra of all strictly local observables $\A_\G^{\text{loc}}$ is then defined as the inductive limit
\begin{equation}
\A_\G^{\text{loc}}=\bigcup_{\La\in\mathcal{P}_0(\G)}\A_\La.
\end{equation}
The $C^*$-algebra of quasi-local observables $\A_\G$ is taken to be the norm completion of $\A_\G^{\text{loc}}$.

\medskip

\noindent\textbf{States.} A state on $\mathcal{A}_\Lambda$ is a positive 
linear functional $\omega: \mathcal{A}_\Lambda \to \mathbb{C}$ with 
$\omega(\Id) = 1$ uniquely identified with a density matrix $\rho \in \mathcal{A}_\Lambda$, $\rho \geq 0$, via $\omega(O) = \mathrm{Tr}(\rho O)$ (see Gleason's Theorem \cite{gleason});
by a standard abuse of notation we use the term \emph{state} for both the functional $\omega$ and its associated density matrix $\rho$. We write $\mathcal{S}(\mathcal{A}_\Lambda):=\{\w\in \A_\La^*:\w \geq 0, \w(\Id)=1\}$ for the set of states on $\mathcal{A}_\Lambda$. Note that if we are working in a finite-dimensional algebra $\A_\La$ the predual $\A_*=\A^*$ is the same as the dual space, then, we will denote a state as an element of the dual $\rho\in\A_\La^*$, however, in case of considering infinite dimensional algebras, the state actually is part of the predual space.

\medskip

\noindent\textbf{Partial traces.} For $X \subset \Lambda$, the reduced state on $X$ is $\rho_X := \mathrm{Tr}_{\Lambda \setminus X}(\rho)$. 
We write
\begin{equation}
    \rho_{-X} := \frac{\Id_X}{n^{|X|}} \otimes \rho_{\Lambda \setminus X}
    \label{eq:partialTrace}
\end{equation}
for the state obtained by replacing the $X$-marginal of $\rho$ with the maximally mixed state on $\mathcal{H}_X$.


\subsection{Interactions and $F$-Functions}

The (local) interactions in open quantum systems will consist of a family $\Li=\{L_Z:Z\subset \subset \G\}$ of bounded linear operators $L_Z:\A_Z\to\A_Z$ in the so-called Lindblad form, which characterises generators of norm-continuous, one parameter semigroup of unital completely positive maps ~\cite{breuer_theory_2002, manzano_short_2020}. 
Given a finite subset $\La\subset\subset \G$, we always use $^*$ to denote the adjoint of a linear transformation on $\A_\La$ with respect to the Hilbert-Schmidt inner product $\langle \cdot,\cdot \rangle_{HS}$. Whenever we treat with finite dimensional algebra $\A_\La$.

It is well-known \cite{Gorini:1975nb,lindblad_generators_1976} that for each finite $Z\subset \La$, there exists a self-adjoint $H_Z=H_Z^*\in \A_Z$,  a number $l_Z\in\mathbb{N}$, and jump operators $K_1,\dots,K_{l_Z}\in\A_Z$ such that, for all $O\in\A_Z$ and $\rho\in\mathcal{S}(\A_\La)$,
\begin{equation}
    L_Z(O)=i[H_Z,O]+\summ{j=1}^{l_Z}K_j^*OK_j-\frac 12 \{K_j^*K_j,O\}
    \end{equation}
\begin{equation}
    L_Z^*(\rho)=-i[H_Z,\rho]+\summ{j=1}^{l_Z}K_j\rho K_j^*-\frac 12 \{K_j^* K_j,\rho\}.
\end{equation}

Given a finite volume $\La\subset\subset\G$, we call a \textit{local interaction} the operator
\begin{equation}
    \Li_\La=\summ{Z\subset \La}L_Z.
    \label{eq:localInteraction}
\end{equation}
As an abuse of notation, we sometimes write $\Li_\La=\{L_Z\}_{Z\subset \La}$ to denote the local interaction $\Li_\La=\summ{Z\subset \La}L_Z$, and we call $\Li_\La$ a dissipative process, as this can be thought as the generator of the time evolution of an open quantum system. We may remove the subscript of $\Li_\La$ and simply write $\Li$ if the set $\La$ can be understood from the context. 

The local operator $\Li_\La$, due to its Lindblad form, generates a norm-continous, one-parameter semigroup of operators denoted by
\begin{equation}
    \T_t (O_A)=e^{t\Li_\La}(O_A),\quad O_A\in\A_\La,
\end{equation}
which is well-defined since $\Li_\La$ is bounded. We will simply denote by $T_t:=\T_t$ $\forall t$, if $\La$ is clear from context.

Complete positivity is a general condition that some maps between $C^*$-algebras satisfy. For any finite dimensional Hilbert space $\Hil$, any linear map $\Phi :\mathcal{B}(\Hil)\to\mathcal{B}(\Hil)$ induces a linear map on $\mathcal{B}(\Hil)\otimes \mathbb{M}_n$, where $\mathbb{M}_n$ denotes the set of $n\times n$ matrices, by acting block-wise
\begin{equation}
    \Phi\otimes \Id_{n\times n}([[X_{i,j}]]_{i,j=1}^n):= [[\Phi(X_{i,j})]]_{i,j=1}^n.
\end{equation}
We say that the linear map $\Phi$ is completely positive if for every $n\in\mathbb{N}$, the matrix $\Phi\otimes\Id_{n\times n}(X^*X)$ is positive semidefinite for all $X\in\mathcal{B}(\Hil)\otimes \mathbb{M}_n$. The completely positive maps are also completely bounded, i.e.
\begin{equation}
    \normm{\Phi}{cb}:=\sup_{n\in\mathbb{N}}\norm{\Phi\otimes \Id_{n\times n}}<\infty.
\end{equation}
We define a class of completely bounded maps as the following:
\medskip
\begin{mydef}
    Given a finite volume $Y$, the set $\mathcal{CB}_0(Y)$ is defined as
    \begin{equation}
        \mathcal{CB}_0(Y)=\{\Phi:\A_Y\to\A_Y:\normm{\Phi}{cb}<\infty,\,\Phi(\Id_Y)=0 \}.
    \end{equation}
\end{mydef}
An example of an element in $\mathcal{CB}_0(Y)$ is the commutator $\Phi(\cdot)=[\cdot,B]$ for $B\in \A_Y$.

The decay of interactions with the distance is encoded in an $F$-function.
\medskip
\begin{mydef}[$F$-function]\label{def:Ffunction}
    A non-increasing function $F:[0,\infty)\to (0,\infty )$ is said to be a $F$-function on $(\Gamma,d)$ if:
\begin{enumerate}
    \item $F$ is \textit{uniformly summable}, in the sense that
    \begin{equation}
        \lvert\lvert\lvert F\rvert \rvert \rvert:=\sup_{x\in\Gamma}\sum_{y\in \Gamma}F(d(x,y))<\infty,
        \label{eq:normF}
    \end{equation}
    and
    \item $F$ satisfies a \textit{convolution condition}:
    \begin{equation}
        C_F:=\sup_{x,y\in \G}\sum_{z\in\G}\dfrac{F(d(x,z))F(d(z,y))}{F(d(x,y))}<\infty
        \label{eq:convolutionCF}
    \end{equation}
\end{enumerate}
\end{mydef}
Note that an immediate consequence of \eqref{eq:convolutionCF} is that for any pair $x,y\in\G$, we have the bound
\begin{equation}
   \sum_{z\in\G} F(d(x,z))F(d(z,y))\leq C_FF(d(x,y))
\end{equation}

The two canonical families of $F$-functions used throughout 
this paper are:
\begin{itemize}
    \item \textbf{Short-range:} $F_{a,b}(r) = e^{-ar^b}$ for $a > 0$, 
    $b \in (0,1]$.
    \item \textbf{Long-range:} $F_\alpha(r) = (1+r)^{-\alpha}$ for 
    $\alpha > D$, on a $(\kappa,D)$-regular space.
\end{itemize}
For $F_\alpha$ on $\mathbb{Z}^D$, one has\[\|F_\alpha\| \leq \frac{2\kappa_D}{\alpha - D},\] where we use that the metric space $(\Z^D,d)$ is $(\kappa_D,D-1)$-regular according to Equation \eqref{eq:regularity}:
\begin{equation}\label{eq:SR}
\left|\{x:\ d(x,S)\in(r-1,r]\}\right|\ \le\ \kappa_D \,|\partial S|\,(1+r)^{D-1}\qquad (\text{finite }S\subset\Lambda,~r\in\mathbb N),
\end{equation}
 for $\kappa_D$ in Equation \eqref{eq:geometricK} and 
$C_{F_\alpha} \leq c(\alpha, D) < \infty$ for any $\alpha > D$.

\medskip

\begin{mydef}
    Let $(\Gamma,d)$ be a countable metric space equipped with an F-function $F$. Let $\Li=\{L_Z\}_{Z\in \mathcal{P}_0(\Gamma)}$ be a local interaction. We say that $\Li$ is a \emph{doubly anchored} interaction governed by $F$ if
    \begin{equation}\label{eq:defFNormLindbladian}
        \normm{\Li}{F}:=\sup_{x,y\in \La}\sum\limits_{\substack{Z\subset \La \\x,y\in Z}}\dfrac{\normm{L_Z}{cb}}{F(d(x,y))}<\infty
    \end{equation}
\end{mydef}

Except otherwise stated, we will assume that the dissipative system is doubly-anchored to a function $F$, unless it is not clear by the context. Note that a natural consequence of \eqref{eq:defFNormLindbladian} is that
    \begin{equation}\label{eq:FNormLindbladian}
    \sum\limits_{\substack{Z\subset \La \\x,y\in Z}}\normm{L_Z}{cb}\leq \normm{\Li}{F} F(d(x,y)) 
\end{equation}
for every $x,y \in \Lambda$.

In a similar way, a closed quantum system with a local Hamiltonian $H=\sum_{Z\subset\Lambda} h_Z$, has long-range interactions if
\begin{equation}\label{eq:defFNormHamiltonian}
    \normm{H}{F}:= \sup_{x,y\in \La}\sum\limits_{\substack{Z\subset \La \\x,y\in Z}}\dfrac{\normm{h_Z}{\infty}}{F_\alpha(d(x,y))}<\infty
\end{equation}
for the $F$-function\[F_\alpha(r)=\dfrac{1}{(1+r)^\alpha};\quad r\geq 0,\]
for $\alpha>D$. In other words, there exists a $J>0$ such that
\begin{equation}\label{eq:powerlaw}
\norm{h_Z}\ \le\ \frac{J}{(1+\mathrm{diam}(Z))^{\alpha}} 
\end{equation}

for every $Z \subset \Lambda$. Unless otherwise stated, all the Hamiltonians we consider will be $k$-local, i.e., for a $k\in\mathbb{N}$,
\begin{equation}\label{eq:Hklocal}
H=\sum_{Z\subset\Lambda,\ |Z|\le k} h_Z.
\end{equation}

\subsection{Fixed Points and Mixing}

Following the notation in \cite{roon_quasi-locality_2024}, we now define fixed points and mixing conditions. Let $(\Gamma,d)$ be a countable metric space and consider a finite subset $\La\subset\subset \G$.

\medskip
\begin{mydef}\label{def:dynamicalFixedPoint}
    Let $\Li=\{L_Z\}_{Z\in \mathcal{P}_0(\Gamma)}$ be a dissipative interaction, $\pi\in\mathcal{S}(\mathcal{A}_\La^*)$ a state on the local algebra $\mathcal{A}_\Lambda$ and denote by $\T_t$ the time evolution generated by $\Li$. We say that $\pi$ is a \emph{dynamical fixed point} of $\Li$ if
    \begin{equation}
        \pi \circ \T_t=\pi\quad \forall t\geq 0.
    \end{equation}
\end{mydef}

\begin{obs}
    Note that, from the above definition, $\w \circ \T_\infty$ is a dynamical fixed point for every $\omega\in\mathcal{S}(\mathcal{A}_\La^*)$, i.e. \[\w \circ \T_\infty\circ \T_t=\w \circ \T_\infty \, ,\]where $\T_\infty:=\lim_{t\to\infty}\T_t$.
\end{obs}

This does not specify how fast an arbitrary state converges towards the fixed point. In order to get some information about the convergence speed, we define the following. 
\medskip
\begin{mydef}
    We say that $\pi\in\mathcal{S}(\mathcal{A}_\La^*)$ is a \emph{local dynamical fixed point governed by $\g$ with respect to $A\subset\La$} if it is a dynamical fixed point and there exists a non-increasing function $\g:[0,\infty)\to [0,2]$ for which one has 
    \begin{equation}
        \abs{(\pi-\w)\circ \T_t(O_A)}\leq\norm{O_A}\cdot p(\abs{A})\cdot  \g(t)
        \label{eq:localDynamicalFP}
    \end{equation}
    for all states $\w\in\mathcal{S}(\mathcal{A}_\La^*)$, $p$ a polynomial function,  $O_A\in\mathcal{A}_A
    $, $t\geq0$, and, from now on, we denote by $\norm{\cdot}:=\normm{\cdot}{\infty}$.
\end{mydef}

However, in some cases, instead of treating with positive functionals $\w\in\mathcal{S}(\A_\La)$, we use the density matrix $\rho\in\A_\La$ corresponding to $\w$ by the correspondence (see Gleason's Theorem \cite{gleason})\[\w(O)=\tr(\rho O),\quad \forall O\in\A_\La.\]See Section \ref{sec:hilbertSpaceStructure} for more details.

When considering density matrices, the common definition of fixed-point is that it is contained in the kernel of the Lindbladian. 
\medskip
\begin{mydef}[Fixed-point]\label{def:fixed-point}
    A state $\rho\in\A_\La$ is a fixed-point of a Lindbladian $\Li$ if
    \begin{equation}
        \Li^*(\rho)=0.
    \end{equation}
\end{mydef}

Both Definitions \ref{def:dynamicalFixedPoint} and \ref{def:fixed-point} are equivalent:
\medskip
\begin{lem}
    Let $\w\in\mathcal{S}(\A_\La)$ be a state and let $\rho\in\A_\La$ be the density matrix corresponding to $\w$. Then, $\w$ is a dynamical fixed-point if and only if $\rho$ is a fixed-point.
\end{lem}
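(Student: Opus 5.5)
The plan is to translate the functional condition $\w\circ\T_t=\w$ into a statement about density matrices via the Hilbert--Schmidt duality. The Lindbladian $\Li$ acts on observables and its adjoint $\Li^*$ acts on density matrices. Since $\Li=\Li_\La$ is a bounded operator on the finite-dimensional space $\A_\La$, the semigroup is the norm-convergent exponential series $\T_t=e^{t\Li}$. Its Hilbert--Schmidt adjoint is therefore $(\T_t)^*=e^{t\Li^*}$, obtained by taking adjoints termwise in the series. Consequently, for every $O\in\A_\La$,
\begin{equation}
\w\circ\T_t(O)=\tr\bigl(\rho\, e^{t\Li}(O)\bigr)=\tr\bigl(e^{t\Li^*}(\rho)\,O\bigr).
\end{equation}
Here the adjoint is taken with respect to the pairing $\langle X,Y\rangle=\tr(X^*Y)$, and we use that $\rho$ is self-adjoint. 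Since $\Li$ preserves self-adjointness, so does $\Li^*$, and no complex-conjugation issue arises.

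For the direction ($\Leftarrow$), assume $\Li^*(\rho)=0$. Every term of order $k\geq1$ in the exponential series vanishes, so $e^{t\Li^*}(\rho)=\rho$ for all $t\geq0$. The displayed identity then gives $\w\circ\T_t(O)=\tr(\rho O)=\w(O)$ for all $O$, i.e.\ $\w$ is a dynamical fixed point.

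For the direction ($\Rightarrow$), assume $\w\circ\T_t=\w$ for all $t\geq0$. Then $\tr\bigl((e^{t\Li^*}(\rho)-\rho)O\bigr)=0$ for all $O\in\A_\La$. The trace pairing is nondegenerate on the finite-dimensional $\A_\La$, so $e^{t\Li^*}(\rho)=\rho$ for all $t\geq0$. The map $t\mapsto e^{t\Li^*}(\rho)$ is differentiable because $\Li^*$ is bounded. Differentiating at $t=0^+$ gives
\begin{equation}
\Li^*(\rho)=\lim_{t\to0^+}\frac{e^{t\Li^*}(\rho)-\rho}{t}=0.
\end{equation}
Equivalently, one can differentiate the scalar identity $\tr(\rho\,\T_t(O))=\tr(\rho O)$ to get $\tr(\rho\,\Li(O))=0$ for all $O$, which is $\tr(\Li^*(\rho)O)=0$.

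The only step requiring care is the adjoint identity $(e^{t\Li})^*=e^{t\Li^*}$ together with the Gleason-type correspondence $\w\leftrightarrow\rho$ being a bijection with nondegenerate pairing. Both are immediate in finite dimensions, which is the setting fixed by $\La\subset\subset\G$. Beyond this, the argument has no real obstacle.
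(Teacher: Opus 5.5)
Your proposal is correct and follows essentially the same route as the paper. Both directions rest on the duality $\tr(\rho\,\T_t(O))=\tr\bigl((\T_t)^*(\rho)\,O\bigr)$: for ($\Rightarrow$) you use nondegeneracy of the trace pairing and then differentiate at $t=0$, and for ($\Leftarrow$) you use invariance of $\rho$ under $e^{t\Li^*}$. You are slightly more careful than the paper in justifying $(e^{t\Li})^*=e^{t\Li^*}$ and the Hermiticity preservation needed for the pairing.
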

\begin{proof}
    $\Rightarrow$ : Suppose $\w$ is a dynamical fixed-point (i.e. $\w\,\circ \, \T_t=\w$). Then, for any $O\in\A_\La$,
    \begin{equation}
        \tr((\T_t)^*(\rho)O)=\tr( \rho \T_t(O))=\w(\T_t(O))=\w(O)=\tr(\rho O).
    \end{equation}
    By the non-degeneracy of the Hilbert-Schmidt inner product, $(\T_t)^*(\rho)=\rho$ for all $t$. Differentiating at $t=0$ yields
    \begin{equation}
        \Li^*(\rho)=0.
    \end{equation}
    Hence $\rho$ is a fixed-point of the Lindbladian.

    $\Leftarrow$ : Suppose $\rho$ is a fixed-point of $\Li$ ($\Li^*(\rho)=0$). Then, for any $O\in\A_\La$,
    \begin{equation}
        (\w\,\circ\,\T_t)(O)=\w(\T_t(O))=\tr(\rho \T_t(O))=\tr((\T_t)^*(\rho)O)=\tr(\rho O)=\w(O).
    \end{equation}
    Then, $\w\,\circ\,\T_t=\w$.
\end{proof}
The Lemma above shows that we can indistinctively  talk about fixed-points or dynamical fixed-points, only depending on whether our state is a functional or a density-matrix.

Even though in most of the cases we study local dynamical and static properties, there is a stronger notion of global speed of convergence called (global) rapid mixing. I is introduced and used more often in the literature (see \cite{brandao_area_2015,cubitt_stability_2015}) and defined in terms of the contraction of the quantum Markov semigroup.
\medskip
\begin{mydef}[Contraction]\label{def:contraction}
    Let $\Li=\{L_Z\}_{Z\in \mathcal{P}_0(\Gamma)}$ be a dissipative interaction. Define the \textit{contraction of $\T_t$} as
    \begin{equation}
        \eta(\T_t):=\sup_{O\in \A_\La,\norm{O}=1} \norm{(\T_t-\T_\infty)(O)}.
    \end{equation}
    Take $A\subset \Lambda$ and define the \textit{contraction of $\T_t$ relative to region $A$} as
    \begin{equation}
        \eta^A(\T_t):=\sup_{\w \in \A_\La^*}\sup_{O_A\in \A_A,\norm{O_A}=1} \abs{\w \circ (\T_t-\T_\infty)(O_A)}
    \end{equation}
\end{mydef}
\begin{mydef}[Global rapid mixing]\label{def:globalRM}
    We say that $\Li$ satisfies  \textit{global rapid mixing} if 
    \begin{equation}
        \eta(\T_t)\leq p(\abs{\Lambda})e^{-\lambda t},
    \end{equation}
    where $\lambda >0$ and $p$ a $n$-degree polynomial. We denote it as $\text{RM}(n,\lambda)$, or by $\text{RM}(\lambda)$ if the degree of the polynomial is not relevant as long as it is finite ($n<\infty$).
\end{mydef}

This definition considers any operator defined in $\La$ and as such it provides a trivial bound in the thermodynamical limit ($\abs{\Lambda}\to\infty$). 
However, a weaker condition than global rapid mixing can be required on $\Li$ if we only consider operators defined in a region $X\subset \La$.
\medskip
\begin{mydef}[Local rapid mixing]\label{def:localRM}
    We say that $\Li$ satisfies \textit{local rapid mixing} if, for each $X\subset \Lambda$,
    \begin{equation}
        \eta^X(\T_t)\leq p(\abs{X}) e^{-\lambda t}
    \end{equation}
    where $\lambda >0$ and $p$ a $n$-degree polynomial. We denote it as $\text{LRM}(n,\lambda)$, or by $\text{LRM}(\lambda)$ if the degree of the polynomial is not relevant as long as it is finite ($n<\infty$).
    \medskip
\end{mydef}
\medskip
\begin{obs}
    Note that for every two sets $A\subset B$, both subsets of $\Lambda$, then $\eta^A\leq \eta ^B$. In particular, $\eta ^A\leq \eta$ for all $A\subset \Lambda$.
\end{obs}
\medskip

\begin{obs}
    Note that if $\Li=\{L_Z\}_{Z\in \mathcal{P}_0(\Gamma)}$ is primitive (see Definition \ref{def:primitivity}) and satisfies local rapid mixing, then, its fixed point is a local dynamical fixed point governed by $e^{-\lambda t}$; this is the way we treat fixed points for locally rapidly mixing dissipative processes. 
\end{obs}

Local rapid mixing can be thought as the following: consider a state in a finite region $A\subset \Lambda$, then this state will converge into its local Gibbs distribution in a time that grows logarithmically with the size of the region $A$.

Table \ref{tab:rapid_mixing_literature} is a summary of what it is known so far on rapid mixing, including the assumptions and the main results of each of the references.

\begin{table}[h!]
\centering
\small
\renewcommand{\arraystretch}{1.25}
\setlength{\tabcolsep}{4pt}
\begin{tabular}{@{}|p{2.8cm}|p{3.8cm}|p{4.0cm}|p{2.2cm}|@{}}
\toprule
\textbf{Reference} & \textbf{Assumption} & \textbf{Main result} & \textbf{Consequence} \\
\midrule

Kastoryano \& Temme (2013) \cite{kastoryano_quantum_2013}
  & General primitive Lindbladian
  & MLSI constant $\alpha>0$
    $\Rightarrow$ mixing time $O(\log n)$
  & MLSI as route to RM \\
\hline

Kastoryano \& Eisert (2013) \cite{kastoryano_rapid_2013}
  & Local, primitive Lindbladian
  & Gap $\Rightarrow$ exp.\ decay of correlations;
    MLSI $\Rightarrow$ area law for MI (log corr.)
  & Structural consequences of RM \\
\hline

Lucia et al.\ (2015) \cite{lucia_rapid_2015}
  & Local, primitive, unique fixed point
  & RM $\Rightarrow$ stability of local observables
    under local perturbations of $\mathcal{L}$
  & Robustness of RM systems \\
\hline

Brand\~ao et al.\ (2015) \cite{brandao_area_2015}
  & Local; fixed point pure \emph{or} FF
  & RM $\Rightarrow$ area law for MI
    (log.\ correction)
  & Area law for mixed-state fixed points \\
\hline

Kastoryano \& Brand\~ao (2016) \cite{kastoryano_gibbs_2016}
  & Davies generator, comm.\ $H$;
    \textbf{1D}: all $T$;
    \textbf{any D}: high $T$
  & Gap $\Leftrightarrow$ (strong) clustering
    $\Rightarrow$ $O( n)$ mixing (fast-mixing)
  & Gap--clustering equivalence;
   Polinomial preparation of\ Gibbs state of comm. $H$. \\

\hline

 Bardet et al. (2023) \cite{Bardet_2023}, Bardet et al. (2024) \cite{Bardet_2024}
& Finite-range, translation-invariant commuting Hamiltonians in $1$D, at any $T$
&
Rapid mixing of the Davis generator with the Gibbs state as its fixed-point
&
Efficient preparation of Gibbs state of commuting Hamiltonian in $1$D
\\
\hline

Kochanowski et al.\ (2025) \cite{kochanowski_rapid_2025}
  & Davies generator, $2$-local comm.\ $H$;
    \textbf{1D}: all $T$;
    \textbf{any D}: high $T$
  & SLI
    $\Rightarrow$ MLSI $\Rightarrow$ RM;
  & Unification of correlation measures
    under RM \\

\hline

Rouzé, Fran\c{c}a \& Alhambra (2024) \cite{rouze_optimal_2024}
  & CKG Lindbladian;
    non-commuting $H$ (short- and long-range):
    high $T$
  & CKG Lindbladian $O(\log n)$ mixing
  & First rigorous RM
    for Gibbs state of a non-comm.\ $H$ \\

\hline

Bakshi et al.\ (2025) \cite{bakshi_dobrushin_2025}
  & Local Gibbs sampler; high $T$
  & Quantum Dobrushin condition
    $\Rightarrow$ RM $+$ exp.\ CMI decay
  & CMI decay from RM. \\




\bottomrule
\end{tabular}
\caption{Known results on rapid mixing (RM) for quantum Lindbladians.
  \emph{High~T}: sufficiently high temperature;
  \emph{All~T}: any finite $T>0$;
  comm./non-comm.: (non-)commuting $H$;
  FF: frustration-free;
  RM: mixing time $O(\log n)$;
  MLSI: modified log-Sobolev inequality;
  SLI: strong local indistinguishability (max-rel-entropy clustering);
  CKG: Chen--Kastoryano--Gilyén Gibbs sampler \cite{chen_efficient_2023};
  CMI/MI: conditional mutual information / mutual information.}
\label{tab:rapid_mixing_literature}
\end{table}


\subsection{Frustration freeness and primitivity}

Frustration freeness is a property of some dissipative processes analogous to the concept in ground states of closed systems. Some fixed points of local evolutions may not be locally steady; in other words, each local dissipative term may have a nonvanishing contribution to the fixed point, and it is the sum of all the local contributions which leaves the fixed point invariant. Frustration freeness excludes such cases.
\medskip

\begin{mydef}\label{def:frustrationFreeness}
    Let $\Li=\{L_Z\}_{Z\subset \La}$ be a local dissipative interaction. We say $\Li$ is frustration free if, for every fixed point $\sigma\in \mathcal{S}(\A_\La)$ of $\Li$ and every $Z\subset\La$, then
    \begin{equation}
        L_Z^*(\sigma)=0.
    \end{equation}
\end{mydef}

As pointed in \cite{brandao_area_2015}, while not in general, many of the most interesting cases of dissipative processes are frustration free:
\begin{itemize}
    \item many of the most common dissipative state engineering procedures \cite{Diehl_2008,verstraete_quantum_2009,Kashyap_2025,baruah2026dissipativepreparationinjectivetensor},
    \item the parent Lindbladians of the RG fixed points of the set of matrix-product density operators \cite{Liu_2026}
    \item detailed balanced quantum Markov processes, in particular, the Glauber dynamics \cite{kastoryano_rapid_2013, kochanowski_rapid_2025} and the KMS detailed balance Lindbladians such as \cite{chen_efficient_2023,Ding_2025,Scandi_2026}, which are Gibbs samplers for commuting and non-commuting Hamiltonians respectively (the latter Lindbladians are quasi-local, not strictly local),
    \item typical noise channels such as depolarizing noise.
\end{itemize}

Another important property of dissipative processes is \emph{primitivity} (see \cite{Sanz_2010}).
\medskip

\begin{mydef}\label{def:primitivity}
A Lindbladian $\Li$ is \textit{primitive} if it has a unique full-rank fixed-point.
\end{mydef}

When it comes to perturbing the Lindbladian, it is important to assure that after removing terms of a local Lindbladian, it will remain being primitive and have a unique fixed-point. That property is called regularity.
\medskip
\begin{mydef}[Regular]\label{def:regular}
    A local primitive Lindbladian $\Li=\{L_Z\}_{Z\subset \La}$ is \textit{regular} if every Lindbladian\[\hat\Li \subset \Li,\]is also primitive. By $\hat\Li\subset \Li$ we mean that $\hat\Li$ is a sum of terms of $\Li=\{L_Z\}_{Z\subset \La}$. 
\end{mydef}
Although this assumption might seem
odd, it turns out to be very natural when discussing scaling of a dissipative system with the
system size. In particular, primitive translationally invariant systems satisfy this property.

\subsection{Information-Theoretic Quantities}
Given two states $\rho,\sigma$ from the Hilbert space $\Hil$, the following quantities are defined
\begin{align}
        S(\rho) &:= -\mathrm{Tr}(\rho\log\rho) 
    && \text{(von Neumann entropy)}, \\ \nonumber
    S(\rho\|\sigma) &:= \mathrm{Tr}(\rho\log\rho) - \mathrm{Tr}(\rho\log\sigma) 
    && \text{(quantum relative entropy)}, \\ \nonumber 
    \text{Fid}(\rho,\sigma) &:= \|\sqrt{\rho}\sqrt{\sigma}\|_1 
    && \text{(quantum fidelity)},
\end{align}
always computed with the natural logarithm. We assume the convention that $-s \log(0)=\infty$ for all $s>0$, so\[S(\rho \|\sigma)=\infty,\text{ if }\supp(\rho)\cap \ker(\sigma)\neq \{0\}.\]

All of these quantities are 
non-negative. The trace distance $\|\rho-\sigma\|_1 := \mathrm{Tr}|\rho-\sigma|$ 
and fidelity are related by the Fuchs--van de Graaf inequalities:
\begin{equation}
    1 - \text{Fid}(\rho,\sigma) \leq \tfrac{1}{2}\|\rho-\sigma\|_1 
    \leq \sqrt{1-\text{Fid}(\rho,\sigma)^2},
\end{equation}
and the quantum Pinsker inequality gives 
$\|\rho-\sigma\|_1 \leq \sqrt{2\,S(\rho\|\sigma)}$.

For any tripartition\footnote{See Figure \ref{fig:geometryCMI}.} $A\sqcup B \sqcup C=\La$, if there exists a quantum channel, which we call \textit{recovery map} $\mathcal{R}_{B\to AB}$, acting on $AB$ that can recover the reduced state $\rho_{BC}$ of a Gibbs state $\rho$, then, the Gibbs state satisfies a Markov property. This relation is key for the proofs in Sections \ref{sec:CMI} and \ref{sec:thermal}. Originally stated in \cite[Equation 10]{FawRen15} and a refined version obtained in \cite[Equation 35]{Bluhm_2023}, the existence of a recovery map implies a bound in the CMI:

\medskip 
\begin{prop}[Fawzi--Renner bound {\cite[Equation 35]{Bluhm_2023}}]
\label{prop:FR}
For any tripartite state $\rho_{ABC}$ and a quantum channel 
$\mathcal{R}_{B\to AB}$ acting on $AB$:
\begin{equation}
    I(A:C|B)_\rho \leq (\sqrt{2}\min\{\abs{A},\abs{C}\}+1) \cdot\normm{\mathcal{R}_{B\to AB} (\rho_{-A})-\rho_{ABC}}{1}^{1/2}
    \label{eq:FR}
\end{equation}
\end{prop}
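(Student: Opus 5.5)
The plan is a data-processing step followed by a continuity bound for conditional entropies. Write $\sigma_{ABC}:=\mathcal{R}_{B\to AB}(\rho_{-A})$. Since $\mathcal{R}_{B\to AB}$ acts only on $AB$, we have $\sigma=(\mathcal{R}_{B\to AB}\otimes \mathrm{id}_C)(\rho_{BC})$ up to the trivial preparation of the maximally mixed state on $A$, and in particular $\sigma_C=\rho_C$. Set $\varepsilon:=\tfrac12\normm{\sigma-\rho_{ABC}}{1}\le 1$.

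First I decompose the CMI with the chain rule, $I(A:C|B)_\rho=I(C:AB)_\rho-I(C:B)_\rho$. Because $\sigma_{ABC}$ is obtained from $\rho_{BC}$ by a channel acting on the non-$C$ side, the data-processing inequality gives $I(C:AB)_\sigma\le I(C:B)_\rho$. Substituting this bound yields
\begin{equation*}
I(A:C|B)_\rho\le I(C:AB)_\rho-I(C:AB)_\sigma = S(C|AB)_\sigma - S(C|AB)_\rho ,
\end{equation*}
where the marginal entropies $S(\rho_C)=S(\sigma_C)$ cancel. I then apply the Alicki--Fannes--Winter continuity bound for conditional entropy, which controls this difference by $2\varepsilon\log\dim\Hil_C+(1+\varepsilon)h\bigl(\tfrac{\varepsilon}{1+\varepsilon}\bigr)$. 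Here $\log\dim\Hil_C=|C|\log n$, and $h$ is the binary entropy.

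To reach the stated form, I use $\varepsilon\le\sqrt{\varepsilon}$ and $(1+\varepsilon)h(\varepsilon/(1+\varepsilon))\le c\sqrt{\varepsilon}$ on $[0,1]$. Both terms then become multiples of $\normm{\sigma-\rho}{1}^{1/2}$, giving a bound of the form $(c_1|C|+c_2)\normm{\sigma-\rho}{1}^{1/2}$. Tracking the constants carefully, in the normalisation of \cite{Bluhm_2023} with $\log n$ absorbed, should produce the prefactor $\sqrt2|C|+1$.

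The main obstacle is obtaining $\min\{|A|,|C|\}$ rather than only $|C|$. The argument above puts all dimension dependence on the conditioned system $C$. The naive symmetric route, writing the CMI as $S(A|B)-S(A|BC)$, would instead require $B$-dependent continuity estimates, because $\sigma_B\neq\rho_B$ in general. What I would try first is to express the excess as a difference of entropies of $A$ conditioned on a common system. For example, compare $I(A:C|B)_\rho\le I(A:BC)_\rho-I(A:B)_\rho$ with $\sigma$, using that $\sigma$ comes from a state with $A$ maximally mixed, so that $S(\sigma_{ABC})$ and $S(\sigma_{BC})$ can be related through the channel, and then apply Alicki--Fannes--Winter on $A$ alone. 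If this fails, I would fall back on the refined recoverability argument of \cite{Bluhm_2023}, which reduces to conditional entropies with $A$ as the target system. Taking the better of the two bounds then gives the minimum.
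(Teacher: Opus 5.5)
Your data-processing step and the Alicki--Fannes--Winter estimate are correct. They prove the inequality with $|C|$ in place of $\min\{|A|,|C|\}$. Take qubits, natural logarithms, $h$ the binary entropy, and $\varepsilon=\tfrac12\|\sigma-\rho\|_1\le 1$. Then $I(A:C|B)_\rho\le S(C|AB)_\sigma-S(C|AB)_\rho\le 2\varepsilon|C|\ln 2+(1+\varepsilon)h(\tfrac{\varepsilon}{1+\varepsilon})$. Since $2\varepsilon\le\sqrt{2}\,(2\varepsilon)^{1/2}$ and $(1+\varepsilon)h(\tfrac{\varepsilon}{1+\varepsilon})\le\sqrt{2\varepsilon}$ on $[0,1]$, this is at most $(\sqrt2|C|+1)\|\sigma-\rho\|_1^{1/2}$. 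The paper itself gives no argument, only the citation.

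The gap is the $|A|$ branch of the minimum, which you leave open, and it cannot be closed. Your only structural input, $I(AB:C)_\sigma\le I(B:C)_\rho$, singles out the system the recovery map does not touch. The route that naturally produces $\min\{\log d_A,\log d_C\}$ is continuity of the CMI itself, and that would need $I(A:C|B)_{\mathcal{R}(\rho_{BC})}=0$. This fails for general recovery maps. The real obstruction is therefore not $\sigma_B\neq\rho_B$ (continuity of $S(A|\cdot)$ only costs $\log d_A$).

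In fact the stated bound is false, as a classical (diagonal) example shows. Let $B$ label the nodes of a complete binary tree of depth $L$ ($L+1$ qubits), $C$ its $2^L$ leaves ($L$ qubits), and $A$ a single qubit.
\begin{itemize}
\item Let $\tau_{BC}$ give each of the $L+1$ levels total weight $1/(L+1)$, uniform within the level, with $C$ uniform on the leaves below $B$.
\item Let $\mathcal{R}_{B\to AB}$ send a non-root node to its parent, writing into $A$ whether it was the left or right child. Let it send the root to a uniformly random leaf with $A=0$.
\item Each node's weight is the sum of its children's weights, and its conditional law of $C$ is their average. Hence $\sigma:=\mathcal{R}(\tau_{BC})$ has $\sigma_{BC}=\tau_{BC}$ except on the leaf level, so $\|\sigma_{BC}-\tau_{BC}\|_1\le 2/(L+1)$.
\item Set $\rho:=\sigma$. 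Contractivity gives $\|\mathcal{R}(\rho_{-A})-\rho\|_1\le 2/(L+1)$.
\item Conditioned on any internal node of $B$, $A$ is a uniform bit determined by $C$. Hence $I(A:C|B)_\rho=\tfrac{L}{L+1}\ln 2$.
\end{itemize}
With $|A|=1$ the claimed right-hand side is $(\sqrt2+1)\sqrt{2/(L+1)}\to 0$, a contradiction. Your $|C|$ form is consistent with the example, and its Alicki--Fannes--Winter version $2\varepsilon|C|\ln 2\approx 2\ln 2$ is saturated up to a factor $2$.

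So neither of your proposed fixes, nor the cited ``refined'' argument, can deliver $\min\{|A|,|C|\}$. The proposition should carry the prefactor $\sqrt2|C|+1$ (in general, $\log\dim\mathcal{H}_C$), which is exactly what you proved. That version is all Theorem \ref{thm:decayCMIgeneralLindbladian} needs.
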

\medskip
Finally, the following proposition is a reverse Pinsker inequality.

\medskip
\begin{prop}[Hiai--Ohya--Tsukada bound {\cite{HOT81}}]
\label{prop:HOT}
For states $\rho, \sigma$ on $\mathcal{H}$ with $\sigma > 0$:
\begin{equation}
    S(\rho\|\sigma) \leq \log\|\sigma^{-1}\|_\infty \cdot \|\rho - \sigma\|_1.
    \label{eq:HOT}
\end{equation}
\end{prop}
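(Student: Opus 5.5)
The plan is to decompose $\rho$ and $\sigma$ as mixtures sharing a common component, and then use convexity of the relative entropy in its first argument. Write $\rho-\sigma=\Delta_+-\Delta_-$ with $\Delta_\pm\ge 0$ having orthogonal supports. Since both states have unit trace, $\tr\Delta_+=\tr\Delta_-=:\varepsilon=\tfrac12\|\rho-\sigma\|_1\in[0,1]$. Assuming $0<\varepsilon<1$ (the endpoint cases are trivial or follow by continuity), set $\tau_\pm:=\Delta_\pm/\varepsilon$ and
\[
\xi:=\frac{\rho-\Delta_+}{1-\varepsilon}=\frac{\sigma-\Delta_-}{1-\varepsilon}.
\]
The operator $\xi$ is a state, because $\rho-\Delta_+=\sigma-\Delta_-$ is the ``common part'' $\rho-(\rho-\sigma)_+\ge 0$. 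This gives the decompositions
\[
\rho=(1-\varepsilon)\xi+\varepsilon\tau_+,\qquad \sigma=(1-\varepsilon)\xi+\varepsilon\tau_- .
\]

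Next I would use convexity of $S(\cdot\|\sigma)$ to get
\[
S(\rho\|\sigma)\le(1-\varepsilon)S(\xi\|\sigma)+\varepsilon S(\tau_+\|\sigma),
\]
and bound the two terms separately.

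For the second term, $S(\tau_+\|\sigma)=-S(\tau_+)-\tr(\tau_+\log\sigma)\le-\tr(\tau_+\log\sigma)\le\log\|\sigma^{-1}\|_\infty$. The last step holds because $-\log\sigma\le\log\|\sigma^{-1}\|_\infty\,\Id$ and $\sigma>0$.

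For the first term, $\sigma\ge(1-\varepsilon)\xi$, and operator monotonicity of the logarithm (applied on $\supp\xi$) gives $\log\sigma\ge\log(1-\varepsilon)+\log\xi$ there. Hence $S(\xi\|\sigma)\le-\log(1-\varepsilon)$. Combining the two bounds,
\[
S(\rho\|\sigma)\le \varepsilon\log\|\sigma^{-1}\|_\infty-(1-\varepsilon)\log(1-\varepsilon)\le \varepsilon\log\|\sigma^{-1}\|_\infty+\varepsilon .
\]

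The main obstacle is absorbing the additive entropic term $-(1-\varepsilon)\log(1-\varepsilon)\le\varepsilon$ into the right-hand side $2\varepsilon\log\|\sigma^{-1}\|_\infty=\log\|\sigma^{-1}\|_\infty\,\|\rho-\sigma\|_1$. Since $\tr\sigma=1$, we have $\|\sigma^{-1}\|_\infty\ge\dim\Hil$. So whenever $\dim\Hil\ge 3$ (and in particular in every many-body application in this paper, where $\dim\Hil=n^{|\Lambda|}$) we get $\log\|\sigma^{-1}\|_\infty\ge1$, and the claim follows. In the remaining low-dimensional cases I would first try to recover the lost slack by keeping the term $-S(\tau_+)$, or by using the sharper estimate $-\tr(\tau_+\log\sigma)$ together with the fact that $\tau_+\perp\tau_-$, so that $\sigma$ is small on $\supp\tau_+$. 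Failing that, I would defer to the original argument of \cite{HOT81}, which is what the paper cites.
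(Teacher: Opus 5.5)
\paragraph{The positivity of $\xi$ fails for noncommuting states.}
The decisive step is the claim that $\xi$ is a state, and it is false whenever $\rho$ and $\sigma$ do not commute. You identify $\rho-\Delta_+$ with the ``common part'' of the two states, but $\rho-\Delta_+=\tfrac12(\rho+\sigma-|\rho-\sigma|)$. This is positive (the eigenvalue-wise minimum) when $[\rho,\sigma]=0$, but not in general.

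For $\rho=|0\rangle\langle 0|$ and $\sigma=|+\rangle\langle +|$ one has $|\rho-\sigma|=\Id/\sqrt2$, and $\rho+\sigma$ has eigenvalues $1\pm 1/\sqrt2$. Hence $\rho-\Delta_+$ has the eigenvalue $(1-\sqrt2)/2<0$. By continuity the same holds for the full-rank $\sigma_\delta=(1-\delta)|+\rangle\langle +|+\delta\Id/2$ with small $\delta>0$. So the decomposition, and with it the convexity step, is only available in the commuting case.

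\paragraph{A better choice of $\xi$ cannot repair it.}
Your two estimates use only $0\le(1-\varepsilon)\xi\le\rho$ and $(1-\varepsilon)\xi\le\sigma$, and these can force $\varepsilon$ to be much larger than the trace distance. Take $\sigma=\operatorname{diag}(1-\lambda,\lambda)$ and $\rho=|\psi\rangle\langle\psi|$ with $\psi=(\sqrt{1-\lambda},\sqrt\lambda)$.
\begin{itemize}
\item Every $0\le\omega\le\rho$ is of the form $c\rho$, and $c\rho\le\sigma$ forces $c\le\langle\psi|\sigma^{-1}|\psi\rangle^{-1}=1/2$.
\item Hence $\varepsilon\ge 1/2$, and your bound is at least $\tfrac12\log(1/\lambda)$.
\item But $\tfrac12\|\rho-\sigma\|_1=\sqrt{\lambda(1-\lambda)}$, so the right-hand side $2\sqrt{\lambda(1-\lambda)}\log(1/\lambda)$ is much smaller for small $\lambda$.
\item The true value is $S(\rho\|\sigma)=-\lambda\log\lambda-(1-\lambda)\log(1-\lambda)$.
\end{itemize}
What you have is a correct proof only for commuting $\rho,\sigma$ with $\dim\Hil\ge 3$. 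The quantum statement needs a genuinely noncommutative argument. The paper does not supply one either; it simply imports the inequality from \cite{HOT81}.

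\paragraph{The qubit case in the commuting setting.}
There is a second hole here, which you flag but misdiagnose. The absorption step needs $\log\|\sigma^{-1}\|_\infty\ge 1$, and it genuinely fails for $\dim\Hil=2$. With $\sigma=\Id/2$ and small $\varepsilon$, your bound is $\varepsilon(1+\log 2)+O(\varepsilon^2)$, which exceeds $2\varepsilon\log 2$.

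Keeping $-S(\tau_+)$ or sharpening $-\tr(\tau_+\log\sigma)$ cannot help, because in this example $S(\tau_+\|\sigma)=\log 2$ exactly. The loss is in the estimate $S(\xi\|\sigma)\le-\log(1-\varepsilon)$, whose true value here is $O(\varepsilon^2)$.

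For commuting qubits a direct argument works. Let $\sigma=\operatorname{diag}(q,1-q)$ with $q=\lambda_{\min}(\sigma)\le\tfrac12$, and $\rho=\operatorname{diag}(p,1-p)$. The map $p\mapsto S(\rho\|\sigma)$ is convex, so comparing with the chords to $p=1$ and $p=0$ gives:
\begin{itemize}
\item $S(\rho\|\sigma)\le\frac{p-q}{1-q}\log\frac1q$ for $p\ge q$;
\item $S(\rho\|\sigma)\le\frac{q-p}{q}\log\frac1{1-q}$ for $p\le q$.
\end{itemize}
Both are at most $2|p-q|\log\frac1q=\|\rho-\sigma\|_1\log\|\sigma^{-1}\|_\infty$. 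The second holds because $q\mapsto 2q\log\frac1q+\log(1-q)$ is concave and vanishes at $0^+$ and at $\tfrac12$.
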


\subsection{Lieb-Robinson Bounds}

Lieb--Robinson bounds (LR bounds in short), introduced in the seminal work~\cite{LieRob72}, establish that quantum information propagates at a finite effective speed in lattice systems with local interactions: for observables $O_A\in \mathcal{A}_A$ and $O_C \in \mathcal{A}_C$ at distance $R = d(A,C)$, the commutator $\|[\T_t(O_A), O_C]\|$ is exponentially (polynomially for long-range systems) suppressed for times $|t| \ll R/v$, where the Lieb--Robinson velocity $v$ is set by the interaction strength.   
This quasi-local structure is a key property to prove rigorously phenomena in quantum many-body theory; see~\cite{ChenLucasYin23} for a comprehensive review. 
When interactions decay as a power law $\|h_Z\| \lesssim \mathrm{diam}(Z)^{-\alpha}$ the first polynomial bounds were given in~\cite{hastings_spectral_2006}, subsequently improved in \cite{MatKomNak17} and \cite{else_improved_2020}, while the work \cite{kuwahara_strictly_2020} established strictly linear light cones for $\alpha > 2D+1$, and for $\alpha > 2D$ the best current bound is \cite{Tran_2021}. 

For open dissipative systems, Lieb--Robinson bounds controlling information propagation under Lindbladian dynamics were established by \cite{Poulin_2010} and \cite{nachtergaele_lieb-robinson_2011} and further developed in the quasi-local setting by ~\cite{barthel_quasilocality_2012}; the extension to dissipative systems with long-range interactions is due to ~\cite{Sweke2019}.

In this paper we use two specific versions: the open-system bound of Theorem~\ref{thm:LR_open} (after~\cite{nachtergaele_lieb-robinson_2011,hastings_spectral_2006}) for the CMI decay proof in Sec. \ref{sec:CMI}, and the long-range Hamiltonian bound of Theorem~\ref{thm:LR_hamiltonian} (after~\cite{kuwahara_strictly_2020}) for the thermal state results in Sec. \ref{sec:thermal}. These are stated as follows:
\medskip

\begin{thm}[LR bound for open systems 
{\cite[Theorem~2]{nachtergaele_lieb-robinson_2011}}]
\label{thm:LR_open}
Let $\mathcal{L} = \{\mathcal{L}_Z\}$ be a doubly-anchored dissipative 
interaction on $(\Gamma,d)$ with F-function $F$ and $\|\mathcal{L}\|_F < \infty$. 
Set $v := \|\mathcal{L}\|_F \cdot C_F$. For disjoint finite sets 
$A,C \subset \Gamma$, any $O_A \in \mathcal{A}_A$, and any 
$K \in \mathcal{CB}_0(C)$ :
\begin{equation}
    \|K(T_t^\Lambda(O_A))\| \leq \frac{\|K\|_{cb}\|O_A\|}{C_F}
    \left(e^{v|t|}-1\right)
    \sum_{x\in A}\sum_{y\in C}F(d(x,y)).
    \label{eq:LR_open}
\end{equation}
\end{thm}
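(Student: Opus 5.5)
The statement is the Lieb--Robinson bound for doubly-anchored dissipative interactions. I will prove it in the Heisenberg picture, using the Duhamel formula together with an iteration over chains of interaction terms.

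\textbf{Step 1: reduction to a single-site anchor.} For $x\in\La$ and $Y\subset\La$, set
\[
C_t(x,Y):=\sup\normm{K(T_t(O_x))}{}\,/\,(\normm{K}{cb}\norm{O_x}),
\]
where the supremum runs over $O_x\in\A_x$ and $K\in\mathcal{CB}_0(Y)$. More generally, let $O_A\in\A_A$. Define $\Li_A:=\sum_{Z\cap A\neq\emptyset}L_Z$ and $\Li_{A^c}:=\Li-\Li_A$. Because $\Li_{A^c}$ acts trivially on $\A_A$, we have $e^{s\Li_{A^c}}(O_A)=O_A$. This gives the Duhamel identity
\[
T_t(O_A)-O_A=\int_0^t T_{s}\big(\Li_A(e^{(t-s)\Li_{A^c}}O_A)\big)\,ds=\int_0^t T_s(\Li_A(O_A))\,ds .
\]
Since $K(O_A)=0$ for disjoint supports (recall $K(\Id)=0$ and $K$ acts on $C$), it follows that
\[
\norm{K(T_t(O_A))}\le\sum_{Z\cap A\neq\emptyset}\int_0^t\norm{K(T_s(L_Z(O_A)))}\,ds .
\]
Here $L_Z(O_A)$ is supported on $A\cup Z$, and $\norm{L_Z(O_A)}\le\normm{L_Z}{cb}\norm{O_A}$.

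\textbf{Step 2: iteration.} Define $C_t(X):=\sup_{O}\norm{K(T_t O)}/(\normm{K}{cb}\norm{O})$ over $O\in\A_X$. Step 1 gives
\[
C_t(X)\le C_0(X)+\sum_{Z\cap X\neq\emptyset}\normm{L_Z}{cb}\int_0^t C_s(Z)\,ds ,
\]
where I split $A\cup Z$ by linearity, or simply use $C_s(Z\cup X)$. The initial term $C_0(Z)\le 2\cdot\mathbf 1[Z\cap C\neq\emptyset]$ is where $K\in\mathcal{CB}_0(C)$ is used. I then iterate. The $n$-th term is a sum over chains $Z_1,\dots,Z_n$ with $Z_1\cap A\neq\emptyset$, $Z_{i}\cap Z_{i+1}\neq\emptyset$ and $Z_n\cap C\neq\emptyset$, weighted by $t^n/n!\prod\normm{L_{Z_i}}{cb}$. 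To close the bound, pick points $x\in A\cap Z_1$, $z_i\in Z_i\cap Z_{i+1}$ and $y\in Z_n\cap C$, and bound each sum over $Z_i\ni z_{i-1},z_i$ by $\normm{\Li}{F}F(d(z_{i-1},z_i))$ using \eqref{eq:FNormLindbladian}. The resulting $n-1$ intermediate sums over sites are then controlled by the convolution condition \eqref{eq:convolutionCF}, which yields
\[
\sum_{x\in A}\sum_{y\in C}\normm{\Li}{F}^nC_F^{\,n-1}F(d(x,y)).
\]
Summing over $n\ge1$ gives $\frac{1}{C_F}(e^{vt}-1)\sum_{x,y}F(d(x,y))$ with $v=\normm{\Li}{F}C_F$.

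\textbf{Main obstacle.} The main obstacle is getting the constant exactly $1$ rather than $2$ (from $\norm{K(O)}\le\normm{K}{cb}\norm{O}$ for $O$ not orthogonal to $\Id$). To handle this, I would use that $T_s$ is a unital complete contraction and keep the final step as $\norm{K(L_{Z_n}(\cdots))}\le\normm{K}{cb}\normm{L_{Z_n}}{cb}\cdots$, so no factor $2$ appears. A second subtlety is overcounting chains in which a site choice is not unique; this only increases the bound, so it is harmless. Since the statement is cited from \cite[Theorem~2]{nachtergaele_lieb-robinson_2011}, in the paper one may simply invoke that reference, and the sketch above is the standard proof.
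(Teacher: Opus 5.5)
\textbf{Gap in Step 2.} After Step~1 the integrand is $K(T_s(L_Z(O_A)))$. Here $L_Z(O_A)$ is supported on $A\cup Z$, not on $Z$, so the recursion $C_t(X)\le C_0(X)+\sum_{Z\cap X\neq\emptyset}\normm{L_Z}{cb}\int_0^t C_s(Z)\,ds$ is not justified.

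``Splitting by linearity'' does not help. An operator on $A\cup Z$ is not a sum of operators on $A$ and on $Z$ with norms controlled by $\norm{L_Z(O_A)}$.

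Using $C_s(X\cup Z)$ instead makes the supports grow, $X_i=A\cup Z_1\cup\dots\cup Z_i$. The constraint at step $i+1$ is then $Z_{i+1}\cap X_i\neq\emptyset$, not $Z_{i+1}\cap Z_i\neq\emptyset$. So the iterated sum runs over trees rather than chains, and each step carries a factor $\sum_{Z\cap X_i\neq\emptyset}\normm{L_Z}{cb}$ that grows like $\abs{X_i}$. These factors eat the $1/n!$. The method therefore gives at best a series that converges only for short times, not $\frac{1}{C_F}(e^{v|t|}-1)\sum_{x,y}F(d(x,y))$.

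In the Hamiltonian case the forward iteration from $A$ works only because $\tau_t$ is an automorphism. The Jacobi identity then isolates $[\tau_s(h_Z),B]$, and the remaining term generates an isometric flow. Neither tool is available for the non-multiplicative semigroup $T_t$.

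The factor $2$ you single out as the main obstacle never appears, since $\norm{K(O)}\le\normm{K}{cb}\norm{O}$ always. The real obstacle is the support growth.

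\textbf{The missing idea} is to anchor the iteration at the support of $K$, as in \cite{nachtergaele_lieb-robinson_2011}. The paper only cites that result, but its own proof of Lemma~\ref{lem:LiebRobinsonTypeBoundGeneral} runs exactly this argument in the Schr\"odinger picture.

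\begin{itemize}
\item Set $\Li_{C^c}:=\sum_{Z\cap C=\emptyset}L_Z$. Since $K$ and $\Li_{C^c}$ act on different tensor factors, $K\Li_{C^c}=\Li_{C^c}K$. Hence $\frac{d}{dt}K(T_t(O_A))=\Li_{C^c}K(T_t(O_A))+\sum_{Z\cap C\neq\emptyset}K(L_Z(T_t(O_A)))$.
\item Duhamel's formula, $K(O_A)=0$ and contractivity of $e^{t\Li_{C^c}}$ give $\norm{K(T_t(O_A))}\le\normm{K}{cb}\sum_{Z\cap C\neq\emptyset}\int_0^t\norm{L_Z(T_s(O_A))}\,ds$.
\item Because $L_Z(\Id)=0$, $L_Z$ itself lies in $\mathcal{CB}_0(Z)$. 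Define $C(Y,t):=\sup_{K'\in\mathcal{CB}_0(Y)}\norm{K'(T_t(O_A))}/\normm{K'}{cb}$. The recursion $C(Y,t)\le C(Y,0)+\sum_{Z\cap Y\neq\emptyset}\normm{L_Z}{cb}\int_0^t C(Z,s)\,ds$ then closes on single sets $Z$, with $C(Z,0)\le\norm{O_A}\,\mathbf{1}[Z\cap A\neq\emptyset]$.
\item The chains now run from $C$ to $A$ with $Z_i\cap Z_{i+1}\neq\emptyset$. Your final convolution estimate applies verbatim and gives exactly the stated constant.
\end{itemize}
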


\medskip

\begin{thm}[LR bound for  long-range dissipative processes \cite{hastings_spectral_2006}]
\label{th:HKLR}
For $\alpha>D$, there exist $c,v,\kappa$ such that for local observables $O_A,O_C$ supported on $A,C\subset\Lambda$ respectively, at distance $R=d(A,C)>0$,
\begin{equation}\label{eq:Hastings}
\norm{[\T_t(O_A),O_C]}\ \le\ c\norm{O_A}\norm{O_C}\abs{A}\abs{C}\dfrac{e^{v \abs{t}}-1}{R^\alpha}.
\end{equation}
where the positive constants, $c$ and $v$, depend only on the interaction of the Hamiltonian and the metric of the lattice.
\end{thm}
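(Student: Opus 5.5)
The plan is to derive this bound as a special case of the open-system Lieb--Robinson bound, Theorem~\ref{thm:LR_open}. I specialise it to the power-law $F$-function $F_\alpha(r)=(1+r)^{-\alpha}$ and choose the map $K$ to be a commutator. There are three steps: check that $F_\alpha$ is a genuine $F$-function for $\alpha>D$, instantiate $K$, and bound the double sum over $A\times C$ by $|A||C|R^{-\alpha}$.

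\textbf{Step 1 ($F_\alpha$ is an $F$-function).} Uniform summability $|||F_\alpha|||<\infty$ for $\alpha>D$ follows from the regularity estimate \eqref{eq:SR}: the number of sites at distance about $r$ from a point grows like $r^{D-1}$, so the sum is bounded by $\sum_r \kappa_D (1+r)^{D-1-\alpha}<\infty$. The convolution condition \eqref{eq:convolutionCF} is the step needing care, and I expect it to be the main (if standard) obstacle. Fix $x,y$ and split the sum over $z$ into two parts. By the triangle inequality, either $d(x,z)\ge d(x,y)/2$ or $d(z,y)\ge d(x,y)/2$. In the first case I use $1+d(x,y)/2\ge (1+d(x,y))/2$, which gives $F_\alpha(d(x,z))\le 2^\alpha F_\alpha(d(x,y))$; the second case is symmetric. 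Hence
\[
\sum_z F_\alpha(d(x,z))F_\alpha(d(z,y)) \le 2^\alpha F_\alpha(d(x,y))\sum_z\bigl[F_\alpha(d(z,y))+F_\alpha(d(x,z))\bigr]\le 2^{\alpha+1}|||F_\alpha|||\,F_\alpha(d(x,y)),
\]
so $C_{F_\alpha}\le 2^{\alpha+1}|||F_\alpha|||<\infty$.

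\textbf{Step 2 (choice of $K$).} I take $K(\cdot)=[\cdot,O_C]$. It is a linear map on $\A_C$ that annihilates $\Id_C$, so it lies in $\mathcal{CB}_0(C)$. Its cb-norm satisfies $\|K\|_{cb}\le 2\|O_C\|$, since amplifying by $\Id_{n\times n}$ gives the commutator with $O_C\otimes\Id$. Because the support of $O_C$ is disjoint from $A$ and the evolution acts on all of $\La$, the quantity $K(\T_t(O_A))$ is exactly $[\T_t(O_A),O_C]$.

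\textbf{Step 3 (bounding the sum).} For $x\in A$ and $y\in C$ we have $d(x,y)\ge R$, so $F_\alpha(d(x,y))\le (1+R)^{-\alpha}\le R^{-\alpha}$. Therefore
\[
\sum_{x\in A}\sum_{y\in C}F_\alpha(d(x,y))\le |A||C|R^{-\alpha}.
\]
Inserting Steps 2 and 3 into \eqref{eq:LR_open} gives \eqref{eq:Hastings} with $c=2/C_{F_\alpha}$ and $v=\|\Li\|_{F_\alpha}C_{F_\alpha}$. Both constants depend only on the interaction norm and the lattice geometry (through $\kappa_D$, $D$ and $\alpha$), as claimed.
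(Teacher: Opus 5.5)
Your proof is correct. The paper gives no proof of this statement: it simply imports the bound from Hastings--Koma, whose original result concerns Hamiltonian (closed-system) dynamics. You instead derive it inside the paper as a corollary of the open-system bound, Theorem~\ref{thm:LR_open}. You check that $F_\alpha$ is an $F$-function, take $K=[\cdot,O_C]\in\mathcal{CB}_0(C)$ with $\normm{K}{cb}\le 2\norm{O_C}$, and bound the double sum by $\abs{A}\abs{C}(1+R)^{-\alpha}$. Your splitting argument giving $C_{F_\alpha}\le 2^{\alpha+1}|||F_\alpha|||$ is the standard one, and it is the same lattice input the Hastings--Koma proof needs.

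What your route buys:
\begin{itemize}
\item It actually covers dissipative semigroups, as the theorem's title claims.
\item It gives explicit constants $c=2/C_{F_\alpha}$ and $v=\normm{\Li}{F_\alpha}C_{F_\alpha}$, the same velocity used throughout the paper.
\item It gives the slightly stronger decay $(1+R)^{-\alpha}$.
\item It makes explicit the hypothesis that is really needed, doubly-anchored decay $\normm{\Li}{F_\alpha}<\infty$, which the statement leaves implicit. The diameter condition \eqref{eq:powerlaw} alone yields this only with a smaller exponent once $k\ge 3$.
\end{itemize}

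What the citation buys is the Hamiltonian case with both signs of $t$. That is how the bound is used in the proof of Theorem~\ref{thm:LRboundLongRangeAppendix}. You recover it too, but you should say so explicitly:
\begin{itemize}
\item Set $L_Z=i[h_Z,\cdot]$, so that $\normm{L_Z}{cb}\le 2\norm{h_Z}$ and $\normm{\Li}{F_\alpha}\le 2\normm{H}{F}$.
\item Handle $t<0$ by replacing $H$ with $-H$. For a genuine Lindbladian only $t\ge 0$ makes sense.
\end{itemize}

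One small correction: the identity $K(\T_t(O_A))=[\T_t(O_A),O_C]$ comes from extending $K$ as $K\otimes\mathrm{id}_{\La\setminus C}$, not from disjointness of the supports. Disjointness ($R>0$) is needed only to meet the hypotheses of Theorem~\ref{thm:LR_open}.
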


\medskip

\begin{thm}[LR bound for long-range Hamiltonians 
{\cite[Theorem~1.2]{kuwahara_strictly_2020}}]
\label{thm:LR_hamiltonian}
Let $H = \sum_{Z \subset \Lambda} h_Z$ be a long-range Hamiltonian on 
a $D$-dimensional lattice with $\|h_Z\| \leq J|Z|^{-\alpha}$ and 
$\alpha > 2D+1$. 
For disjoint regions $A,C\subset \La$ at distance $R=d(A,C)>0$ and observables 
$O_A \in \mathcal{A}_A$, $O_C \in \mathcal{A}_C$:

\begin{equation}\label{eq:LRKuwahara}
\|[\T_t(O_A),O_C]\|
\le
c\, \|O_A\|\,\|O_C\|\,
\frac{|t|^{D+1}}{(|R - \tilde{v}|t||)^{\alpha - D}},
\end{equation}
for a constant $c$.
\end{thm}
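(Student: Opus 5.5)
This bound is imported from \cite[Theorem~1.2]{kuwahara_strictly_2020}, so in the paper we simply invoke it. If I had to reprove it, I would follow the Kuwahara--Saito strategy: construct a local approximation of the Heisenberg-evolved operator rather than bound the commutator directly. Concretely, I would aim to build an operator $\widetilde{O}_A(t)$ supported on the inflation $A(\tilde v|t|+r)$ such that
\begin{equation*}
\norm{\T_t(O_A)-\widetilde{O}_A(t)}\le c\,\norm{O_A}\,\frac{|t|^{D+1}}{r^{\alpha-D}} .
\end{equation*}
Taking $r=R-\tilde v|t|$, the approximant commutes with $O_C$. The commutator is then bounded by twice this error times $\norm{O_C}$, which is \eqref{eq:LRKuwahara}.

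\textbf{Step 1 (truncation).} Split $H=H_{\le \ell}+H_{>\ell}$, where $H_{>\ell}$ collects the terms with $\mathrm{diam}(Z)>\ell$. In the interaction picture, the error from dropping $H_{>\ell}$ during a time interval $\Delta t$ is at most $\Delta t$ times the total norm of the long terms that touch the current support. By \eqref{eq:powerlaw} and $(\kappa,\nu)$-regularity, this is of order $\Delta t\,|\mathrm{supp}|\,\ell^{D-\alpha}$. The short-range part $H_{\le\ell}$ has an ordinary (exponential-type) Lieb--Robinson bound, obtained for instance from Theorem~\ref{thm:LR_open} with its Hamiltonian analogue, but its velocity grows with $\ell$.

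\textbf{Step 2 (multiscale time slicing).} A single cutoff only gives a polynomial light cone $t\sim R^{\gamma}$ with $\gamma>1$. To obtain a linear light cone I would slice $[0,t]$ into pieces of length $\Delta t=O(1)$ and introduce a hierarchy of length scales $\ell_k=2^k\ell_0$. At level $k$, the long-range terms of scale between $\ell_{k}$ and $\ell_{k+1}$ are treated perturbatively. Their action enlarges the support only by $O(\ell_k)$, and this happens with amplitude $\ell_k^{D-\alpha}$ times a volume factor.

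Iterating, the evolved operator becomes a sum over ``paths'' of support growth. Summing over paths, each jump of size $\ell$ costs weight $\ell^{D-\alpha}\cdot\ell^{D}$: one factor of $\ell^D$ comes from the choice of the jump's location on the support boundary, the other from the growing support. Since $\alpha>2D+1$, this weight is summable even against one extra factor of $\ell$. The expected total jump distance per unit time is therefore $O(1)$, which defines $\tilde v$. The tail, in which the accumulated jumps exceed $\tilde v|t|+r$, is controlled by the largest single jump and so decays like $r^{-(\alpha-D)}$. The factor $|t|^{D+1}$ comes from counting the time slices at which the big jump can occur and the volume $\sim|t|^D$ of the support at that time.

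\textbf{Main obstacle.} The delicate part is the combinatorics of Step 2. A naive Dyson expansion lets the support volume multiply errors and reproduces only the Hastings--Koma exponential in $t$ of Theorem~\ref{th:HKLR}. To avoid this, I would try first to project after each slice onto the inflated region, using the partial trace with the normalized identity as the projection. This keeps every intermediate operator supported on a ball of radius at most $\tilde v s+r$ at time $s$. The errors then add instead of multiply, and the threshold $\alpha>2D+1$ is exactly what makes the resulting series in $\ell_k$ converge.
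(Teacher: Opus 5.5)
Your proposal agrees with the paper: the statement is quoted from \cite[Theorem~1.2]{kuwahara_strictly_2020} with no proof, so invoking that theorem is exactly what the paper does, and you rightly read the decay hypothesis as \eqref{eq:powerlaw}, since $\|h_Z\|\le J|Z|^{-\alpha}$ imposes no spatial decay on a $k$-local $H$. Your reproof sketch, however, is only heuristic (the claim that projecting after each slice makes errors add rather than multiply is where the whole difficulty lies, and it is not justified), and its intermediate target cannot hold as written: for $H=\frac12\sum_{x\neq y}J_{xy}Z_xZ_y$ with $J_{xy}=J(1+d(x,y))^{-\alpha}$ and $O_A=X_0$ one has $\T_t(X_0)=e^{2itZ_0\sum_{y}J_{0y}Z_y}X_0$, whose distance from $\A_{A(\rho)}$, $\rho=\tilde v|t|+r$, is at least $|\sin(2t\sum_{d(0,y)>\rho}J_{0y})|$, i.e.\ of order $|t|\,r^{D-\alpha}$ as $t\to 0$ and hence larger than $c|t|^{D+1}r^{D-\alpha}$, while for extended $A$ the same computation gives an error growing with the size of $A$; so any rigorous version needs $|t|^{D+1}$ replaced by something like $\max\{|t|,|t|^{D+1}\}$ together with an $A$-dependent prefactor, and the statement as transcribed needs the same repair (e.g.\ $\norm{[\T_t(X_0),X_y]}=2|\sin(2tJ_{0y})|$ already violates it for $|t|\lesssim 1/R$).
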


\section{From dynamical to information theoretic and static properties}\label{sec:long-range}

We now show how dynamical properties of the dissipative systems as defined in the previous section imply static properties on their fixed points. Our results apply to fixed-points of both short and long-range dissipative processes.


\subsection{Decay of Conditional Mutual Information} \label{sec:CMI}

Let $A\sqcup B\sqcup C\subset\Lambda$ be a disjoint tripartition of $\Lambda$, with $B$ shielding\footnote{This means that every path from $A$ to $C$ in the lattice must pass through $B$.} $A$ from $C$ as in Figure \ref{fig:geometryCMI} with $R:=d(A,C)>0$. In this section we will see how the CMI $I(A:C|B)$ decays with $R$.

In order to prove the decay of the CMI we use of the time evolution of a long-range Lindbladian as the recovery map, however, the time evolution of a general Lindbladian is non-local, so we need to localize it in the region $AB$ with an error that grows exponentially with time but decays with the distance $R$. On the other hand, due to rapid mixing, the Lindbladian evolution will converge into the fixed-point (the desired state) exponentially fast in time but will grow with $R$. Then, as we have seen, there are two terms that have to be compensated by a correct choice of $t(R)$.

\subsubsection{Locality of the time-evolution}

Let us begin proving a Lieb-Robinson-type bound inspired by the Lemma $13$ from \cite{brandao_area_2015}. Let \[\sigma_{-A}(t):=(T^\Lambda_t)^* (\sigma_{-A})=(T^\Lambda_t)^*\left (\dfrac{\Id_A}{d_A}\otimes \tr_A(\sigma)\right ) ,\]be the time evolution of the reduced state $\sigma_{-A}$.
\medskip
 \begin{lem}\label{lem:LiebRobinsonTypeBoundGeneral}
       Let $\Li=\{L_Z\}_{Z\subset \mathcal{P}(\G)}$ be a $k$-local Lindbladian doubly anchored to a $F$-function $F$, satisfying frustration freeness. Let $A\sqcup B\sqcup C=\La\subset \subset\G$ be a tripartition of $\La$ as in Figure \ref{fig:geometryCMI}, and $R=d(A,C)$. Then, for each region\footnote{Note that, $A\cap X=\emptyset$.} $X\subset BC$ and superoperator $K\in \mathcal{CB}_0(X)$, the following bound holds
        \begin{equation}
            \normm{K(\sigma_{-A}(t))}{1}\leq \dfrac{\abs{A}\abs{X}}{C_F} (e^{\normm{\Li}{F}C_Ft}-1) \, F(d(X,A)) 
        \end{equation}
        where $C_F$ is defined in Equation \eqref{eq:convolutionCF}.
    \end{lem}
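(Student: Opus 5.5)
Our plan is to adapt Lemma 13 of \cite{brandao_area_2015}, using frustration freeness to reduce the evolution of $\sigma_{-A}$ to a dynamics whose only non-stationary part is generated near $A$. Here $\sigma$ is a fixed point of $\Li$. Write $\Li^*=\sum_Z L_Z^*$ and split the terms into those with $Z\cap A\neq\emptyset$, collected in $\Li_A^*$, and those with $Z\cap A=\emptyset$, collected in $\Li_{A^c}^*$.

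\textbf{Step 1: the initial state is stationary away from $A$.} For $Z\cap A=\emptyset$ the map $L_Z^*$ commutes with $\tr_A$. Frustration freeness gives $L_Z^*(\sigma)=0$, and hence
\[
L_Z^*(\sigma_{-A})=\frac{\Id_A}{d_A}\otimes \tr_A(L_Z^*(\sigma))=0 .
\]
Thus $\Li^*(\sigma_{-A})=\Li_A^*(\sigma_{-A})$, so the deviation from stationarity is supplied only by terms touching $A$. Duhamel's formula then gives
\[
\sigma_{-A}(t)=e^{t\Li_{A^c}^*}(\sigma_{-A})+\int_0^t e^{(t-s)\Li^*}\Li_A^*\, e^{s\Li_{A^c}^*}(\sigma_{-A})\,ds,
\]
in which the first term equals $\sigma_{-A}$.

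\textbf{Step 2: reduce to a Heisenberg-picture estimate.} Since $K\in\mathcal{CB}_0(X)$ with $X\cap A=\emptyset$, we need to control $K(\sigma_{-A})$ itself. This term is not obviously zero, so a cleaner route is duality. We have $\normm{K(\sigma_{-A}(t))}{1}=\sup_{\norm{O}\le1}\abs{\tr(\sigma_{-A}\,T_t(K^*(O)))}$. I would instead compare $\sigma_{-A}(t)$ with $\sigma(t)=\sigma$ after applying a depolarizing map on $A$: let $\mathcal{D}_A=\frac{\Id_A}{d_A}\tr_A$, so that $\sigma_{-A}=\mathcal{D}_A(\sigma)$, and write
\[
\sigma_{-A}(t)=\mathcal{D}_A(\sigma)+\big(T_t^*\mathcal{D}_A-\mathcal{D}_AT_t^*\big)(\sigma).
\]
The obstacle is that $K(\mathcal{D}_A\sigma)=\mathcal{D}_A K(\sigma)$ need not vanish. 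Applying Step 1 shows that the relevant quantity is the commutator $[T_t^*,\mathcal{D}_A]$ tested against $K$. In the Heisenberg picture this amounts to bounding $\normm{K^\dagger \circ ( T_t - \ldots)}{}$, i.e. how much $T_t$ acting on an operator localized on $X$ spreads onto $A$. That is exactly what Theorem~\ref{thm:LR_open} controls: with $\mathcal{D}_A^\dagger-\mathrm{id}\in\mathcal{CB}_0(A)$ of cb-norm at most $2$, and roles of $A$ and $X$ swapped, we get
\[
\norm{(\mathcal{D}_A^\dagger-\mathrm{id})T_t(O_X)}\le \frac{2\norm{O_X}}{C_F}\big(e^{v t}-1\big)\sum_{x\in A,\,y\in X}F(d(x,y)),
\]
with $v=\normm{\Li}{F}C_F$.

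\textbf{Step 3: conclude.} Bounding the double sum by $\abs{A}\abs{X}F(d(X,A))$, using that $F$ is non-increasing, gives the claimed form; the constants in front (the factor $2$ and $\normm{K}{cb}$) are to be absorbed or normalized to match the statement. The main obstacle is Step 2: one must arrange the duality so that the frustration-free invariance of Step 1 removes the zeroth-order term. This is needed because otherwise $K(\sigma_{-A})$ at $t=0$ contributes a nonzero term, and the bound, which vanishes at $t=0$, could not hold. To handle it, I would use that the $t=0$ term is cancelled within the Duhamel expansion, via Step 1, so that only the integrand involving $\Li_A^*$ survives, and then apply the dual Lieb--Robinson bound termwise in $Z\ni x\in A$, using $\sum_{Z\ni x,y}\normm{L_Z}{cb}\le\normm{\Li}{F}F(d(x,y))$ and the convolution constant $C_F$ to resum.
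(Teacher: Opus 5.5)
There is a genuine gap, and it sits where you flagged the difficulty.

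\textbf{First problem: the zeroth-order term.} Your Step 1 gives $L_Z^*(\sigma_{-A})=0$ only for Lindbladian terms with $Z\cap A=\emptyset$. It says nothing about a general $K\in\mathcal{CB}_0(X)$. In your Duhamel expansion, $K(\sigma_{-A}(t))=K(\sigma_{-A})+\int_0^t K e^{(t-s)\Li^*}\Li_A^*(\sigma_{-A})\,ds$, and the first term is not cancelled. Take $K(\cdot)=[\cdot,B_X]$, which lies in $\mathcal{CB}_0(X)$. Then $K(\sigma_{-A})=\frac{\Id_A}{d_A}\otimes[\sigma_{BC},B_X]$, which is nonzero unless $B_X$ commutes with $\sigma_{BC}$. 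The right-hand side of the lemma vanishes at $t=0$, and by continuity stays small for small $t>0$, so no argument can prove the bound for arbitrary $K\in\mathcal{CB}_0(X)$.

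It does hold for $K$ supported on $X$ with $K(\sigma)=0$, since then $K(\sigma_{-A})=\frac{\Id_A}{d_A}\otimes\tr_A K(\sigma)=0$. In particular it holds for $K=L_Z^*$ with $Z\subset X$, by frustration freeness, with a factor $\|K\|_{cb}$ on the right. This is the case the paper's proof actually treats (it takes $K\in\Li_X$), and it is all that Lemma~\ref{lem:differenceTimeEvolutionsGeneral} uses.

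\textbf{Second problem: Step 2 bounds the wrong quantity.} Even with that restriction, the Heisenberg route does not give the needed estimate. Your bound on $\|(\mathcal{D}_A^\dagger-\mathrm{id})T_t(O_X)\|$ from Theorem~\ref{thm:LR_open} is correct. However, $\tr(\sigma_{-A}(t)O_X)-\tr(\sigma O_X)=\tr\big(\sigma(\mathcal{D}_A^\dagger-\mathrm{id})T_t(O_X)\big)$, so it only controls the $X$-marginal $\|\tr_{X^c}(\sigma_{-A}(t)-\sigma)\|_1$.

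The lemma needs $\|K(\sigma_{-A}(t))\|_1$ in full trace norm. By duality the test operators are then $K^\dagger(O)$ with $O$ arbitrary on $\La$, which are not localized on $X$, so Theorem~\ref{thm:LR_open} does not apply. Marginals cannot settle this either. A state $\rho=\sum_i p_i\rho^{(i)}_X\otimes|i\rangle\langle i|$ (with $|i\rangle$ on $X^c$) and $\sum_i p_i\rho_X^{(i)}=\sigma_X$ has the correct $X$-marginal. Yet $\|K(\rho)\|_1=\sum_i p_i\|K(\rho^{(i)}_X)\|_1>0$ in general.

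Your fallback of bounding the Duhamel integrand termwise by a ``dual Lieb--Robinson bound'' fails for a similar reason. $\|Ke^{u\Li^*}L_{Z'}^*\|_{1\to1}$ is not small even at $u=0$, because superoperators on disjoint regions compose to something nonzero. Smallness appears only after evaluating on the state and using frustration freeness again at every intermediate time.

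\textbf{The missing idea.} Run the Lieb--Robinson iteration in the Schr\"odinger picture on
$$C(X,t)=\sup\|L(\sigma_{-A}(t))\|_1/\|L\|_{cb},$$
with the supremum over terms supported in $X$.
\begin{enumerate}
\item Split $\Li=\Li_{\La\setminus X}+\tilde{\Li}_X$. Since $K$ commutes with $\Li^*_{\La\setminus X}$ and $e^{t\Li^*_{\La\setminus X}}$ is trace-norm contractive, you get $C(X,t)\le C(X,0)+\sum_{Z\cap X\neq\emptyset}\|L_Z\|_{cb}\int_0^t C(Z,s)\,ds$.
\item Frustration freeness gives $C(Z,0)=0$ unless $Z\cap A\neq\emptyset$. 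This is your Step 1, now used as the base case of the recursion rather than only at $t=0$.
\item Iterate, and bound each chain $Z_1,\dots,Z_m$ running from $X$ to $A$ by $|X||A|\,\|\Li\|_F^m C_F^{m-1}F(d(X,A))$, using the doubly-anchored property and the convolution constant.
\item Resumming the exponential series yields the factor $(e^{vt}-1)/C_F$.
\end{enumerate}
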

  
\begin{proof}[Proof of Lemma \ref{lem:LiebRobinsonTypeBoundGeneral}]
This proof is inspired by that of Lemma $13$ from \cite{brandao_area_2015}. For each $X\subset BC\subset \La $, we denote by $\tilde\Li_X:=\sum_{Z\cap X\neq \emptyset}L_Z$ and $\Li_X=\summ{Z\subset X}L_Z$.

We define the following quantity
\begin{equation}
    C(X,t):=\sup_{L\in \Li_X}\dfrac{\normm{L(\sigma_{-A}(t))}{1}}{\normm{L}{cb}}.
\end{equation}
Frustration freeness implies that, for any $L_Z$ with $Z\subset X$ and $X\cap A=\emptyset$, then $L_Z(\sigma_{-A})=0$, because $L_Z$ commutes with the (normed) partial trace.

What follows replicates the proof technique of standard Lieb-Robinson bounds \cite{LieRob72,hastings2010localityquantumsystems}. For $K\in \Li_X$, differentiating with respect to time $K(\sigma_{-A}(t))$, we can separate in two terms
\begin{equation}
    \dfrac{d}{dt}K(\sigma_{-A}(t))=K\Li^*(\sigma_{-A}(t))=\Li^*_{\La\setminus X}K(\sigma_{-A}(t))+K\tilde \Li_X^* (\sigma_{-A}(t)),
\end{equation}
since $K$ is time-independent. Then, we can implicitly solve the equation above to obtain an expression for $K(\sigma_{-A}(t))$ (see  \cite{brandao_area_2015}), which is the Duhamel formula (see \cite{renardy_introduction_2006}):
\begin{equation}
    K(\sigma_{-A}(t))=e^{t\Li^*_{\La\setminus X}}K(\sigma_{-A}(0))+\int_0^te^{(t-s)\Li^*_{\La\setminus X}}K\tilde\Li^*_X(\sigma_{-A}(s))ds.
\end{equation}
Taking norms,
\begin{equation}
    \normm{K(\sigma_{-A}(t))}{1}\leq \normm{K(\sigma_{-A}(0))}{1}+\normm{K}{cb}\summ{Z\cap X\neq \emptyset}\int_0^t\normm{L_Z}{cb}C(Z,s)ds,
\end{equation}
since $e^{t\Li^*_{\La\setminus X}}$ contracts the one-norm. This directly implies that:
\begin{equation}
    C(X,t)\leq C(X,0)+\summ{Z\cap X\neq\emptyset}\normm{L_Z}{cb}\int_0^tC(Z,s)ds.
    \label{eq:recursionEquation}
\end{equation}
By recursively applying Equation \eqref{eq:recursionEquation}, we obtain that
\begin{equation}
    C(X,t)\leq \summ{m=0}^\infty a_m\dfrac{t^m}{m!},\label{eq:CXTtaylor}
\end{equation}
where
\begin{equation}
    a_m=\summ{Z_1\cap X\neq \emptyset}\dots \summ{Z_{m-1}\cap Z_m\neq \emptyset}\normm{L_{Z_1}}{cb}\dots \normm{L_{Z_m}}{cb}C(Z_m,0).
\end{equation}
In the case $Z\cap A=\emptyset$, due to frustration freeness, $C(Z,0)=0$: this happens because any operator $L_Z$ for $Z\subset X$  acting on $\rho(0)=(T^\Lambda_0)^*(\sigma_{-A})=\sigma_{-A}$ gives zero, since $L_Z^*\rho(0)=L_Z^*\sigma_{-A}=\dfrac{\Id_A}{d_A}\otimes \tr_A(\underbrace{L_Z^*(\sigma)}_{=0})=0$. If $Z\cap A\neq \emptyset$, then,
\begin{equation}
    a_m=\summ{Z_1\cap X\neq \emptyset}\dots \summ{Z_{m-1}\cap Z_m\neq \emptyset\\Z_m\cap A\neq \emptyset }\normm{L_{Z_1}}{cb}\dots \normm{L_{Z_m}}{cb}.
\end{equation}
since $C(Z,0)\leq 1$. The Lindbladian $\mathcal{L}$ is doubly anchored governed by $F$, so the first term $a_1$ is bounded by 
\begin{equation}
    a_1=\summ{Z\cap A\neq \emptyset\\Z\cap X\neq \emptyset}\normm{L_{Z}}{cb}\leq \normm{\Li}{F}\abs{A}\abs{X} \, F(d(A,X)).
\end{equation}
This follows from summing Eq.~\eqref{eq:FNormLindbladian} for all the point $x\in A$ and $y\in X$.
For the general term
\begin{equation}
    \summ{Z_1\cap X\neq \emptyset}\dots \summ{Z_{m-1}\cap Z_m\neq \emptyset\\Z_m\cap A\neq \emptyset }\normm{L_{Z_1}}{cb}\dots \normm{L_{Z_m}}{cb},
\end{equation}
the idea is to use the convolution property of $F$-functions (Equation \eqref{eq:convolutionCF}). For that, we need to decompose the above sum into sums over points in the intersection between consecutive $Z_i$ sets as follows
\begin{align}
    \summ{i_1,\dots,i_{m-1}\in \La\\i_0\in X\\i_m\in A}&\summ{Z_1\ni i_0,i_1}\dots \summ{Z_m \ni i_{m-1},i_m}\normm{L_{Z_1}}{cb}\dots \normm{L_{Z_m}}{cb}\\
    &\nonumber=\summ{i_1,\dots,i_{m-1}\in \La\\i_0\in X\\i_m\in A}\underbrace{\summ{Z_1\ni i_0,i_1}\normm{L_{Z_1}}{cb}}_{\leq \normm{\Li}{F}F(d(i_0,i_1))}\dots \underbrace{\summ{Z_m \ni i_{m-1},i_m}\normm{L_{Z_m}}{cb}}_{\leq \normm{\Li}{F}F(d(i_{m-1},i_m))}\\
    &\nonumber\leq\normm{\Li}{F}^m \summ{i_1,\dots,i_{m-1}\in \La\\i_0\in X\\i_m\in A}F(d(i_{0},i_1))\dots F(d(i_{m-1},i_m))\\
    &\nonumber\leq  \abs{X}\abs{A} \normm{\Li}{F}^m C_F^{m-1} \, F (d(X,A)) ,
\end{align}
where $C_F$ is defined in Equation \eqref{eq:convolutionCF}. Hence, for every $m\in\mathbb{N}$,
\begin{equation}
     a_m\leq \abs{X}\abs{A} \normm{\Li}{F}^m C_F^{m-1} \, F (d(X,A)).
\end{equation}
Therefore, we conclude that 
\begin{equation}
    \normm{K(\sigma_{-A}(t))}{1}\leq C(X,t)\leq \dfrac{\abs{A}\abs{X}}{C_F} (e^{\normm{\Li}{F}C_Ft}-1) F(d(X,A)) ,
\end{equation}
where in the last step we used Eq.~\eqref{eq:CXTtaylor} together with the estimate on $a_m$.
\end{proof}  

The Lieb-Robinson bound above allows to localize the time evolution to $AB$ appplied to the reduced state $\sigma_{-A}$.

    \medskip
    \begin{lem}\label{lem:differenceTimeEvolutionsGeneral}
       Let $\Li=\{L_Z\}_{Z\subset \mathcal{P}(\G)}$ be a  frustration free $k$-local Lindbladian doubly anchored to a $F$-function $F$. Let $A\sqcup B\sqcup C=\La\subset \subset\G$ be a tripartition of $\La$ as above, and $R=d(A,C)$. Then,
       \begin{equation}
            \normm{\left (\left (T^\Lambda_t\right )^*-\left (T^{AB}_t\right )^*\right )(\sigma_{-A})}{1}\leq p(\abs{A},\abs{C}) (e^{\normm{\Li}{F}C_Ft}-1+t)R^DF(R/2)  \, ,     
        \end{equation}
       where $p:\mathbb{N}\times\mathbb{N}\to \mathbb{N}$ is a polynomial function.
    \end{lem}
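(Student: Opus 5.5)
The approach is a Duhamel interpolation between the two semigroups, followed by the Lieb--Robinson-type bound of Lemma~\ref{lem:LiebRobinsonTypeBoundGeneral} to control the terms that couple $AB$ with $C$. Write $\Li = \Li_{AB} + \Li_C + \Li_{\partial}$. Here $\Li_{AB}=\sum_{Z\subset AB}L_Z$ generates $T^{AB}_t$, extended by the identity on $C$. The remainder $\Li_{\partial}+\Li_C$ collects all terms $L_Z$ with $Z\cap C\neq\emptyset$. The Duhamel formula gives
\begin{equation*}
\bigl((T^\La_t)^*-(T^{AB}_t)^*\bigr)(\sigma_{-A}) = \int_0^t (T^{AB}_{t-s})^*\,\bigl(\Li^*-\Li^*_{AB}\bigr)\,(T^\La_s)^*(\sigma_{-A})\,ds .
\end{equation*}
Since $(T^{AB}_{t-s})^*$ is trace preserving and CPTP, hence a contraction in $\normm{\cdot}{1}$, the left-hand side is bounded by
\begin{equation*}
\int_0^t \sum_{Z\cap C\neq\emptyset}\normm{L_Z^*(\sigma_{-A}(s))}{1}\,ds .
\end{equation*}

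Next I split the terms by distance to $A$. Take terms $Z$ with $Z\cap C\neq\emptyset$ and $d(Z,A)\ge R/2$ (in particular all $Z\subset BC$ far from $A$). Each $L_Z\in\mathcal{CB}_0(Z)$ because $L_Z(\Id)=0$, so Lemma~\ref{lem:LiebRobinsonTypeBoundGeneral} with $X=Z$ gives
\begin{equation*}
\normm{L_Z^*(\sigma_{-A}(s))}{1}\le \normm{L_Z}{cb}\,\frac{|A||Z|}{C_F}(e^{vs}-1)F(d(Z,A)),
\end{equation*}
with $|Z|\le k$ by $k$-locality. A remaining term $Z$ touches $C$ and also comes within $R/2$ of $A$, hence $\mathrm{diam}(Z)\ge R/2$. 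For these I use the trivial bound $\normm{L_Z^*(\sigma_{-A}(s))}{1}\le\normm{L_Z}{cb}$. Integrated over $[0,t]$, this produces the linear-in-$t$ contribution in the statement, and it explains the ``$+t$'' term. The same trivial bound also handles terms with $Z\cap A\neq\emptyset$, for which the lemma does not apply since $X$ must lie in $BC$.

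The remaining task is the summation, and this is the step I expect to be delicate. For the near terms, anchor at $x\in C$ and $y\in Z$ with $d(x,y)\ge R/2$, and use \eqref{eq:FNormLindbladian}:
\begin{equation*}
\sum_{Z\ni x,y}\normm{L_Z}{cb}\le\normm{\Li}{F}F(R/2).
\end{equation*}
The number of relevant pairs must be controlled polynomially. I would count points $y$ in the inflation $A(R/2)$, giving at most $|A|\kappa (R/2)^{D}$ of them by regularity, together with $x\in C$. This yields a bound of order $|A||C|R^D F(R/2)$.

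For the far terms, $\sum_{Z\cap C\neq\emptyset}\normm{L_Z}{cb}F(d(Z,A))$ is bounded by $|C|\,\normm{\Li}{F}\,|||F|||\,F(R/2)$ up to $k$-local combinatorial factors, using monotonicity of $F$. Integrating $e^{vs}-1$ gives $(e^{vt}-1-vt)/v\le (e^{vt}-1)/v$. Collecting everything into a polynomial $p(|A|,|C|)$ with constants depending on $k,\kappa,\normm{\Li}{F},C_F$, I obtain the claimed factor $(e^{\normm{\Li}{F}C_Ft}-1+t)R^DF(R/2)$. The main obstacle is making this counting uniform without any $|\La|$ dependence. If the summation over $C$ is not directly polynomial, I would first restrict $C$ to the points of $C$ within distance $R$ of $AB$ via a second application of the $F$-function convolution bound.
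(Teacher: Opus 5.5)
Your proposal is correct and follows essentially the same route as the paper: Duhamel's formula reduces the problem to $\int_0^t\sum_{Z\cap C\neq\emptyset}\|L_Z^*(\sigma_{-A}(s))\|_1\,ds$. The terms are then split, as in the paper's $\Omega_1/\Omega_2$ decomposition via $B_1=\{x\in B: d(A,x)\le R/2\}$. Long terms reaching within $R/2$ of $A$ are bounded trivially and give the $t\,R^D F(R/2)$ piece. Terms at distance at least $R/2$ from $A$ are handled by Lemma~\ref{lem:LiebRobinsonTypeBoundGeneral} and give the $(e^{vt}-1)F(R/2)$ piece. Your version is slightly more careful than the paper's: you keep the factor $\|L_Z\|_{cb}$ when invoking Lemma~\ref{lem:LiebRobinsonTypeBoundGeneral}, which the paper drops and which is what makes the far-term sum controlled by $|C|\,\|\Li\|_F$.
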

   
    \begin{proof}[Proof of Lemma \ref{lem:differenceTimeEvolutionsGeneral}]
    Applying Duhamel's formula, we have 
    \begin{equation}
        ((T^\Lambda_t)^*-(T^{AB}_{t})^*)(\sigma_{-A})=\int_0^t (T^{AB}_{t-s})^*\tilde \Li^*_C\rho(s)ds,
    \end{equation}
    and thus, taking norms, 
    \begin{equation}
        \normm{((T^\Lambda_t)^*-(T^{AB}_{t})^*)(\sigma_{-A})}{1}\leq \int_0^t\summ{Z\cap C\neq \emptyset}\normm{L_Z^*\rho(s)}{1}ds.
    \end{equation}
    Before moving on, let us divide $B$ into two disjoint subsets $B_1\sqcup B_2=B$ as follows
    \begin{equation}
        B_1:=\{x\in B:d(A,x)\leq d(A,C)/2\},\quad B_2:=B\setminus B_1.
    \end{equation}

    Now, we will divide the sum $\summ{Z\cap C\neq \emptyset}\normm{L_Z^*\rho(s)}{1}$ into two sums in
    \begin{itemize}
        \item $\Omega_1=\{Z\subset \La :Z\cap C\neq \emptyset$ and $Z\cap AB_1\neq \emptyset\}$, and
        \item $\Omega_2=\{Z\subset \La :Z\cap C\neq \emptyset$ and $Z\cap AB_1= \emptyset\}$.
    \end{itemize}
    The idea behind this division is that the elements in $\Omega_1$ are at least $R/2$ long (recall $R=d(A,C)$), so their norm should decay with $R$. Meanwhile, the elements in $\Omega_2$ are at least $R/2$ far from $A$, so that we can apply Lemma \ref{lem:LiebRobinsonTypeBoundGeneral} as a Lieb-Robinson-type bound that also decays with $R$.
 On one hand, the terms in $\Omega_1$ are bounded by:
    \begin{multline}
        \summ{Z\in\Omega_1}\normm{L_Z}{cb}\leq \abs{AB_1}\abs{C}\normm{\Li}{F}F(R/2)\\
        \leq \abs{\Omega_1}\abs{A}\kappa (R/2)^{D}\abs{C}F(R/2)\leq  p'(\abs{A},\abs{C})R^DF(R/2),
    \end{multline}
    with $p'$ a two-entry polynomial.
    
    On the other hand, those terms that are in $\Omega_2$ are at distance at least $R/2$:
    \begin{multline}
        \summ{Z\in \Omega_2}\normm{L_Z ^*\rho(s)}{1}\leq \summ{Z\in\Omega_2}\dfrac{\abs{A}\abs{Z}}{C_F} (e^{\normm{\Li}{F}C_Fs}-1) \, F(d(A,Z))\\
        =p''(\abs{A},\abs{C}) \, (e^{\normm{\Li}{F}C_F s}-1)F(R/2),
    \end{multline}
    for a polynomial $p''$.
    We end the proof integrating over $s$:
    \begin{equation}
        \normm{((T^\Lambda_t)^*-(T^{AB}_{t})^*)\sigma_{-A}}{1}\leq p(\abs{A},\abs{C}) (e^{\normm{\Li}{F}C_Ft}-1+t)R^DF(R/2)
    \end{equation}
    with $p$ also polynomial.
    \end{proof}

\subsubsection{Recovery map}

Using the Lemma \ref{lem:differenceTimeEvolutionsGeneral} together with the condition that $\Li$ satisfies rapid-mixing, we can prove the following decay of the conditional mutual information. 

\medskip
\begin{thm}\label{thm:decayCMIgeneralLindbladian}
    Let $\Li=\sum_{Z\subset \La}L_Z$ be a $k$-local, primitive, frustration-free Lindbladian doubly anchored to the $F$-function $F$, satisfying rapid mixing RM($n,\lambda$) for $n\in\mathbb{N}^+$ and $\lambda >0$ (see Def. \ref{def:globalRM}), and let $\sigma$ be the fixed-point of $\Li$. Then, for any tripartition $A,B,C\subset\La$ where $A\sqcup B\sqcup C=\La$ with the geometry in Figure \ref{fig:geometryCMI}, with $R:=d(A,C)$. Then,
    \begin{equation}
            I(A : C \,|\, B)_\sigma \le \, p(\abs{A},\abs{C})\, R^{nD/2} (F(R))^{\lambda/2(\lambda+v)} ,
    \end{equation}
    where $p:\mathbb{N}\times\mathbb{N}\to \mathbb{N}$ is a polynomial function and $v=\normm{\Li}{F}C_F$ the Lieb-Robinson velocity. 

\end{thm}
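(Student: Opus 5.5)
The plan is to use the localized semigroup $(T^{AB}_t)^*$, generated by $\Li_{AB}=\sum_{Z\subset AB}L_Z$, as a recovery map, and then apply the Fawzi--Renner bound (Proposition~\ref{prop:FR}). For fixed $t>0$ set $\mathcal{R}_{B\to AB}:=(T^{AB}_t)^*$. This is a quantum channel acting only on $AB$, since each $L_Z$ with $Z\subset AB$ is in Lindblad form.

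Proposition~\ref{prop:FR} then gives
\begin{equation*}
I(A:C|B)_\sigma\le (\sqrt2\min\{|A|,|C|\}+1)\,\normm{(T^{AB}_t)^*(\sigma_{-A})-\sigma}{1}^{1/2}.
\end{equation*}
I would split the trace norm by the triangle inequality:
\begin{equation*}
\normm{(T^{AB}_t)^*(\sigma_{-A})-\sigma}{1}\le \normm{((T^{\La}_t)^*-(T^{AB}_t)^*)(\sigma_{-A})}{1}+\normm{(T^{\La}_t)^*(\sigma_{-A})-\sigma}{1}.
\end{equation*}

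The first term is exactly Lemma~\ref{lem:differenceTimeEvolutionsGeneral}. It is bounded by $p(|A|,|C|)(e^{vt}-1+t)R^DF(R/2)$, where frustration-freeness is used.

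For the second term I use primitivity: $\sigma$ is the unique fixed point, so $(T^\La_\infty)^*(\omega)=\sigma$ for every state $\omega$. By duality with the $\infty$-norm,
\begin{equation*}
\normm{(T^\La_t)^*(\sigma_{-A})-\sigma}{1}=\sup_{\norm{O}\le1}\abs{\tr\big(\sigma_{-A}(T_t-T_\infty)(O)\big)}\le \eta(T_t)\le p(|\La|)e^{-\lambda t}.
\end{equation*}

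The issue is that global rapid mixing gives a factor $|\La|^n$ rather than something depending only on $A$, $C$ and $R$. I would handle this with the shielding geometry. First restrict attention to $\La$ of size $\lesssim |A|\,R^D$, by noting that the CMI $I(A:C|B)$ depends only on marginals, and arguing that the relevant region can be taken as the $R$-neighbourhood of $A$ together with $C$. Then $p(|\La|)\lesssim \poly(|A|,|C|)R^{nD}$. This reduction is where the factor $R^{nD/2}$ in the statement comes from, after the square root. I expect this to be the main obstacle, since the global mixing polynomial must be tamed by a geometric count rather than by the full system size. If the reduction is delicate, I would state the bound with $|\La|$ replaced by $|ABC|\le \kappa|A|R^{D}+|C|$ under the figure geometry.

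Finally I would optimize over $t$. The two contributions are $e^{vt}F(R)$ (absorbing $F(R/2)\lesssim F(R)$ up to constants for the $F$-functions considered, and $R^D$ into the prefactor) and $R^{nD}e^{-\lambda t}$. I choose $t$ so that $e^{-(\lambda+v)t}=F(R)$, that is, $t=\log(1/F(R))/(\lambda+v)$. Then both terms are of order $F(R)^{\lambda/(\lambda+v)}$ times polynomial prefactors, with the linear-in-$t$ term costing only a logarithm. Taking the square root in Proposition~\ref{prop:FR} gives
\begin{equation*}
I(A:C|B)_\sigma\le p(|A|,|C|)\,R^{nD/2}F(R)^{\lambda/2(\lambda+v)},
\end{equation*}
where the remaining $R^{D/2}$ and log factors are absorbed into the polynomial and exponent bookkeeping.
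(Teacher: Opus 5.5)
Your proposal is correct and is essentially the paper's proof: the same recovery map $e^{t\Li_{AB}^*}$, the same triangle-inequality split into a global-rapid-mixing term and the Lemma~\ref{lem:differenceTimeEvolutionsGeneral} term, the same time choice $e^{-(\lambda+v)t}\sim F(R)$, and the same Fawzi--Renner square root. The $|\La|^n$ factor is handled exactly as in your fallback, $|\La|=|A|+|B|+|C|$ with $|B|\lesssim R^D$ from the shielding geometry; no restriction of $\La$ is needed, and it would not be legitimate anyway, because shrinking the Lindbladian's domain changes its fixed point. As in the paper, absorbing $F(R/2)\lesssim F(R)$ costs only a constant for power-law $F$, but for, e.g., $F(r)=e^{-ar}$ the argument actually gives the bound with $F(R/2)$, i.e.\ half the stated decay rate.
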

\begin{proof}
    In order to prove decay of CMI, we will define a recovery map and then use Proposition \ref{prop:FR}. Consider
    \begin{equation}
        \recovery:=e^{t\Li_{AB}^*} \, ,
    \end{equation}
    where $\Li_{AB}=\sum_{Z\subset AB}L_Z$. Note that it is a sum of $k$-local terms that act only on $AB$. Let $\sigma$ be the unique fixed point of $\Li$, namely $\Li^*(\sigma)=0$. Due to frustration-freeness, $\forall Z\subset \La$, $L_Z^*(\sigma)=0$; in particular, $\Li_{AB}^*(\sigma)=0$. Let us denote by $\sigma_{-A}:=\dfrac{\Id_A}{d_A}\otimes\tr_A(\sigma)$ the normalised partial trace over $A$ of the fixed-point (see Equation \eqref{eq:partialTrace}), and let $ \RR^t:=(T^\Lambda_t)^*=e^{t\Li^*} $
    be the global evolution (unlike the recovery map, this map acts on the entire $\La$). We aim to bound, using the triangular inequality,
    \begin{equation}
        \normm{\recovery (\sigma_{-A})-\sigma}{1}\leq \underbrace{\normm{\RR^t(\sigma_{-A})-\sigma}{1}}_{(1)}+\underbrace{\normm{(\recovery-\RR^t)(\sigma_{-A})}{1}}_{(2)}.
        \label{eq:CMItriangularInequality}
    \end{equation}
    The first term $(1)$ decays due to Hölder's inequality and rapid mixing,
    \begin{equation}
        (1):\normm{\RR^t(\sigma_{-A})-\sigma}{1}\leq\normm{(T^\Lambda_t)^*-(T^\Lambda_\infty)^*}{}\leq p(\abs{\La})e^{-\lambda t}=c(\abs{A}+\abs{B}+\abs{C})^ne^{-\lambda t},
    \end{equation}
    where recall that $n$ is the degree of the polynomial $p$ in the definition of global rapid mixing (see Definition \ref{def:globalRM}). Since $\abs{B}$ grows as $R^D$, the factor $\abs{B}^n$ adds a factor of $R^{nD}$ to the equation above
    \begin{equation}
        \normm{\RR^t(\sigma_{-A})-\sigma}{1}\leq p(\abs{A},\abs{C}) R^{nD}e^{-\lambda t}.
    \end{equation}

     The second term $(2)$ is bounded by the difference between the time evolutions generated by $\Li$ and $\Li_{AB}$ over $\sigma_{-A}$.
    \begin{equation}
         (2):\normm{(\recovery-\RR^t)(\sigma_{-A})}{1}=\normm{(e^{t\Li_{AB}}-e^{t\Li})(\sigma_{-A})}{1}
         \label{eq:diferenceRecoveries1}
      \end{equation}
The time evolution on a fixed-point with a perturbation on a set $A$, for short times, is localized ``around the bulk''; This holds due to frustration freeness and rapid mixing. For that, we use Lemma \ref{lem:differenceTimeEvolutionsGeneral}. 
       \begin{equation}
            \normm{\left ((T^\Lambda_t)^*-\left (T^{AB}_t\right )^*\right )(\sigma_{-A})}{1}\leq p(\abs{A},\abs{C}) (e^{\normm{\Li}{F}C_Ft}-1+t)R^{D}F(R/2)  \, ,     
        \end{equation}
Putting both bounds together in Equation~\eqref{eq:CMItriangularInequality},
\begin{equation}
    \normm{\recovery (\sigma_{-A})-\sigma}{1}\leq p(\abs{A},\abs{C})R^{nD}\left (e^{-\lambda t}+\bigl[e^{v t}-1+t\bigr] F(R)\right ),
\end{equation}
for $v=\normm{\Li}{F}C_F$. By choosing $t_0(R)=\dfrac{1}{\lambda +v }\log\left(\dfrac{F^{-1}(R)}{C_0}\right)$
we obtain that the decay is given by
\begin{equation}
    \normm{\mathcal{R}_{B\to AB}^{t_0} (\sigma_{-A})-\sigma}{1}\leq  \,p(\abs{A},\abs{C}) \cdot R^{nD} (F(R))^{\lambda/(\lambda+v)},
\end{equation}

Define the final recovery map as\[\mathcal{R}_{B\to AB}:=\mathcal{R}_{B\to AB}^{t_0}. \]

The existence of a local recovery map as $\RR_{B\to AB}$ implies that the conditional mutual information decays by Fawzi-Renner (see \cite{FawRen15}). We can now apply Proposition~\ref{prop:FR} combined with the trace distance via Fid$ \geq 1- \frac12 \normm{\cdot }{1}$, so that the CMI decays as
\begin{equation}
    I(A : C \,|\, B)_\sigma \le \, p(\abs{A},\abs{C})\, R^{nD/2} (F(R))^{\lambda/2(\lambda+v)} ,
\end{equation}
so that $\sigma$ satisfies a global Markov property. The additional factor of $1/2$ in the rate comes from the square root when relating fidelity and $1$-norm.
\end{proof}

In particular, if the Lindbladian is long-range, then, the CMI decays polynomially with $R$:
\medskip
\begin{cor}\label{cor:decayCMILongRange}
For local long-range Lindbladians with a decay rate\[\alpha > \dfrac{n(\lambda+v)}{\lambda}D,\]satisfying frustration freeness and RM($n,\lambda$), the CMI decays as
\begin{equation}
    I(A : C \,|\, B)_\rho \le \, p(\abs{A},\abs{C})\, F_{\alpha'}(R) ,
\end{equation}
for $\alpha'=\dfrac{\lambda \alpha}{2(\lambda +v)}-\dfrac{nD}{2}$.
\end{cor}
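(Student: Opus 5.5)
The corollary is a direct specialisation of Theorem~\ref{thm:decayCMIgeneralLindbladian} to the long-range $F$-function $F_\alpha(r)=(1+r)^{-\alpha}$. We first check that the hypotheses of the theorem are met. The Lindbladian is local, frustration-free and satisfies RM($n,\lambda$) by assumption. Moreover, $F_\alpha$ is a legitimate $F$-function on $\Z^D$ for $\alpha>D$, with $\lvert\lvert\lvert F_\alpha\rvert\rvert\rvert\le 2\kappa_D/(\alpha-D)$ and $C_{F_\alpha}\le c(\alpha,D)<\infty$. The required condition $\alpha>n(\lambda+v)D/\lambda$ implies $\alpha>D$, since $n\ge1$ and $(\lambda+v)/\lambda\ge 1$. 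Hence $v=\normm{\Li}{F}C_{F_\alpha}$ is finite and the theorem applies with $F=F_\alpha$.

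Substituting into the theorem gives
\begin{equation}
I(A:C|B)_\rho\le p(\abs{A},\abs{C})\,R^{nD/2}(1+R)^{-\alpha\lambda/(2(\lambda+v))}.
\end{equation}
Since $R\ge1$ on the lattice, we have $R^{nD/2}\le(1+R)^{nD/2}$. The right-hand side is therefore bounded by $p(\abs{A},\abs{C})(1+R)^{-\alpha'}=p(\abs{A},\abs{C})F_{\alpha'}(R)$, with
\begin{equation}
\alpha'=\frac{\lambda\alpha}{2(\lambda+v)}-\frac{nD}{2}.
\end{equation}
The condition $\alpha>n(\lambda+v)D/\lambda$ is exactly equivalent to $\alpha'>0$, so the bound genuinely decays in $R$.

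The only point needing care is how the theorem's choice of time depends on $F$. The proof of the theorem takes $t_0=\frac{1}{\lambda+v}\log(F(R)^{-1}/C_0)$. Inside Lemma~\ref{lem:differenceTimeEvolutionsGeneral} the factor $F(R/2)$ appears, and for power laws $F_\alpha(R/2)\le 2^\alpha F_\alpha(R)$, so the constant $2^\alpha$ is absorbed into the polynomial $p$. The extra linear term $t$ in $(e^{vt}-1+t)$ is at most logarithmic in $R$ at $t=t_0$, and is dominated by $e^{vt_0}$ up to a constant. With $F(R)=(1+R)^{-\alpha}$ we get $t_0=\frac{\alpha}{\lambda+v}\log(1+R)+\mathrm{const}$. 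Then $e^{-\lambda t_0}$ and $e^{vt_0}F_\alpha(R)$ are both of order $(1+R)^{-\alpha\lambda/(\lambda+v)}$, and the square root from Proposition~\ref{prop:FR} halves this exponent. This matches the exponent stated in the theorem, so nothing beyond substitution and absorbing constants into $p$ is needed.
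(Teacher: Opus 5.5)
Your proposal is correct and follows the paper's route: the paper gives no separate proof, and the corollary is just Theorem~\ref{thm:decayCMIgeneralLindbladian} with $F=F_\alpha$, using $R^{nD/2}\le(1+R)^{nD/2}$ to fold the prefactor into $F_{\alpha'}(R)$, where $\alpha'>0$ is exactly the stated condition on $\alpha$. Your extra check that $F_\alpha(R/2)\le 2^\alpha F_\alpha(R)$ is a useful addition, since the theorem's proof silently replaces $F(R/2)$ by $F(R)$; that step is harmless for power laws, costing only the constant $2^\alpha$ in $p$.
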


\subsection{Decay of Mutual Information}

In order to prove decay of mutual information, we first study how perturbations on the Lindbladian affect the fixed point. In other words, we bound how close the fixed point of the perturbed Lindbladian is to the original one. Crucially, this argument only needs locality and rapid mixing, and does not rely on frustration freeness as Theorem \ref{thm:decayCMIgeneralLindbladian} does. 

Firstly, in Section \ref{subsubsec:perturbedDynamics} we study how far-enough perturbations in the Lindbladian are locally indetectable, using Lieb-Robinson bounds for long-range systems. Using this, in Section \ref{subsubsec:perturbingBoundary} we prove that removing the terms of the Lindbladian that cross the boundary between two far-enough regions does not affect the fixed-point locally. Finally, in Section \ref{sec:MI}, we prove that the MI decays with the distance using the results above.

\subsubsection{Perturbed dynamics}\label{subsubsec:perturbedDynamics}

We first show that perturbations far away from a region cannot be distinguished locally because quantum systems have a finite (effective) speed of propagation given by the so-called Lieb-Robinson velocity, as defined in Theorem \ref{thm:LR_open}.

\medskip
\begin{thm}\label{thm:localPerturbation}
    Let $(\Gamma,d)$ be a countable metric space equipped with an F-function $F$. Let $\Li=\{L_Z\}_{Z\in \mathcal{P}_0(\Gamma)}$ be a dissipative interaction with $\normm{\Li}{F}<\infty$ and primitive, satisfying local rapid mixing. Fix $A,C\in \mathcal{P}_0(\Gamma)$ with $R:=d(A,C)>0$. For any $\Lambda\in  \mathcal{P}_0(\Gamma)$ with $A\cup C \subset \Lambda$, $O_A\in \mathcal{A}_A$, a perturbation $Q_C\in \mathcal{CB}_0(C)$ and $\LL:=\Li+Q_C$ the perturbed \textit{Lindbladian}.
Let $\sigma$ be the unique fixed point of $\Li$ with convergence governed by $g(t)=e^{-\lambda t}$ for $\lambda >0$ and $\pi$ a fixed point of $\LL$.
Then,
    \begin{equation}\label{eq:diff_fixed_points_Lindbladians}
        \abs{(\sigma-\pi)(O_A)}\leq c\cdot \norm{O_A} \cdot k(\abs{A})\cdot F(R)^{\lambda/(\lambda +v)} \, ,
    \end{equation}
    where $v=\normm{\mathcal{L}}{F}\cdot C_F$, $k$ a polynomial function and a constant $c=1+1/(v\cdot C_F)$.
\end{thm}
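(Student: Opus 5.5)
The approach is a two-time-scale argument in the Heisenberg picture, parallel to the proof of Theorem \ref{thm:decayCMIgeneralLindbladian}. Denote by $\TT_t=e^{t\LL}$ the perturbed semigroup. Since $\pi$ is a fixed point of $\LL$ and $\sigma$ the unique fixed point of $\Li$, for every $t\ge 0$ we can write
\begin{equation*}
(\sigma-\pi)(O_A)=\underbrace{(\sigma-\pi)\big(\T_t(O_A)\big)}_{(\mathrm{I})}+\underbrace{\pi\big((\T_t-\TT_t)(O_A)\big)}_{(\mathrm{II})},
\end{equation*}
using $\sigma\circ\T_t=\sigma$ and $\pi\circ\TT_t=\pi$. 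Term (I) is handled by local rapid mixing: since $\sigma$ is a local dynamical fixed point governed by $e^{-\lambda t}$ (the Remark following Definition \ref{def:localRM}), with $\omega=\pi$ in \eqref{eq:localDynamicalFP} we get $|(\mathrm{I})|\le \norm{O_A}\,k(\abs{A})\,e^{-\lambda t}$, with $k$ the polynomial from LRM. This step uses only primitivity and local rapid mixing of the unperturbed $\Li$; no property of $\LL$ other than $\pi$ being its fixed point is needed.

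Term (II) is a Lieb--Robinson estimate. By Duhamel,
\begin{equation*}
(\TT_t-\T_t)(O_A)=\int_0^t \TT_{t-s}\,Q_C\,\T_s(O_A)\,ds,
\end{equation*}
and since $\TT_{t-s}$ is unital completely positive (assuming, as the word ``Lindbladian'' indicates, that $\LL$ is again a Lindblad generator), it is a contraction in operator norm. Hence
\begin{equation*}
|(\mathrm{II})|\le\int_0^t\norm{Q_C(\T_s(O_A))}\,ds .
\end{equation*}
Because $Q_C\in\mathcal{CB}_0(C)$, Theorem \ref{thm:LR_open} gives
\begin{equation*}
\norm{Q_C(\T_s(O_A))}\le \frac{\normm{Q_C}{cb}\norm{O_A}}{C_F}(e^{vs}-1)\sum_{x\in A,y\in C}F(d(x,y)).
\end{equation*}
Integrating yields a factor $(e^{vt}-1-vt)/v\le e^{vt}/v$, which gives the $1/(vC_F)$ appearing in the constant $c$. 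To obtain a bound polynomial in $\abs{A}$ only, I would not write the double sum as $\abs{A}\abs{C}F(R)$. Instead, I would take $\sum_{y\in C}F(d(x,y))$ and either absorb it via uniform summability \eqref{eq:normF} into $\lvert\lvert\lvert F\rvert\rvert\rvert$, or simply bound it by $|A|\,|C|F(R)$ and absorb $\normm{Q_C}{cb}|C|$ into the constant, treating the perturbation as fixed. The cleanest option is to normalize $\normm{Q_C}{cb}$ and to keep $\sum_{x\in A}\sum_{y\in C}F(d(x,y))\lesssim \abs{A}F(R)$ up to constants; this is the point where I would have to be careful that the statement's constant really is independent of $C$.

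Combining the two terms gives
\begin{equation*}
|(\sigma-\pi)(O_A)|\le \norm{O_A}\,k(\abs{A})\Big(e^{-\lambda t}+\tfrac{1}{vC_F}e^{vt}F(R)\Big).
\end{equation*}
Choosing $t=\frac{1}{\lambda+v}\log(1/F(R))$ balances the two exponentials, since then $e^{-\lambda t}=e^{vt}F(R)=F(R)^{\lambda/(\lambda+v)}$. This yields \eqref{eq:diff_fixed_points_Lindbladians} with $c=1+1/(vC_F)$. The main obstacle is the bookkeeping in the Lieb--Robinson step: controlling the $C$-dependence of $\sum_{y\in C}F(d(x,y))$ and of $\normm{Q_C}{cb}$ so that only a polynomial in $\abs{A}$ survives, and verifying that $\TT$ is contractive so that the Duhamel error does not grow with the perturbed dynamics.
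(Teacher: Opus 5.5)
Your proposal is essentially the paper's proof: the same split into a local-rapid-mixing term and a Duhamel term controlled by Theorem~\ref{thm:LR_open}, followed by a choice of $t$ that balances $e^{-\lambda t}$ against $e^{vt}F(R)$. Your prefactor-free choice $t=\frac{1}{\lambda+v}\log(1/F(R))$ gives the stated constant $c=1+1/(vC_F)$ directly, whereas the paper balances with the prefactors included and obtains $p(\abs{A})^{v/(\lambda+v)}$.

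On the point you flag, keep the option of bounding the double sum by $\abs{A}\abs{C}F(R)$ and absorbing $\normm{Q_C}{cb}\abs{C}$ into the prefactor. This is also what the paper implicitly does: its balancing leaves a factor of order $(\normm{Q_C}{cb}\abs{A}\abs{C})^{\lambda/(\lambda+v)}$, which the stated bound drops. Discard the other two options. The uniform-summability route destroys the $F(R)$ decay, and a $C$-independent bound $\lesssim\abs{A}F(R)$ does not hold for a general $F$-function.
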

\begin{proof}

The left-hand side of Equation ~\eqref{eq:diff_fixed_points_Lindbladians} can be rewritten as
\begin{align}
\abs{(\sigma-\pi)(O_A)} &= \abs{\sigma(O_A)-\pi(O_A)} \\
&= \abs{\sigma(\T_t(O_A))-\pi(\TT_t(O_A))} \nonumber\\
&= \abs{(\sigma-\pi)(\T_t(O_A))-\pi((\TT_t-\T_t)(O_A))} \nonumber\\
&\le \abs{(\sigma-\pi)(\T_t(O_A))}+\abs{\pi((\TT_t-\T_t)(O_A))} \nonumber\\
&\le \underbrace{\abs{(\sigma-\pi)(\T_t(O_A))}}_{(1)}+\underbrace{\normm{(\TT_t-\T_t)(O_A)}{1}}_{(2)} \, , \nonumber
\end{align}
where we are using that $\sigma$ (resp. $\pi$), is a fixed point of $\Li$ (resp. $\LL$), on the second equality, triangle inequality in the first inequality and Hölder's in the last one. Let us analyse the last two terms separately. For the first one, we have
 \begin{equation*}
    (1):\quad \abs{(\sigma-\pi)(\T_t(O_A))} \leq p(\abs{A})\cdot   \norm{O_A} \cdot e^{-\lambda t}
 \end{equation*}
from local rapid mixing. For the second one, 
\begin{align}
(2):\;\lvert\lvert (\TT_t&-\T_t)(O_A)\rvert\rvert_1
\le \int_0^t \norm{Q_C(\T_s(O_A))}\,ds \\
&\nonumber\le \int_0^t\!\left(\dfrac{\normm{Q_C}{cb}\,\norm{O_A}}{C_F}\Big(e^{\normm{\Li}{F} C_F s}-1\Big)\sum_{x\in A}\sum_{y\in C}F(d(x,y))\right)\!ds
\\
&= \dfrac{\normm{Q_C}{cb}\,\norm{O_A}}{C_F}\left(\dfrac{e^{\normm{\Li}{F} C_F t}-\normm{\Li}{F} C_F t-1}{\normm{\Li}{F} C_F}\right)\sum_{x\in A}\sum_{y\in C}F(d(x,y)),
\nonumber
\end{align}
where we used the following expression for the difference of dynamics
\begin{equation}
    (\TT_t-\T_t)(O_A)=-\int_0^t \dfrac{d}{ds}\TT_{t-s}(\T_s(O_A)) ds=\int_0^t \TT_{t-s}((\LL-\Li)(\T_s(O_A)))ds , 
\end{equation}
and by the Duhamell's formula:
\begin{equation}
    \norm{(\TT_t-\T_t)(O_A)}\leq \int_0^t \norm{(\LL-\Li)(\T_s(O_A))}ds=\int_0^t \norm{Q_C(\T_s(O_A))}ds.
\end{equation}
With this, we obtain
\begin{multline}
    \abs{(\sigma-\pi)(O_A)}\leq\norm{O_A}\left (  p(\abs{A})\cdot e^{-\lambda t}\right.\\
    + \left.\dfrac{\normm{Q_C}{cb}}{C_F}\left ( \dfrac{e^{\normm{\Li}{F} C_F \cdot t}-\normm{\Li}{F} C_F\cdot t-1}{\normm{\Li}{F} C_F}\right ) \sum_{x\in A}\sum_{y\in C}F(d(x,y)) \right).
\end{multline}
Note that $F$ is a decreasing function, so that 
\begin{equation}
    \sum_{x\in A}\sum_{y\in C}F(d(x,y)) \leq \abs{A}\cdot\abs{C} F(R)
\end{equation}
where $R=d(A,C)$.
Then,

\begin{align}
\lvert (\sigma&-\pi)(O_A)\rvert\\
\nonumber&\le \norm{O_A}\left( p(\abs{A})\,e^{-\lambda t} + \dfrac{\normm{Q_C}{cb}}{C_F}\left( \dfrac{e^{\normm{\Li}{F} C_F t}-\normm{\Li}{F} C_F t-1}{\normm{\Li}{F} C_F}\right)\abs{A}\,\abs{C}\,F(R) \right) \\
&\le \norm{O_A}\left( p(\abs{A})\,e^{-\lambda t} + \dfrac{\normm{Q_C}{cb}}{v\,C_F} e^{\normm{\Li}{F} C_F t}\,\abs{A}\,\abs{C}\,F(R) \right). \nonumber
\end{align}

Defining $v:=\normm{\Li}{F} C_F$. Setting $\tilde{p}=\tilde{p}(A,C)=\dfrac{p(\abs{A})}{v\cdot \normm{Q_C}{cb}\cdot \abs{A}\cdot \abs{C}}$, and choosing a $t(R)$ such that
\begin{equation}
    e^{v t(R)}F(R)=\tilde{p} \cdot e^{-\lambda \cdot t(R)},
\end{equation}
we have that $t(R)=\log\left [ (p^{-1} F(R) )^{-1/(\lambda + v)}  \right ]$. Under such choice, it holds that
\begin{equation}
    e^{-\lambda \cdot t(R)}=F^{\lambda/(\lambda +v)}(R) \cdot  p^{-\lambda/(\lambda +v)}.
\end{equation}
We conclude that
\begin{equation}
    \abs{(\sigma-\pi)(O_A)}\leq c\cdot \norm{O_A} \cdot p(\abs{A})^{v/(v+\lambda)}\cdot F(R)^{\lambda/(\lambda +v)}
\end{equation}
with the constant $c:=1+1/(v\cdot C_F)$.
\end{proof}

By specializing the previous result to algebraically decaying $F$-functions, we obtain:
\medskip
\begin{cor}
    If $F_\alpha(R)=(1+R)^{-\alpha}$, for $\alpha>D$,
    \begin{equation}
        \abs{(\sigma-\pi)(O_A)}\leq c\cdot \norm{O_A} \cdot k(\abs{A})\cdot (1+R)^{-\alpha \lambda/(\lambda + v)}.
    \end{equation}
\end{cor}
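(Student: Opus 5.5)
This corollary follows by specializing Theorem \ref{thm:localPerturbation} to $F=F_\alpha$. The plan is to verify the hypotheses for $F_\alpha$ and then read off the bound. First I check that $F_\alpha(r)=(1+r)^{-\alpha}$ with $\alpha>D$ is an $F$-function on the $(\kappa_D,D-1)$-regular space $\Z^D$. Uniform summability holds by counting sites in shells, giving $\lvert\lvert\lvert F_\alpha\rvert\rvert\rvert\le 2\kappa_D/(\alpha-D)$. The convolution constant $C_{F_\alpha}$ is finite by the standard splitting argument: for $z$ in the sum defining $C_F$, either $d(x,z)\ge d(x,y)/2$ or $d(z,y)\ge d(x,y)/2$. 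The corresponding factor is then bounded by $2^\alpha F_\alpha(d(x,y))$, and the remaining factor is summable. This gives $C_{F_\alpha}\le 2^{\alpha+1}\lvert\lvert\lvert F_\alpha\rvert\rvert\rvert$. These facts are already recorded in the preliminaries, so the quantities $v=\normm{\Li}{F}C_F$ and $c=1+1/(vC_F)$ are finite.

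Next I invoke Theorem \ref{thm:localPerturbation} verbatim, taking the same primitive $\Li$ with local rapid mixing governed by $e^{-\lambda t}$, the same perturbation $Q_C\in\mathcal{CB}_0(C)$, and fixed points $\sigma$ and $\pi$. This yields
\begin{equation*}
\abs{(\sigma-\pi)(O_A)}\le c\,\norm{O_A}\,k(\abs{A})\,F_\alpha(R)^{\lambda/(\lambda+v)}.
\end{equation*}
The algebraic identity $F_\alpha(R)^{\lambda/(\lambda+v)}=(1+R)^{-\alpha\lambda/(\lambda+v)}$ then gives the claimed bound.

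The only point needing care is the exact form of the polynomial $k$. In the proof of the theorem, the optimization over $t$ produces a prefactor $p(\abs{A})^{v/(v+\lambda)}$ together with powers of $\normm{Q_C}{cb}\abs{A}\abs{C}$ coming from $\tilde p$. I would make this explicit by writing
\begin{equation*}
k(\abs{A}) \ge p(\abs{A})^{v/(\lambda+v)}\,\bigl(\normm{Q_C}{cb}\abs{A}\abs{C}\bigr)^{\lambda/(\lambda+v)}.
\end{equation*}
This is bounded by a polynomial in $\abs{A}$, with $\abs{C}$ and $\normm{Q_C}{cb}$ absorbed into $c$, or, if preferred, a polynomial in $(\abs{A},\abs{C})$. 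I would also check that the chosen $t(R)$ is nonnegative for large $R$; for small $R$ the bound is trivial since $\abs{(\sigma-\pi)(O_A)}\le 2\norm{O_A}$ and $c\ge 1$. There is no real obstacle beyond this bookkeeping.
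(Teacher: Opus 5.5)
Your proposal is correct and follows the paper's route exactly: the corollary is Theorem~\ref{thm:localPerturbation} with $F=F_\alpha$ substituted, using $F_\alpha(R)^{\lambda/(\lambda+v)}=(1+R)^{-\alpha\lambda/(\lambda+v)}$, and your check that $F_\alpha$ is an $F$-function (including $C_{F_\alpha}\le 2^{\alpha+1}\lvert\lvert\lvert F_\alpha\rvert\rvert\rvert$) is sound. Your bookkeeping is also right: the theorem's choice of $t(R)$ leaves a factor of order $(\normm{Q_C}{cb}\abs{A}\abs{C})^{\lambda/(\lambda+v)}$, which the paper's $k(\abs{A})$ silently absorbs. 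The one small fix concerns the regime $t(R)<0$. There you should not rely on the trivial bound $2\norm{O_A}$, which the right-hand side need not dominate. Instead use local rapid mixing at $t=0$, i.e.\ $\abs{(\sigma-\pi)(O_A)}\le p(\abs{A})\norm{O_A}$, which the right-hand side does dominate in that regime.
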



\subsubsection{Perturbing terms in the boundary }\label{subsubsec:perturbingBoundary}

The following step now consists of perturbing the Lindbladian by removing the jump operators of the boundary $\partial_{AC}$ that disconnects two sets $A,C\subset \La$. The technical difficulty of doing this in the long-range regime comes from the fact that there are terms crossing the boundary that intersect $A\cup C$. This means that we need to consider those terms that intersect with $A\cup C$, for which Theorem \ref{thm:localPerturbation} provides the trivial bound.
\medskip
\begin{figure}[!h]
    \centering
    \includegraphics[width=0.5\linewidth]{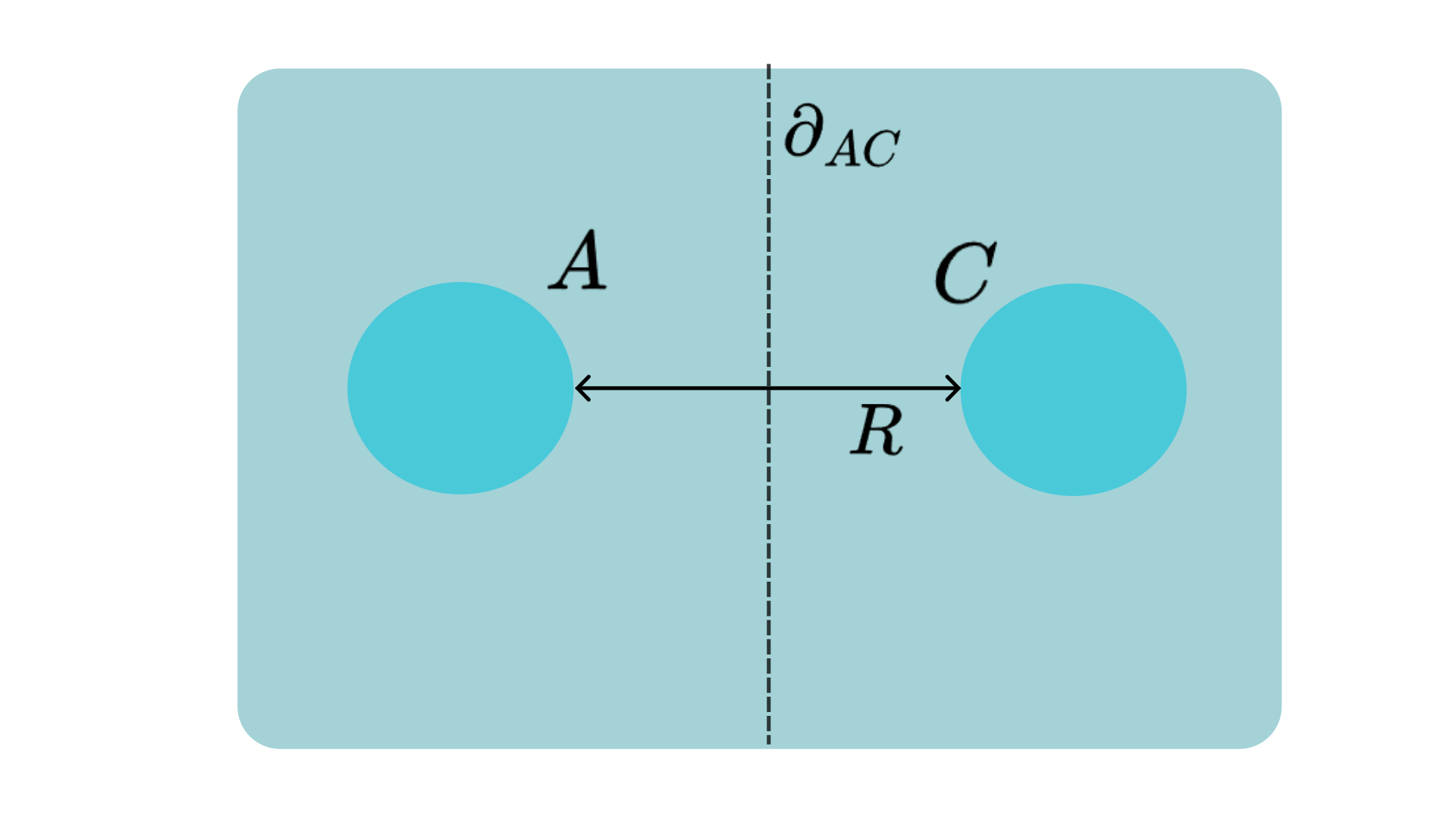}
    \caption{Geometry considered for the MI: $A$ and $C$ and their frontier $\partial_{AC}$}
    \label{fig:MIgeometry}
\end{figure}
\begin{thm}\label{thm:clusteringBoundary}
    Let $(\Gamma,d)$ be a countable metric space equipped with the F-function\[F_\alpha(r):=\dfrac{1}{(1+r)^\alpha},\]for $r\geq0$ and $\alpha>3D-1$. Let $\Li=\{L_Z\}_{Z\in \mathcal{P}_0(\Gamma)}$ be a local and primitive dissipative interaction with $\normm{\Li}{F}<\infty$, satisfying local rapid mixing. Fix $A,C\in \mathcal{P}_0(\Gamma)$ with $A\cap C =\emptyset$ (see Figure \ref{fig:MIgeometry}). For any $\Lambda\in  \mathcal{P}_0(\Gamma)$ with $A\cup C \subset \Lambda$, $O\in \mathcal{A}_{A\cup C}$, define the perturbation\footnote{We denote by $\partial_{AC}$ to the boundary separating $A$ and $C$ which is chosen equidistant to them (see Figure \ref{fig:MIgeometry}). With some abuse of notation, we denote $Z\cap \partial_{AC}\neq \emptyset$ to $Z$ that cross or touch the boundary $\partial_{AC}$.} $Q:=-\sum\limits_{\substack{Z\subset \Lambda \\ Z\cap \partial_{AC}\neq \emptyset}}L_Z\in CB_0(\partial_{AC})$ and $\LL:=\Li+ Q $ the perturbed \textit{Lindbladian}.
Let $\sigma\in\A_\La^*$ be the unique fixed point of $\Li$, which satisfies local rapid mixing ($\Li$ with a convergence governed by $g(t)=e^{-\lambda t}$ for $\lambda >0$), and let $\pi\in\A_\La^*$ a fixed point of $\LL$. Then,
    \begin{equation}
        \abs{(\sigma-\pi)(O)}\leq \norm{O}\cdot  k(\abs{A},\abs{C})\cdot F_{\alpha '}(R-1),
    \end{equation}
    where $\alpha'=\dfrac{\lambda(\alpha-2D)}{\lambda +v}$, $v=\normm{\Li}{F}\cdot  C_F$ the Lieb-Robinson speed, $k$ is a polynomial function, and $R=d(A,C)$. 

\end{thm}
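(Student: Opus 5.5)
The plan is to rerun the argument of Theorem~\ref{thm:localPerturbation}, now with the observable $O$ supported on $A\cup C$ and the perturbation $Q$ localized on the boundary $\partial_{AC}$. The key difference is that $Q$ is no longer supported at a single distance from the observable: it is a sum of terms $L_Z$ crossing $\partial_{AC}$, and some of these may reach into $A\cup C$. For fixed $t\ge 0$ I would first write
\begin{equation*}
\abs{(\sigma-\pi)(O)}\le \abs{(\sigma-\pi)(\T_t(O))}+\norm{(\TT_t-\T_t)(O)} .
\end{equation*}
The first term is bounded by local rapid mixing as $p(\abs{A}+\abs{C})\norm{O}e^{-\lambda t}$. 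Here $\pi$ need not be the unique fixed point of $\LL$, but it is a fixed point, so $\pi\circ\TT_t=\pi$. The bound uses only the contraction $\eta^{AC}(\T_t)$ applied to the state $\pi$. For the second term I would use Duhamel's formula as in the proof of Theorem~\ref{thm:localPerturbation}:
\begin{equation*}
\norm{(\TT_t-\T_t)(O)}\le \sum_{Z\cap\partial_{AC}\neq\emptyset}\int_0^t\norm{L_Z(\T_s(O))}\,ds .
\end{equation*}

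Next, I would split the boundary-crossing sets $Z$ into two classes. The first class consists of those $Z$ with $Z\cap(A\cup C)=\emptyset$. For these I apply Theorem~\ref{thm:LR_open} with $K=L_Z\in\mathcal{CB}_0(Z)$ (each $L_Z$ annihilates $\Id$). This gives a contribution $\normm{L_Z}{cb}\norm{O}C_F^{-1}(e^{vs}-1)\sum_{x\in A\cup C}\sum_{y\in Z}F(d(x,y))$. I would not sum over $Z$ directly; instead I would anchor at a point $y\in Z$ and use the doubly anchored bound $\sum_{Z\ni y,z}\normm{L_Z}{cb}\le\normm{\Li}{F}F(d(y,z))$. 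Together with the convolution condition, the total reduces to roughly $\sum_{x\in A\cup C}\sum_{y}F_\alpha(d(x,y))F_\alpha(d(y,\partial_{AC}))$.

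Since $\partial_{AC}$ is equidistant from $A$ and $C$, every $x\in A\cup C$ satisfies $d(x,\partial_{AC})\ge R/2$. Counting sites on the boundary shell with \eqref{eq:SR} and \eqref{eq:regularity}, I expect a bound of the form $k(\abs{A},\abs{C})\,(1+R)^{D}F_\alpha(R/2)$, up to polynomial factors. The roughly $R^{2D}$ loss that turns $\alpha$ into $\alpha-2D$ should come from two summations over unconstrained lattice sites: the part of the boundary near $A\cup C$, and the intermediate anchor points. Together these account for two factors of order $R^{D}$.

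The second class consists of those $Z$ that cross $\partial_{AC}$ and also meet $A\cup C$. For these Lieb--Robinson gives nothing, so I would bound them trivially by $\normm{L_Z}{cb}\norm{O}\,t$. Such a $Z$ contains a point of $A\cup C$ and a point of $\partial_{AC}$, hence has diameter at least $R/2$. Summing $F_\alpha(d(x,y))$ over $x\in A\cup C$ and over $y$ in the boundary region gives a term of the form $t\,(\abs{A}+\abs{C})\sum_{r\ge R/2}\kappa r^{D-1}(1+r)^{-\alpha}\lesssim t\,(\abs{A}+\abs{C})(1+R)^{-(\alpha-D)}$. I expect this to be dominated by the other terms. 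To balance the pieces I would combine everything as
\begin{equation*}
\abs{(\sigma-\pi)(O)}\le \norm{O}\,k(\abs{A},\abs{C})\left[e^{-\lambda t}+(e^{vt}+t)(1+R)^{-(\alpha-2D)}\right],
\end{equation*}
and choose $t$ so that $e^{(\lambda+v)t}=(1+R)^{\alpha-2D}$. This yields $F_{\lambda(\alpha-2D)/(\lambda+v)}$. The shift $R\to R-1$ absorbs the discreteness of the equidistant boundary, and the extra factor $t$ is only logarithmic; it can be absorbed by a slightly worse constant, or by noting that the $t$-term carries the larger exponent $\alpha-D$.

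I expect the main obstacle to be the geometric bookkeeping in the first class of terms. The sum runs over the whole boundary $\partial_{AC}$, which may be very large, and only the decay of $F_\alpha$ away from $A\cup C$ keeps it finite. The sums must therefore be arranged so that every factor involving the boundary is integrated against $F_\alpha$ centred at $A\cup C$. That integration costs powers $R^{D}$ and is the source of the hypothesis $\alpha>3D-1$, which is needed for the remaining series to converge and for $\alpha'$ to be positive with room to spare. I would first try to organize this by summing over $x\in A\cup C$ outermost and applying $C_F$ to collapse the inner chains.
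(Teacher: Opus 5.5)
Your proposal is correct, and its skeleton matches the paper's proof. The shared steps are:
\begin{itemize}
\item the split into a rapid-mixing term and a Duhamel term;
\item the separation of boundary-crossing $Z$ into those disjoint from $A\cup C$ (the paper's $\Omega$, handled with Theorem~\ref{thm:LR_open}) and those meeting $A\cup C$ (the paper's $\Theta$, bounded trivially and summed to $F_{\alpha-D}(R/2-1)$);
\item the choice of $t$ balancing $e^{-\lambda t}$ against $e^{vt}$.
\end{itemize}
Your prefactor $p(\abs{A}+\abs{C})$ is actually the right one, since $O$ lives on $A\cup C$.

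Where you genuinely differ is the $\Omega$-sum. The paper does the following:
\begin{itemize}
\item it stratifies by $r=d(A\cup C,Z')$ and splits at $r=R/2$;
\item it counts anchor points with the shell estimate \eqref{eq:SR}, which introduces $\abs{\partial_A}+\abs{\partial_C}$;
\item it bounds each shell with Lemma~\ref{lem:technicalLemmaFfunction2};
\item it finishes with a one-dimensional convolution in $r$ (this is where $F_{\alpha-2D+1}$ appears).
\end{itemize}
This stratified counting is where both the loss $\alpha\to\alpha-2D$ and the hypothesis $\alpha>3D-1$ come from.

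Your anchoring, done carefully, is sharper. Write
\[
\sum_{Z\in\Omega}\normm{L_Z}{cb}\sum_{y\in Z}F_\alpha(d(x,y))=\sum_y F_\alpha(d(x,y))\sum_{Z\in\Omega,\,Z\ni y}\normm{L_Z}{cb}.
\]
The inner sum is bounded by $\normm{\Li}{F}\sum_{z\in\partial_{AC}}F_\alpha(d(y,z))$, not by $F_\alpha(d(y,\partial_{AC}))$ as you wrote. The convolution condition then collapses the $y$-sum and leaves
\[
\normm{\Li}{F}C_F\sum_{x\in A\cup C}\sum_{z\in\partial_{AC}}F_\alpha(d(x,z))\lesssim\abs{A\cup C}\,F_{\alpha-D}(R/2).
\]
So in your route there is only one free summation (over the boundary, costing $R^{D}$), not two. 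You would obtain the exponent $\lambda(\alpha-D)/(\lambda+v)$ assuming only $\alpha>D$, with no boundary-size prefactor. The stated bound with $\alpha-2D$ then follows a fortiori.

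Two of your remarks describe the paper's cruder counting rather than your own argument, and you should drop them: that the $R^{2D}$ loss comes from summing over intermediate anchor points, and that $\alpha>3D-1$ is needed for convergence.
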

\begin{proof}
    Notice that we can split the term $\abs{(\sigma-\pi)(O)}$ into two terms by using triangular inequality: the first term decays with time due to rapid mixing, the second term decays with $R$:
  \begin{align}
\abs{(\sigma-\pi)(O)} &= \abs{\sigma(O)-\pi(O)} \label{eq:diferenceStates} \\
&= \abs{\sigma(\T_t(O))-\pi(\TT_t(O))} \nonumber\\
&= \abs{(\sigma-\pi)(\T_t(O))-\pi((\TT_t-\T_t)(O))} \nonumber\\
&\le \abs{(\sigma-\pi)(\T_t(O))}+\abs{\pi((\TT_t-\T_t)(O))} \nonumber\\
&\le \underbrace{\abs{(\sigma-\pi)(\T_t(O))}}_{(1)}+\underbrace{\normm{(\TT_t-\T_t)(O)}{1}}_{(2)}. \nonumber
\end{align}

    For the first term, we use that $\Li$ satisfies local rapid mixing:
     \begin{equation*}
        (1):\quad \abs{(\sigma-\pi)(\T_t(O))} \leq p(\abs{A})\cdot   \norm{O} \cdot e^{-\lambda t}
     \end{equation*}
     For the second term, we will distinguish between two types of perturbation: those that intersect $A$ or $C$, and those that do not. This distinction motivates the following definitions:
     \begin{equation}
         \Omega:=\{ Z\subset \Lambda : Z\cap \partial_{AC}\neq \emptyset,Z\cap(A\cup C)= \emptyset\},
         \label{eq:Omega}
     \end{equation}
     \begin{equation}
         \Theta := \{ Z\subset \Lambda : Z\cap \partial_{AC}\neq \emptyset,Z\cap(A\cup C)\neq \emptyset\}.
         \label{eq:Theta}
     \end{equation}
     This decomposition separates perturbation terms according to whether
    they are spatially disjoint from the observable support $A\cup C$.
    The terms in $\Omega$ are at positive distance from $A\cup C$ and can
    therefore be controlled directly using the long-range Lieb–Robinson
    estimate. 
    
    In contrast, the terms in $\Theta$ intersect $A\cup C$ and cannot be
    treated purely by quasi-locality. This distinction is essential in the
    long-range setting, since interaction terms crossing $\partial_{AC}$
    may have arbitrarily large diameter.
    
    Using this decomposition, the perturbed Lindbladian can be written as
     \begin{equation}
         \LL=\Li -\sum_{Z'\in \Omega}L_{Z'}-\sum_{Z''\in \Theta}L_{Z''}
     \end{equation}
    we can rewrite the second term in the inequality above as:
     \begin{multline}
        (2):\normm{(\TT_t-\T_t)(O)}{1}\leq \int_0^t\underbrace{\sum_{Z'\in \Omega} \norm{L_{Z'}(\T_s(O))}}_{(4)}ds+\int_0^t\underbrace{\sum_{Z''\in \Theta} \norm{L_{Z''}(\T_s(O))}}_{(3)}ds
        \label{eq:(2)}
     \end{multline}
 Subsequently, we bound the term $(3)$ by applying Lemma \ref{lem:technicalLemmaFfunction2} and using the contractivity of $\T_t$.  
\begin{align}
(3):\;\sum_{Z''\in \Theta} \norm{L_{Z''}(\T_s(O))} &\le \sum_{Z''\in \Theta} \normm{L_{Z''}}{cb}\norm{\T_s(O)} \\
&\le \norm{O}\sum_{Z''\in \Theta} \normm{L_{Z''}}{cb} \nonumber\\
&\le \norm{O}\sum_{x\in A\cup C}\sum_{l\ge R/2}\underbrace{\sum_{\substack{Z''\subset \La \\ Z''\ni x \\ d_x(Z'')=l}}\norm{L_{Z''}}}_{\le \kappa \normm{\Li}{F} F_{\alpha-D+1}(l)} \nonumber\\
&\le \kappa \normm{\Li}{F}\norm{O}\abs{A\cup C}\underbrace{\sum_{l\ge R/2} F_{\alpha-D+1}(l)}_{\le \frac{1}{\alpha-D}F_{\alpha-D}(R/2-1)} \nonumber\\
&\le \norm{O}\underbrace{\kappa \normm{\Li}{F}\dfrac{1}{\alpha-D}\abs{A\cup C}}_{=:\tilde{c}'} F_{\alpha-D}(R/2-1). \nonumber
\end{align}
     
     For the terms of the Lindbladian that cross the boundary $\partial_{AC}$ but do not intersect $A\cup C$, we can use Theorem \ref{thm:localPerturbation}:
    \begin{align}
    (4):\;\sum_{Z'\in \Omega} & \norm{L_{Z'}(\T_s(O))}\nonumber \\
    &\nonumber \le \sum_{Z'\in \Omega}\left( \dfrac{\normm{L_{Z'}}{cb}\,\norm{O}}{C_F}\left( e^{\normm{\Li}{F} C_F s}-1\right)\sum_{x\in A\cup C}\sum_{y\in Z'}F_\alpha(d(x,y))\right) \\
    &= \left( \dfrac{\norm{O}}{C_F}\left( e^{\normm{\Li}{F} C_F s}-1\right)\underbrace{\sum_{Z'\in \Omega}\normm{L_{Z'}}{cb}\sum_{x\in A\cup C}\sum_{y\in Z'}F_\alpha(d(x,y))}_{(*)}\right).
    \label{eq:termsCrossingBoundary}
    \end{align}
    
    We will rewrite the term $(*)$ and split it into two terms (a) and (b) that are bounded independently:
    \begin{align}
(*):\;\sum_{Z'\in \Omega} \normm{L_{Z'}}{cb}&\sum_{x\in A\cup C}\sum_{y\in Z'}F_\alpha(d(x,y)) \nonumber\\ 
&\le \sum_{r\ge 1}\sum_{\substack{Z'\subset \La \\ d(A\cup C,Z')= r}}\normm{L_{Z'}}{cb}\sum_{x\in A\cup C}\sum_{y\in Z'}F_\alpha(d(x,y)) \label{eq:(*)split} \\
&= \underbrace{\sum_{r=1}^{R/2}\sum_{\substack{Z'\subset \La \\ d(A\cup C,Z')= r}}\normm{L_{Z'}}{cb}\sum_{x\in A\cup C}\sum_{y\in Z'}F_\alpha(d(x,y))}_{(a)} \nonumber\\
&\quad + \underbrace{\sum_{r\ge R/2+1}\sum_{\substack{Z'\subset \La \\ d(A\cup C,Z')= r}}\normm{L_{Z'}}{cb}\sum_{x\in A\cup C}\sum_{y\in Z'}F_\alpha(d(x,y))}_{(b)}. \nonumber
\end{align}

     Before continuing, note that
     \begin{equation}
         \sum_{x\in A\cup C}\sum_{y\in Z'}F_\alpha(d(x,y))\leq \text{min}\{\abs{A\cup C},\abs{Z'}\}\,\sup_{x\in \Lambda}\sum\limits_{\substack{y\in \Lambda \\ 
         d(x,y)\geq d(A\cup C,Z')
         }}   
         F_\alpha\left (d(x,y)\right).
     \end{equation}
    For the decay function $F_\alpha (r):=(1+r)^{-\alpha}$, a simple integration shows that for all $R\in \mathbb{N},\alpha > D$, there exists a constant $c > 0$ such that
    \begin{equation}
        \sup_{x\in \Lambda}\sum\limits_{\substack{y\in \Lambda \\ 
         d(x,y)\geq R
         }}   
         F_\alpha\left (d(x,y)\right)\leq \dfrac{c}{4}F_{\alpha-D}(R).
    \end{equation}
    In particular,
    \begin{align}
        \sum_{x\in A\cup C}\sum_{y\in Z'}F_\alpha(d(x,y))& \leq \dfrac{c}{4}\text{min}\{\abs{A\cup C},\abs{Z'}\}F_{\alpha-D}(d(A\cup C,Z')) \\ & \nonumber \leq\dfrac{c}{4}(\abs{A}+\abs{C})F_{\alpha-D}(d(A\cup C,Z')).
    \end{align}
    Going back to the Equation \eqref{eq:(*)split}. The first term $(a)$ corresponds to a sum over sets $Z\subset \La$ with diameter greater than $R/2-r$.
\begin{align}
(a):\;\sum_{r=1}^{R/2}&
\sum_{\substack{Z'\subset \La \\ d(A\cup C,Z')= r}}
\normm{L_{Z'}}{cb}
\underbrace{\sum_{x\in A\cup C}\sum_{y\in Z'}F_\alpha(d(x,y))}_{\le \dfrac{c}{4}\abs{A\cup C}F_{\alpha-D}(r)}
\\
&\le
\dfrac{c}{4}\abs{A\cup C}
\sum_{r=1}^{R/2}
F_{\alpha-D}(r)
\sum_{l\ge R/2-r}
\sum_{\substack{Z'\subset \La \\ d(A\cup C,Z')= r \\ d_x(Z')=l}}
\normm{L_{Z'}}{cb}
\nonumber\\
&\le
\dfrac{c}{4}\abs{A\cup C}
\sum_{r=1}^{R/2}
F_{\alpha-D}(r)
\sum_{l\ge R/2-r}
\sum_{\substack{x\in \La \\ d(x,\partial_{AC})= r}}
\sum_{\substack{x\in Z'\subset \La \\ d(A\cup C,Z')= r \\ d_x(Z')=l}}
\normm{L_{Z'}}{cb}
\nonumber\\
&\le
\dfrac{c}{4}\kappa \abs{A\cup C}
\sum_{r=1}^{R/2}
(1+r)^{D-1}
F_{\alpha-D}(r)
\sum_{l\ge R/2-r}
(\abs{\partial_{A}}+\abs{\partial_C})\underbrace{\sum_{\substack{x\in Z'\subset \La \\ d(A\cup C,Z')= r \\ d_x(Z')=l}}
\normm{L_{Z'}}{cb}}_{\le \kappa \normm{\Li}{F}F_{\alpha-D+1}(l)}
\nonumber\\
&\le
\dfrac{c}{4}\kappa^2 \normm{\Li}{F}
(\abs{\partial_A}+\abs{\partial_C})
\abs{A\cup C}
\sum_{r=1}^{R/2}
F_{\alpha-2D+1}(r) \underbrace{\sum_{l\ge R/2-r}F_{\alpha-D+1}(l)}_{\le \frac{1}{\alpha-D}F_{\alpha-D}(R/2-r-1)}
\nonumber\\
&\le
\dfrac{c}{4}\kappa^2 \normm{\Li}{F}
\frac{1}{\alpha-D}
(\abs{\partial_A}+\abs{\partial_C})
\abs{A\cup C}
\underbrace{\sum_{r=1}^{R/2}
F_{\alpha-2D+1}(r-1)
F_{\alpha-2D+1}(R/2-r-1)}_{\alpha-2D+1 >D \text{ needed for convolution}}
\nonumber\\
&\le
\dfrac{c}{4}\kappa^2 \normm{\Li}{F}
\frac{C_F}{\alpha-D}
(\abs{\partial_A}+\abs{\partial_C})
\abs{A\cup C}
F_{\alpha-2D+1}(R/2-1),
\nonumber
\end{align}

where we used the convolution property of the $F$-functions, see Equation \ref{eq:convolutionCF}. Remember that the convolution property is a property of $F$-functions, and $F_{alpha}$ is a $F$-function iff $\alpha >D$. That is the reason why the condition $\alpha -2D+1>D$ arises, because we need $F_{\alpha-2D+1}$ to be a $F$-function.

For the second term $(b)$, we use that $\norm{L_Z}\leq \normm{\Li}{F};\forall Z\subset \La $:
    \begin{align}
(b):\;\sum_{r\ge R/2+1}&\sum_{\substack{Z\subset \La \\ d(A\cup C,Z)= r}}\normm{L_{Z}}{cb}\sum_{x\in A\cup C}\sum_{y\in Z}F_\alpha(d(x,y))\\
&\le \dfrac{c}{4}\abs{A\cup C}\sum_{r\ge R/2+1}\sum_{\substack{Z\subset \La \\ d(A\cup C,Z)= r}}\normm{L_{Z}}{cb}F_{\alpha-D}(d(A\cup C,Z))
\nonumber\\
&\le \dfrac{c}{4}\abs{A\cup C}\sum_{r\ge R/2+1}F_{\alpha-D}(r)\sum_{l\ge 1}\sum_{\substack{x\in \La \\ d(x,\partial_{AC})= r}}\sum_{\substack{x\in Z\subset \La \\ d(A\cup C,Z)= r \\ d_x(Z)=l}}\normm{L_{Z}}{cb}
\nonumber\\
&\le \dfrac{c}{4}\abs{A\cup C}\sum_{r\ge R/2+1}F_{\alpha-D}(r)\,(\abs{\partial_{A}}+\abs{\partial_{C}})\,\kappa(1+r)^{D-1}\sum_{l\ge 1}\sum_{\substack{x\in Z\subset \La \\ d(A\cup C,Z)= r \\ d_x(Z)=l}}\normm{L_{Z}}{cb}
\nonumber\\
&\le \dfrac{c}{4}\kappa(\abs{\partial_{A}}+\abs{\partial_{C}})\abs{A\cup C}\sum_{r\ge R/2+1}F_{\alpha-2D+1}(r)\sum_{l\ge 1}\kappa \normm{\Li}{F}F_{\alpha-D+1}(l)
\nonumber\\
&\le \dfrac{c}{4}\kappa^2 \normm{\Li}{F}(\abs{\partial_{A}}+\abs{\partial_{C}})\abs{A\cup C}\sum_{r\ge R/2+1}F_{\alpha-2D+1}(r)\,\dfrac{1}{\alpha-D}\underbrace{F_{\alpha-D}(0)}_{=1}
\nonumber\\
&\le \dfrac{c}{4}\dfrac{1}{\alpha-D}\kappa^2 \normm{\Li}{F}(\abs{\partial_{A}}+\abs{\partial_{C}})\abs{A\cup C}\sum_{r\ge R/2+1}F_{\alpha-2D+1}(r)
\nonumber\\
&\le \dfrac{c}{4}\dfrac{1}{\alpha-D}\kappa^2 \normm{\Li}{F}(\abs{\partial_{A}}+\abs{\partial_{C}})\abs{A\cup C}\,\dfrac{1}{\alpha-2D}F_{\alpha-2D}(R/2)
\nonumber\\
&\le \dfrac{c}{4}\dfrac{1}{(\alpha-D)(\alpha-2D)}\kappa^2 \normm{\Li}{F}(\abs{\partial_{A}}+\abs{\partial_{C}})\abs{A\cup C}F_{\alpha-2D}(R/2).
\nonumber
\end{align}

    Going back to $(*)$ from Equation \eqref{eq:termsCrossingBoundary}:
    \begin{equation}
        (*):\sum_{Z'\in \Omega} \normm{L_{Z'}}{cb}\sum_{x\in A\cup C}\sum_{y\in Z'}F_\alpha(d(x,y))\leq \hat{c} F_{\alpha-2D}(R/2-1),
    \end{equation}
    where
    \begin{equation}
        \tilde{c}=\dfrac{c\cdot C_F}{4}\dfrac{(\alpha-2D +1/C_F)}{(\alpha-D )(\alpha-2D)}\kappa^2 \normm{\Li}{F} (\abs{\partial_{A}}+\abs{\partial_{C}})\abs{A\cup C}.
    \end{equation}

    We are now ready to bound the term $(4)$ of the Equation \eqref{eq:(2)}:
     
     \begin{equation}
         (4):\sum_{Z'\in \Omega} \norm{L_{Z'}(\T_s(O))}\leq\left ( \dfrac{\norm{O}}{C_F}\left ( e^{\normm{\Li}{F} C_F\cdot s}-1\right )\tilde{\tilde{c}}F_{\alpha-2D}(R/2-1)\right )
     \end{equation}
     with $
         \tilde{\tilde{c}}=\tilde{c}'+\tilde{c}
    $. Going back to the Equation \eqref{eq:(2)},
    \begin{multline}
        (2):\normm{(\TT_t-\T_t)(O)}{1}\\\leq  \dfrac{\tilde{\tilde{c}}\norm{O}}{C_F}\left ( \dfrac{e^{v\cdot t}-vt-1}{v} \right )F_{\alpha-2D}(R/2-1)ds        +\norm{O}\tilde{c}' F_{\alpha-D-1}(R/2-1)\cdot t
        \label{eq:clusteringTime}
     \end{multline}

     We are ready to bound Equation \eqref{eq:diferenceStates}:
    \begin{align}
|(&\sigma-\pi)(O)|\\
&\nonumber \le \norm{O}\Bigg[ p(\abs{A})e^{-\lambda t} + \dfrac{\tilde{\tilde{c}}}{C_F}\left(\dfrac{e^{vt}-vt-1}{v}\right)F_{\alpha-2D}(R/2-1) + \tilde{c}'\,F_{\alpha-D-1}(R/2-1)\,t \Bigg]
\\
&= \norm{O}\Big[ p(\abs{A})e^{-\lambda t} + c_1\,F_{\alpha-2D}(R/2-1)\,e^{vt} + c_2\,F_{\alpha-D-1}(R/2-1)\,t \Big] \nonumber\\
&\le \norm{O}\Big[ p(\abs{A})e^{-\lambda t} + \underbrace{(c_1+c_2\,c_3)}_{=:c}\,F_{\alpha-2D}(R/2-1)\,e^{vt} \Big]. \nonumber
\end{align}
     where $c_3:=\dfrac{1}{v\cdot e}=\max_{t\geq 0} t e^{-vt}$ is a constant so that $t\leq c_3 e^{vt}$. The other constants are
     \begin{equation}
         c_1:= \dfrac{\tilde{\tilde{c}}}{C_F \cdot v}=\kappa \dfrac{\normm{\Li}{F}}{C_Fv(\alpha-D)}\abs{A\cup C}+\dfrac{c\normm{\Li}{F}}{4vC_F}\dfrac{(\alpha-2D +1)}{(\alpha-D )(\alpha-2D)}\kappa^2 (\abs{\partial_{A}}+\abs{\partial_{C}})\abs{A\cup C}
     \end{equation}
     \begin{equation}
         c_2:= \tilde{c}'-\dfrac{\tilde{\tilde{c}}}{C_F}
     \end{equation}

    Now, using our freedom to choose any $t>0$, we look for a $t=t(R)$ so that\[p(\abs{A}) e^{-\lambda t}=\mathcal{C}v F_{\alpha-2D}(R/2-1) \cdot e^{vt}.\] Explicitly, we get that
    \begin{equation}
        t(R)=\log \left(  (c^{-1}p(\abs{A})F_{\alpha-2D}(R/2-1)^{-1})^{1/(\lambda+v)}  \right),
    \end{equation}
    and
    \begin{equation}
        c\cdot F_{\alpha-2D}(R/2-1) \cdot e^{vt}=c^{\lambda/(\lambda+v)}\cdot p(\abs{A})^{v/(\lambda +v)} F_{\alpha '}(R/2-1)
    \end{equation}
    with a long-range decay given by $\alpha'=\dfrac{\lambda(\alpha-2D)}{\lambda +v}$.
    We  conclude the proof by observing that:
    \begin{equation}
         \abs{(\sigma-\pi)(O)}\leq \norm{O} \underbrace{ 2^{1+\alpha'}c^{\lambda/(\lambda+v)} p^{v/(\lambda +v)} }_{=: k}F_{\alpha '}(R-1)\,.
    \end{equation}
\end{proof}


\subsubsection{Mutual Information} \label{sec:MI}

Finally, we are ready to prove the main result of this section: local rapid mixing implies decay of mutual information on the fixed point for long-range open quantum systems. The proof is a generalization of the proof in \cite{kastoryano_rapid_2013} for finite-range systems to long-range systems without assuming neither detailed balance nor a Log-Sobolev constant \cite{KasTem13}. Long-range systems have the technical difficulty that the interactions extend to all the space, so the boundary terms are not localized in a compact support; this is solved thanks to Theorem \ref{thm:clusteringBoundary}.

\medskip

\begin{thm}[Mutual information decay]\label{thm:MutualInfoDecay}
    Let $(\Gamma,d)$ be a countable metric space equipped with the F-function\[F_\alpha(r):=\dfrac{1}{(1+r)^\alpha},\]for $r\geq 0$ and $\alpha>3D-1$. Let $\Li=\{L_Z\}_{Z\in \mathcal{P}_0(\Gamma)}$ be a local and primitive (Definition \ref{def:primitivity}), regular (Definition \ref{def:regular}) dissipative interaction with $\normm{\Li}{F}<\infty$, satisfying local rapid mixing. Fix $A,C\in \mathcal{P}_0(\Gamma)$ with $A\cap C=\emptyset$. 

    Let $\sigma$ be the unique fixed point of $\Li$ with convergence governed by $e^{-\lambda t}$ for $\lambda >0$. Then
    \begin{equation}
        I_\sigma(A:C)\leq k(\abs{A},\abs{C}) \cdot F_{\alpha '}(R-1),
    \end{equation}
     where $\alpha'=\dfrac{\lambda(\alpha-2D)}{\lambda +v}$, $v=\normm{\Li}{F}\cdot  C_F$ the Lieb-Robinson speed, $k$ is a polynomial function, and $R=d(A,C)$.
\end{thm}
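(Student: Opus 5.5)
The plan follows the finite-range argument of \cite{kastoryano_rapid_2013}: compare $\sigma$ with the fixed point $\pi$ of the decoupled Lindbladian $\LL=\Li+Q$ from Theorem \ref{thm:clusteringBoundary}. Here $Q$ removes every term $L_Z$ with $Z\cap\partial_{AC}\neq\emptyset$. Then $\LL$ splits as $\Li_{\La_A}+\Li_{\La_C}$, acting on the two disjoint halves $\La_A\supset A$ and $\La_C\supset C$ separated by the equidistant boundary. By regularity (Definition \ref{def:regular}), each of $\Li_{\La_A}$ and $\Li_{\La_C}$ is primitive, with unique full-rank fixed points $\pi_A$ and $\pi_C$. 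Hence the unique fixed point of $\LL$ is the product $\pi=\pi_A\otimes\pi_C$. Its marginal on $AC$ is a product state, so $I_\pi(A:C)=0$.

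First, I will transfer closeness of expectation values into trace-norm closeness of marginals. Theorem \ref{thm:clusteringBoundary} applies to every $O\in\A_{A\cup C}$, giving
\[
|(\sigma-\pi)(O)|\le \|O\|\,k(|A|,|C|)\,F_{\alpha'}(R-1).
\]
Taking the supremum over $\|O\|\le1$ yields $\|\sigma_{AC}-\pi_{AC}\|_1\le k\,F_{\alpha'}(R-1)$. The same bound holds for the single-site marginals, since $\A_A,\A_C\subset\A_{AC}$: $\|\sigma_A-\pi_A\|_1$ and $\|\sigma_C-\pi_C\|_1$ are bounded by the same quantity.

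Second, I will convert this into a bound on the mutual information. I use the variational identity
\[
I_\sigma(A:C)=S(\sigma_{AC}\|\sigma_A\otimes\sigma_C)\le S(\sigma_{AC}\|\pi_A\otimes\pi_C),
\]
where the inequality holds because the product of marginals minimizes the relative entropy over product states. Proposition \ref{prop:HOT} then gives
\[
S(\sigma_{AC}\|\pi_A\otimes\pi_C)\le \log\|(\pi_A\otimes\pi_C)^{-1}\|\cdot\|\sigma_{AC}-\pi_{AC}\|_1 .
\]
This applies since $\pi_{AC}=\pi_A\otimes\pi_C$ is full rank. Alternatively, if that log factor is badly controlled, I would use Fannes–Audenaert continuity for the three entropies $S(\sigma_A)$, $S(\sigma_C)$, $S(\sigma_{AC})$. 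That route costs a factor $\log\dim(\mathcal H_{AC})=(|A|+|C|)\log n$, which is polynomial, plus a term $\epsilon\log(1/\epsilon)$. The latter only mildly changes the decay rate, and it can be absorbed since $F_{\alpha'}$ is polynomial.

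The main obstacle is the prefactor $\log\|\pi_{AC}^{-1}\|$. Primitivity guarantees full rank, but it does not directly guarantee that the smallest eigenvalue is at least $e^{-O(|A|+|C|)}$, which is what a polynomial $k(|A|,|C|)$ requires. I would try first to argue this from locality: $\pi_A$ and $\pi_C$ are fixed points of local primitive Lindbladians, and regularity, applied uniformly to subsystems, should give a minimal eigenvalue on the $AC$-marginal at least $\mu^{|A|+|C|}$ for a local constant $\mu>0$. If this cannot be justified cleanly, I fall back on the Fannes–Audenaert route above. That route needs no spectral information and still yields $I_\sigma(A:C)\le k(|A|,|C|)F_{\alpha'}(R-1)$, up to a logarithmic correction absorbed into the constants.
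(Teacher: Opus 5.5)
Your argument follows the paper's proof step for step. You decouple across $\partial_{AC}$ so that $\pi_{AC}=\pi_A\otimes\pi_C$, use $I_\sigma(A:C)\le S(\sigma_{AC}\|\pi_A\otimes\pi_C)$, apply Proposition~\ref{prop:HOT}, and insert Theorem~\ref{thm:clusteringBoundary}. You are more careful than the paper on two points: you justify why $\pi_{AC}$ is a product, and you invert the correct operator. Proposition~\ref{prop:HOT} needs $\|(\pi_A\otimes\pi_C)^{-1}\|$, whereas the paper writes $\log\|\sigma_{AC}^{-1}\|$.

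\textbf{The gap.} The step you flag is a genuine gap, and your first patch does not close it. Primitivity and regularity are purely qualitative: they give $\lambda_{\min}(\pi_{AC})>0$ and nothing more. No hypothesis of the theorem yields $\lambda_{\min}(\pi_{AC})\ge\mu^{|A|+|C|}$, since the smallest eigenvalue can be made tiny by small rates and nothing bounds it uniformly. So ``regularity applied uniformly to subsystems'' is not an argument. The paper has the same hole: it simply declares the logarithmic prefactor to be $\mathrm{poly}(|A|,|C|)$.

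Along this route you need an extra quantitative assumption. A somewhat more natural one comes from applying Proposition~\ref{prop:HOT} with reference state $\sigma_A\otimes\sigma_C$ and using
\[
\|\sigma_{AC}-\sigma_A\otimes\sigma_C\|_1\le\|\sigma_{AC}-\pi_{AC}\|_1+\|\sigma_A-\pi_A\|_1+\|\sigma_C-\pi_C\|_1 .
\]
Then you only need $\log\|\sigma_A^{-1}\|+\log\|\sigma_C^{-1}\|\le\mathrm{poly}(|A|,|C|)$ for the true fixed point. This is still not implied by the stated hypotheses.

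\textbf{The fallback.} Your Fannes--Audenaert route is valid but proves a weaker statement than you claim. Set $\epsilon:=\frac12\|\sigma_{AC}-\pi_{AC}\|_1\le\frac12 k\,F_{\alpha'}(R-1)$ and use $I_\pi(A:C)=0$. Continuity of the three entropies then gives
\[
I_\sigma(A:C)\le 2\epsilon\,(|A|+|C|)\log n+3h(\epsilon),
\]
with $h$ the binary entropy. When the bound on $\epsilon$ is saturated, $h(\epsilon)\approx\epsilon\log(1/\epsilon)$ is of order $k\,\alpha'\,F_{\alpha'}(R-1)\log R$. A factor growing like $\log R$ cannot be absorbed into a polynomial $k(|A|,|C|)$. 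You can keep it and obtain $k'(|A|,|C|)\,(1+\log R)\,F_{\alpha'}(R-1)$. Alternatively, trade it for a loss in the exponent via $x\log(1/x)\le C_\delta x^{1-\delta}$, obtaining $k_\delta(|A|,|C|)\,F_{\alpha'(1-\delta)}(R-1)$ for every $\delta>0$.

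This loss is unavoidable without spectral information. The state $(1-\epsilon)|00\rangle\langle00|+\epsilon|11\rangle\langle11|$ is $\epsilon$-close in trace distance to a product state, which can be taken full rank with a tiny smallest eigenvalue. Yet its mutual information is exactly $h(\epsilon)$. As written, your proposal establishes the theorem only up to this logarithmic correction, or under an additional minimal-eigenvalue hypothesis such as the one above.
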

\begin{proof}

    As in Theorem \ref{thm:clusteringBoundary}, let $Q:=-\sum\limits_{\substack{Z\subset \Lambda \\ Z\cap \partial_{AC}\neq \emptyset}}L_Z\in CB_0(\partial_{AC})$ be a perturbation constructed by removing the elements of $\Li$ that cross the boundary $\partial_{AC}$; and $\LL:=\Li+ Q $ the perturbed Lindbladian. Since $\Li$ is regular (see Definition \ref{def:regular}), $\LL$ is primitive; let $\pi$ be the fixed point of $\LL$. We denote by $\sigma_{AC}$ to $\tr_{(AC)^c}(\sigma)$ to the reduced state in $AC$, the mutual information is quantifying the difference (by the relative entropy) between the state $\sigma_{AC}$ and the tensor product $\sigma_A\otimes \sigma_C$:
    \begin{multline}
        S(\sigma_{AC} ||\sigma_A\otimes\sigma_C) =- S(\sigma_{AC})+S(\sigma_A)+S(\sigma_C)\\
        \leq -S(\sigma_{AC})-\text{tr}(\sigma_A \log\pi_A)-\text{tr}(\sigma_C \log \pi_C) =S(\sigma_{AC}||\pi_A\otimes\pi_C),
    \end{multline}
    where in the second line we used that $S(\sigma_{A}\otimes \sigma_C||\pi_{A}\otimes \pi_C)\geq 0$. Note also that \[S(\sigma ||\sigma_A\otimes\sigma_C)={I_\sigma(A:C)},\]is the mutual information between subsystems $A$ and $C$. Using the operator inequality\footnote{Since $\pi$ is a fixed point of $\tilde{\Li}$ that is composed of the local terms of a primitive Lindbladian $\Li$, then, $\pi_A\otimes\pi_C >0$.} $S(\rho ||\sigma)\leq \log \norm{\sigma^{-1}}\cdot \normm{\rho-\sigma}{1}$, we obtain the following bound for the mutual information:
    \begin{multline}
        I_\sigma (A:C)=S(\sigma_{AC} ||\sigma_A\otimes\sigma_C)       \leq S(\sigma_{AC} ||\pi_A\otimes\pi_C)
        \leq \log(\norm{\sigma_{AC}^{-1}}) \normm{\sigma_{AC}-\pi_A\otimes \pi_C}{1}\\
        = \log(\norm{\sigma_{AC}^{-1}}) \cdot \sup_{O\in \A_{AC}} \abs{(\sigma-\pi)(O)} \leq \underbrace{\log(\norm{\sigma_{AC}^{-1}})}_{\poly(\abs{A},\abs{C})} \cdot k \cdot F_{\alpha '}(R-1),
        \label{eq:proofDecayMI}
    \end{multline}
     where $\alpha'=\dfrac{\lambda(\alpha-2D)}{\lambda +v}$, $v=\normm{\Li}{F}\cdot  C_F$ the Lieb-Robinson speed, $k=k(\abs{A},\abs{C},\abs{\partial_{A}},\abs{\partial_{C}})$ is a polynomial function, and $R=d(A,C)$. We have used the Theorem \ref{thm:clusteringBoundary} for the last inequality. This proves the claim.
\end{proof}

A natural question is whether Theorem \ref{thm:MutualInfoDecay} can be recovered directly from the decay of correlations of fixed points of rapidly-mixing Lindbladians, such as the ones obtained in \cite{cubitt_stability_2015,roon_quasi-locality_2024}. One can, in fact do this, but obtaining exponential overheads that do not appear in Theorem \ref{thm:MutualInfoDecay}.
In \cite{roon_quasi-locality_2024}, they establish bounds of the form\[\abs{\pi(AC)-\pi(A)\pi(C)}\leq f(R)\]with a polynomially decaying function $f$, and for observables $O_A\in\A_A$ and $O_C\in\A_C$. This is done through a combination of the global rapid mixing condition and the dissipative Lieb-Robinson bound, and does not rely on the analysis of perturbations considered here. 

One can prove the decay of MI from Lemma by bounding the term
\begin{equation}
    \normm{\sigma_{AC}-\sigma_A\otimes\sigma_C}{1}=\sup_{O\in\A_{AC}}\dfrac{\abs{(\sigma_{AC}-\sigma_A\otimes\sigma_C)(O)}}{\norm{O}},
\end{equation}
decomposing a general $O\in\A_{AC}$ in a tensor product basis $\{E_i\otimes F_j\}$ and applying decay of correlations to each of the basis term.
However, the number of non-vanishing terms in the decomposition,\[O=\summ{i=1}^{\text{dim}(\A_A)}\summ{j=1}^{\text{dim}(\A_C)}c_{ij}E_i\otimes F_j,\]grows exponentially with $\abs{A},\abs{C}$. More precisely, it grows as $d^{2(\abs{A}+\abs{C})}$ where $d$ is the local dimension. Through a tighter bound \cite{Correa_2022}, one can also reduce to just a prefactor $d^{\abs{A}}$.
This pre-factor carries through in a bound of the mutual information, and so the dependence on the size of the subsystem is exponential.
Instead, Theorem \ref{thm:MutualInfoDecay} has a polynomial dependence with the sizes of $A$ and $C$. In the language of the introduction, Theorem \ref{thm:MutualInfoDecay} thus proves a \emph{global} decay of the MI, instead of a local or pairwise. 


\section{Gibbs states of Long-Range Hamiltonians}\label{sec:thermal}

We now specifically target thermal states of (non-)commuting Hamiltonians. There are Lindbladians that have them as their exact fixed point (see \cite{chen_efficient_2023,Ding_2025}) but these are not strictly $k$-local. Instead, their jump operators are quasi-local, with tails controlled by Lieb-Robinson light-cones \cite{rouze_efficient_2025}, so that even under rapid mixing (which can be proven in some regimes \cite{rouze_optimal_2024,bakshi_dobrushin_2025,smíd2026rapidmixingquantumgibbs}) we cannot directly apply the results from Section \ref{sec:long-range}.

However, it is still possible to prove a Markov property for Gibbs states directly, using the specific structure of their generating KMS Lindbladians. This is done by considering the recovery map constructed in \cite{chen_quantum_2025} through the CKG Lindbladian from \cite{chen_efficient_2023}. Here, we extend this proof to thermal states of long-range Hamiltonians,  which are the fixed point of certain long-range quasi-local Lindbladians. 


\subsection{CKG Lindbladian}\label{subsec:CKGLindbladian}
We now introduce the CKG generator from \cite{chen_efficient_2023}. We set a $\beta >0$ and a non-commuting long-range Hamiltonian $H$, as described above, over a $D$-dimensional lattice of spin systems $\Lambda=\mathbb{Z}^D$, and a set of jumps $\{\vA_a\}_a$ containing their adjoints each with norm $\norm{\A_a}\leq 1$. Then, the CKG Lindbladian is defined as
\begin{multline}
    \Li^{\beta,B}(\rho ):=-i [B,\rho]\\
    +\sum_{i\in\Lambda,\alpha\in[3]}\int_{-\infty}^\infty\gamma(\omega) \left ( \vA_{i,\alpha}(\w)\rho \vA_{i,\alpha}(\w)^\dagger-\dfrac{1}{2}\left \{\vA_{i,\alpha}(\w)^\dagger \vA_{i,\alpha}(\w),\rho\right \}   \right)    d\omega    \equiv \sum_{i\in\Lambda}\Li_i^\beta(\rho)
\label{eq:LindbladianChen}
\end{multline}
with the ``Metropolis`` weight\footnote{There are other possible weights that can be considered, such as the Gaussian transition weight which sometimes simplifies computations \cite{chen_efficient_2023,Ding_2025}.} $\gamma(\omega)$ defined as
\begin{equation}
    \gamma(\omega)=\exp \left ( -\beta \max \left ( \w +\dfrac{\beta \hat\sigma^2}{2},0\right ) \right ),
    \label{eq:metropolisWeight}
\end{equation}
with an energy width $\hat{\sigma}>0$. Most of the times, for simplicity, we omit the $\beta $ subscript of $\Li^\beta$ and write $\Li$ if the temperature is known by the context.
The Fourier transform of the jump operators is defined as
\begin{equation}
    \vA_{i,\alpha}(\w):=\dfrac{1}{\sqrt{2\pi}}\int_{-\infty}^\infty e^{i H t}\vA_{i,\alpha}e^{-iHt}e^{-i\w t}f(t) dt
    \label{eq:defA(w)}
\end{equation}
with a filter function $f(t):=\text{exp}(-t^2/\beta^2)\sqrt{\beta^{-1}\sqrt{2/\pi}}$, with a decay at large $t$ that ensures the quasi-locality of $\vA_{i,\alpha}(\w)$. 

In most of the cases, the jump operators are chosen to be the single-qubit Pauli jumps: $\vA_{i,x}=X^i,\vA_{i,y}=Y^i,\vA_{i,z}=Z^i$, where $\{X^i,Y^i, Z^i\}$ denotes the Pauli matrices $\{X,Y,Z\}$ acting on the $i$-th site\footnote{Meaning that $X^i=\Id_{\{i\}^C}\otimes X^i$.}. Moreover, we define the set of single-qubit Pauli jumps for the region $A\subset \La$ as
\begin{equation}
    P_A^1:=\{X^i,Y^i,Z^i\}_{i\in A},
    \label{eq:pauliJumpsInA}
\end{equation}
to be the set of all single-site Pauli operators in $A$.

The localized CKG Lindbladian in a region $A\subset \La$ is defined as
\begin{equation}
    \Li_A:= \summ{a \in P_A^1}\Li_a,
    \label{eq:localizationLindbladian}
\end{equation}
where each $\Li_a$ is the Lindbladian corresponding to each single-qubit Pauli jump $\vA_a$ with Metropolis weight. The above way of denoting the localized CKG should not be confused with the notation for strictly local Lindbladians from Equation \eqref{eq:localInteraction}; in Equation \eqref{eq:localizationLindbladian}, each $\Li_A$ is not strictly supported in $A$ due to the quasi-locality of the jump operators.

The Hamiltonian term is given by
\begin{equation}
    B=\sum_{i\in\Lambda,\alpha\in[3]}\int_{-\infty}^\infty b_1(t) e^{-i\beta Ht}\left ( \int_{-\infty}^\infty b_2(t')e^{i\beta Ht'}\vA_{i,\alpha}e^{-2i\beta Ht'}\vA_{i,\alpha}e^{i\beta Ht'}dt' \right ) e^{i\beta Ht}dt
\end{equation}
where $b_1,b_2$ are also fast-decaying functions with $\normm{b_1}{1},\normm{b_2}{1}\leq 1$ defined as
\begin{equation}
    b_1(t):= 2\sqrt{\pi}e^{1/8}\left (  \dfrac{1}{\text{cosh}(2\pi t)}\star_t \sin(-t)e^{-2t^2}  \right ),
\end{equation}
\begin{equation}
    b_2(t):=\dfrac{1}{2\pi} \sqrt{\dfrac{1}{\pi}} \exp (-4t^2-2it).
\end{equation}

This Lindbladian has the Gibbs state $\rho_\beta$ as a fixed point, and is quasi-local in the sense that it can be approximated by a strictly local Lindbladian with an error that decays polynomially as shown in Lemma \ref{lem:quasiLocality}.


\subsection{Recovery map}

For simplicity, in this section we refer to $\mathcal L^{\beta H}$, the CKG Lindbladian, simply by $\mathcal{L}$.
 We follow the proof of \cite{chen_quantum_2025}, where they crucially use a recovery map defined as the temporal average of a dissipative evolution given by a Lindbladian $\Li$:
\begin{equation}
   \mathcal{R}^{t}[\cdot]:=\dfrac{1}{t}\int_0^t e^{s\Li^*}[\cdot ]ds, 
\end{equation}
and the localized\footnote{Recall that $\Li_A$ is defined in Equation \eqref{eq:localizationLindbladian}.} recovery map in the region $A\subset \La$:
\begin{equation}
   \mathcal{R}^{t}_A[\cdot]:=\dfrac{1}{t}\int_0^t e^{s\Li_A^*}[\cdot ]ds.
\end{equation}
In Theorem III.$1$ \cite{chen_quantum_2025}, they prove that there are positive numbers $r, \mu > 0$ and $0 < \lambda < 1$ depending only on $\beta, k$ such that
\begin{equation}
    \bigl\| \mathcal{R}^{t}_A[\rho_{\beta,-A}] - \rho_\beta \bigr\|_1
    \le r \, e^{\mu |A|} \, t^{-\lambda}.
\end{equation}

The objective of this section is to generalize this theorem to long-range Hamiltonians.

There are two main technical differences when it comes to considering a long-range Hamiltonian instead of finite-range systems.
\begin{enumerate}
    \item \textit{Change of the temperature threshold of the imaginary-time Lieb-Robinson bound}: The temperature threshold $\beta_{max}$ from which the term\[\norm{e^{-\beta H}Oe^{\beta H}}<\infty,\quad \beta <\beta_{max};\]converges will change with respect to the finite-range case and will depend on the locality of the Hamiltonian. See Lemma \ref{lem:imaginary_time_lr}.
    \item \textit{Quasi-locality of the jump operators:} The CKG Lindbladian that has as the fixed-point the Gibbs state of a long-range Hamiltonian is polynomially quasi-local; in other words, the jump operators can be approximated by local operators with a polynomially decaying error. See Lemma \ref{lem:quasiLocality}. 
\end{enumerate}

The imaginary-time Lieb-Robinson bounds for long-range Hamiltonians are controlled, meaning that they are finite, for high enough temperatures. The following Lemma provides a temperature threshold by controlling the norm of nested commutators seeing them as elements of a polymer inspired in \cite{sanchez-segovia_high-temperature_2025}.

\medskip
\begin{lem}[Imaginary-time Lieb-Robinson bound]
\label{lem:imaginary_time_lr}
Let $H = \sum_{Z \subset \Lambda} h_Z$ be a $k$-local long-range Hamiltonian
on a $(\kappa, D)$-regular lattice. Define:
\begin{align}
    E &:= \sup_{i \in \Lambda}\sum_{\substack{Z \subset \Lambda \\ Z \ni i}}
    \|h_Z\|, \label{eq:local_energyMain}\\
    u &:= \sup_{i \in \Lambda} \sum_{j \in \Lambda} F_\alpha(d(i,j))
    \leq \|F_\alpha\| < \infty, \label{eq:u_defMain}\\
    g &:= \max_{Z} \frac{\|h_Z\|}{E} \leq 1. \label{eq:g_defMain}
\end{align}
Here $E$ is the local energy density (maximum single-site energy contribution),
$u = \|F_\alpha\|$ is the F-function norm from Definition~\ref{def:Ffunction},
and $g \in (0,1]$ is a normalized coupling constant.

For any $O_Z \in \A_Z$ and $x \in \mathbb{C}$ with
$\mathrm{Re}(x) > 0$ and $|x| < \frac{1}{8guk}$:
\begin{equation}
    \left\|e^{xH} O_Z\, e^{-xH}\right\| \leq \frac{\|O\|\cdot \abs{Z}}{1 - 8gEu|x|}.
    \label{eq:imag_lrMain}
\end{equation}
The temperature threshold is 
\begin{equation}
    \beta_{\max} := \frac{1}{8guk}.
    \label{eq:temperatureThreshold}
\end{equation}
\end{lem}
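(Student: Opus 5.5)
We expand the imaginary-time conjugation as a power series in nested commutators and control the resulting sums combinatorially, in the spirit of the polymer/cluster arguments of \cite{sanchez-segovia_high-temperature_2025}. Concretely, we write
\begin{equation*}
    e^{xH}O_Z e^{-xH}=\sum_{m\ge 0}\frac{x^m}{m!}\,\mathrm{ad}_H^m(O_Z)
    =\sum_{m\ge0}\frac{x^m}{m!}\sum_{Z_1,\dots,Z_m}[h_{Z_m},[\dots,[h_{Z_1},O_Z]\dots]],
\end{equation*}
where only \emph{connected} sequences contribute: $Z_1\cap Z\neq\emptyset$ and $Z_j\cap (Z\cup Z_1\cup\dots\cup Z_{j-1})\neq\emptyset$, since otherwise the commutator vanishes. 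Each nonvanishing term is bounded by $2^m\|O_Z\|\prod_j\|h_{Z_j}\|$. The whole task then reduces to bounding $\sum_{\text{connected }(Z_1,\dots,Z_m)}\prod_j\|h_{Z_j}\|$ by something of the form $|Z|\, m!\,(c\,gEu k)^m$, after which the series becomes geometric. This yields a bound $\|O\||Z|\sum_m (8gEu|x|)^m=\|O\||Z|/(1-8gEu|x|)$, with convergence for $|x|$ below the threshold $\beta_{\max}$.

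For the counting step, we build the sequence one set at a time. At step $j$ we choose an anchor site $i$ in the current support $Z\cup Z_1\cup\dots\cup Z_{j-1}$, which has at most $|Z|+(j-1)k$ sites by $k$-locality. We then sum $\|h_{Z_j}\|$ over $Z_j\ni i$, which gives at most $E$. The product of anchor choices is $\prod_{j=1}^m(|Z|+(j-1)k)\le |Z|\,k^{m-1}m!\,(1+|Z|/k)^{\dots}$. To keep the prefactor only linear in $|Z|$, we instead use the cruder estimate $\prod_j(|Z|+(j-1)k)\le |Z|\,\prod_{j\ge2}(jk)\cdot$const, assuming $|Z|\le k$, or absorb the extra dependence in the standard way. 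Combined with $1/m!$, this leaves $(2kE|x|)^m$-type factors.

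The factors $g$ and $u$ enter when we refine the sum over $Z_j$. We split $\|h_{Z_j}\|\le gE$ against the long-range weight: the number of admissible $Z_j$ containing the anchor is controlled through the doubly anchored decay $\sum_{Z\ni i,j}\|h_Z\|\le \|H\|_F F_\alpha(d(i,j))$. Summing over the second site $j$ produces $u=\sup_i\sum_j F_\alpha(d(i,j))$. Arranging the constants (a factor $2$ from each commutator and a factor $4$ from the combinatorics of choosing an anchor and a partner site) produces the $8guk$ and $8gEu$ appearing in the statement.

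The main obstacle is precisely this tree/anchor counting with long-range, unbounded-diameter terms. We must make sure that the $m!$ from the number of orderings cancels the $1/m!$ exactly, and that no additional volume factor $|\Lambda|$ appears. We would first try the Hastings-style bound on nested commutators: each $Z_j$ is charged to one earlier set through a single overlapping site, giving an $F$-weighted tree sum controlled by $u^m$. Only if the constants fail to match would we fall back on a cluster-expansion estimate with a Kotecký--Preiss criterion, which gives the same threshold up to constants. The condition $\mathrm{Re}(x)>0$ plays no role beyond fixing conventions, since the series bound depends only on $|x|$.
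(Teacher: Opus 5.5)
There is a genuine gap, and it is exactly the step you hedge on: reducing everything to $\sum_{\mathrm{connected}}\prod_j\|h_{Z_j}\|\le |Z|\,m!\,C^m$ with $C$ independent of $|Z|$. Your anchor counting actually gives $E^m\prod_{j=0}^{m-1}(|Z|+jk)=E^m k^m\,\Gamma(|Z|/k+m)/\Gamma(|Z|/k)$. This is not $O(|Z|\,m!\,k^m)$ unless $|Z|$ is bounded. ``Assuming $|Z|\le k$'' is an extra hypothesis, and there is no standard way to absorb the remainder, because a prefactor linear in $|Z|$ is false for general $O_Z$.

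Take $H=\sum_{i\in\Lambda}\sigma^z_i$, so that $k=1$, $E=g=1$, and the power-law condition holds trivially. Take $O=\bigotimes_{i\in Z}\sigma^x_i$, with $\|O\|=1$. Since $e^{x\sigma^z}\sigma^x e^{-x\sigma^z}=\sigma^x e^{-2x\sigma^z}$, one gets exactly $\|e^{xH}Oe^{-xH}\|=e^{2x|Z|}$ for real $x>0$. At the admissible value $x=1/(16u)$ the claimed bound is $2|Z|$, which fails once $|Z|$ is large. Your intermediate target already fails at $m=2$: here the connected sum is exactly $|Z|^2$, which exceeds $2|Z|C^2$ once $|Z|>2C^2$. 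What your counting actually proves is
\[
\|e^{xH}O_Ze^{-xH}\|\le\|O_Z\|\sum_{m\ge0}\frac{(2E|x|)^m}{m!}\prod_{j=0}^{m-1}(|Z|+jk)=\|O_Z\|\,(1-2kE|x|)^{-|Z|/k},\qquad |x|<\frac{1}{2kE},
\]
which is exponential in $|Z|$. For $|Z|\le k$ it becomes $\|O_Z\|/(1-2kE|x|)$. That suffices for the single-site jumps to which the paper applies the lemma, but it is not the statement as written.

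The paper's proof follows the same route as yours: a power series in $\mathrm{ad}_H$, a polymer bound, and a geometric series. The difference is that it does not derive the combinatorial estimate. It quotes $\|\mathrm{ad}^n_H(O_Z)\|\le\|O_Z\|\,|Z|\,n!\,(8guk)^n$ from \cite{sanchez-segovia_high-temperature_2025}, where the $n!$ is what its geometric summation needs. The example above shows that no bound of that form with a $|Z|$-independent constant can hold, since summing it would give a linear-in-$|Z|$ bound. So the citation does not close the gap you ran into either.

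Two further steps in your sketch are not arguments.
\begin{itemize}
\item The factors $g$ and $u$ are inserted by ``arranging the constants''. Summing the doubly anchored bound over the partner site only yields $E\le\|H\|_F\,u$. Bounding each $\|h_{Z_j}\|\le gE$ while counting the admissible $Z_j\ni i$ produces a $|\Lambda|$-dependent count in the long-range setting.
\item You treat $8guk$ and $8gEu$ as interchangeable, but the hypothesis $|x|<1/(8guk)$ carries no energy scale. Under $H\to\lambda H$, the quantities $g,u,k$ are unchanged while $E\mapsto\lambda E$. So the hypothesis does not ensure $8gEu|x|<1$, and your geometric series need not converge under it.
\end{itemize}
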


\begin{proof}
    Proof in Appendix, see Lemma \ref{lem:imaginary_time_lr_appendix}.
\end{proof}

The temperature threshold $\beta_{max}$ from Equation \eqref{eq:temperatureThreshold} is the temperature from which the imaginary-time Lieb-Robinson bound converges. 
This convergence is necessary to bound the product of two operators in the so-called weighted-norm defined as
\begin{equation}
    \normm{\cdot }{\rho}:=\normm{\rho^{1/4}\cdot \rho^{1/4}}{2},
\end{equation}
using the $2$-norm. More specifically, \cite{chen_quantum_2025} proved that for any operators normalized by $\|O\| \le 1$, $\|A\| \le 1$, and any pair of inverse temperatures 
$\beta_0, \beta$ such that $\beta > 4\beta_0 > 0$,
\begin{equation}
    \| AO \|_{\rho_\beta}, \, \| OA \|_{\rho_\beta}
    \lesssim 
    \left(
        \frac{ e^{\sigma^2 \beta'^2} }{ \beta' \sigma }
        + \frac{ e^{\sigma^2 \beta_0^2} }{ \beta_0 \sigma }
    \right)
    \| O \|_{\rho_\beta}^{\frac{4\beta_0}{\beta}}
    \Bigl(
        \| \rho_{\beta_0} A \rho_{\beta_0}^{-1} \|
        + \| \rho_{\beta_0}^{-1} A \rho_{\beta_0} \|
    \Bigr),
\end{equation}
where $\beta' := \beta/4 - \beta_0$. In this way, the product of two operators in the weighted-norm at low temperatures can be bounded with the imaginary-time Lieb-Robinson bound at a lower inverse temperature $\beta_0:=\beta_{max}/2$, in a way that it converges. This can be done after bounding
\begin{equation}
    \| \vA(\omega) \| 
    \le 
    \frac{ e^{ -\beta \omega + \sigma^2 \beta^2 } }{ \sqrt{ \sigma \sqrt{2\pi} } } 
    \bigl\| e^{\beta H} \vA e^{ -\beta H } \bigr\|.
\end{equation}

For our purpose, the choice of $\beta_0$ is determined by Lemma \ref{lem:imaginary_time_lr}
as $\beta_0=\beta_{max}/2=\dfrac{1}{16 guk},$ in a way that
\begin{equation}
\normm{e^{xH}O_Z e^{-xH}}{\rho_{\beta_0}}=\normm{O_Z}{}\abs{Z}.
\label{eq:imaginaryTimeLiebRobinsonInBeta0}
\end{equation}

We omit the proof of the following Theorem since the only thing that changes with respect to the Theorem III.1 \cite{chen_quantum_2025} is the temperature threshold $\beta_{max}$ from Equation \eqref{eq:imaginaryTimeLiebRobinsonInBeta0}, the rest of the proof does not depend on the range of the Hamiltonian.

\medskip
\begin{thm}[Generalization of Theorem III.1 \cite{chen_quantum_2025}: Quasi-local recovery maps via time-averaged Gibbs sampling]\label{thm:quasi-Local-RecoveryMap-longRange}
Consider the Gibbs state of a $k$-local long-range Hamiltonian $H$ and a region $A \subset \Lambda$. Then, the time-averaged Lindblad dynamics 
$\mathcal{R}_{A,t}$ with single-qubit Pauli jumps $\{\vA_\alpha\}_{\alpha \in P_A^1}$ (see Equation \eqref{eq:pauliJumpsInA})
Metropolis weight $\gamma(\omega)$ (see Equation \eqref{eq:metropolisWeight}) and $\sigma = 1/\beta$, $t > 0$ gives 
an approximate recovery map at all temperatures $\beta$:
\begin{equation}
    \bigl\| \mathcal{R}^{t}_A[\rho_{\beta,-A}] - \rho_\beta \bigr\|_1
    \le |A|^22^{2|A|}
    \begin{cases}
        r(\beta, k)\, t^{-\frac{128 \beta_0^4}{\beta^3(\beta + 5\beta_0)}}, & \text{if } \beta > 4\beta_0, \\[0.6em]
        r'(\beta, k)\, t^{-\frac{2\beta_0}{\beta + 5\beta_0}}, & \text{if } \beta \le 4\beta_0,
    \end{cases}
\end{equation}
for some explicit functions $r(\beta,k)$ and $r'(\beta,k)$, where $\beta_0 := 1/16guk$. 
Therefore, there are numbers $r, \mu > 0$ and $0 < \lambda < 1$ depending only on $\beta, d$ such that
\begin{equation}
    \bigl\| \mathcal{R}^{t}_A[\rho_{\beta,-A}] - \rho_\beta \bigr\|_1 
    \le r\, e^{\mu |A|}\, t^{-\lambda},
    \label{eq:recoveryMap}
\end{equation}
for every $\beta >0$.
\end{thm}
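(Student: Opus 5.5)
The plan is to follow the architecture of the proof of Theorem III.1 in \cite{chen_quantum_2025} and isolate the places where finite range is used. The strategy there is as follows. One first notes that $\rho_\beta$ is an exact fixed point of the full CKG Lindbladian $\Li$, and hence of every single-jump term $\Li_a$, since each $\Li_a$ satisfies KMS detailed balance with respect to $\rho_\beta$ individually. Consequently $\rho_\beta$ is also fixed by $\Li_A$. Then $\rho_\beta-\rho_{\beta,-A}$ is written as a sum of at most $|A|^2 2^{2|A|}$ terms of the form $P\rho_\beta P^\dagger-\rho_\beta$, or more generally $\rho_\beta$-weighted Pauli sandwiches, by expanding the complete depolarizing channel on $A$ in the Pauli basis. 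This expansion is where the prefactor $|A|^22^{2|A|}$ comes from, and it is independent of the range of $H$.

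Each resulting term is an element $X$ orthogonal, in the KMS inner product, to the kernel of $\Li_A^*$ restricted to the relevant sector. The time average then satisfies
\[
\mathcal R^t_A[X]=\tfrac1t\bigl(e^{t\Li_A^*}-1\bigr)(\Li_A^*)^{-1}[X]
\]
whenever $X$ lies in the range of $\Li_A^*$. The difficulty is that $\Li_A^*$ has no gap. Following Chen--Rouz\'e, I would instead show that $X$ is approximately of the form $\Li_A^*[Y]$ with controlled $\|Y\|_{\rho_\beta}$. This uses the explicit structure of the Metropolis CKG generator: the sandwich $P\rho_\beta P^\dagger$ can be reconstructed from the operator Fourier transforms $\vA(\omega)$. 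The desired decay $t^{-\lambda}$ would then come from an interpolation between $\|Y\|$ and an approximation error. That interpolation is exactly the weighted-norm product inequality displayed above, with exponent $4\beta_0/\beta$, which explains the two regimes $\beta>4\beta_0$ and $\beta\le4\beta_0$ and the specific exponents in the statement.

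The only range-dependent input is the boundedness of imaginary-time conjugations $\|\rho_{\beta_0}^{\pm1}\vA\rho_{\beta_0}^{\mp1}\|$ and the resulting bound on $\|\vA(\omega)\|$. In the finite-range proof these are supplied by a cluster-expansion bound valid for $\beta_0$ below a threshold depending on the local degree. Here I would replace this with Lemma~\ref{lem:imaginary_time_lr}, choosing $\beta_0=\beta_{\max}/2=1/(16guk)$ so that $8gEu\beta_0\le1/2$. This gives $\|e^{\pm\beta_0H}\vA e^{\mp\beta_0H}\|\le 2$ for single-site jumps, since $|Z|=1$. Plugging this into the bound for $\|\vA(\omega)\|$ and into the weighted-norm inequality reproduces every constant of the finite-range proof with $\beta_0$ replaced by $1/(16guk)$. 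The case split then yields the two displayed rates, and absorbing $|A|^22^{2|A|}$ into $e^{\mu|A|}$ and taking $\lambda$ to be the smaller exponent gives \eqref{eq:recoveryMap}.

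The main obstacle is verifying that nothing else in Chen--Rouz\'e secretly uses finite range. The suspect step is the quasi-locality of $\vA(\omega)$, which might enter when relating $\Li_A$ to a strictly local object. My plan is to check that Theorem III.1 itself, as opposed to the later CMI corollary, only uses norm and KMS-structural properties of $\vA(\omega)$, $B$ and $\gamma$, with no support assumptions. If a support assumption does appear, I would defer it to Lemma~\ref{lem:quasiLocality} and the subsequent CMI theorem rather than to this statement.
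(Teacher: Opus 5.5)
Your proposal is correct and follows the paper's approach. The paper also omits the details and asserts that the proof of Theorem III.1 of \cite{chen_quantum_2025} goes through unchanged once the finite-range imaginary-time bound is replaced by Lemma~\ref{lem:imaginary_time_lr} at $\beta_0=\beta_{\max}/2=1/(16guk)$, which is the substitution you make, together with absorbing $|A|^2 2^{2|A|}$ into $e^{\mu|A|}$. Your description of the internal mechanism (approximately writing each Pauli-sandwich term as $\mathcal{L}_A^*[Y]$) is only a loose paraphrase of Chen--Rouz\'e's argument, which relates the Dirichlet form of the time-averaged observable to KMS commutators; since both you and the paper import that argument wholesale, this does not affect the conclusion.
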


\subsection{Quasi-locality of the recovery-map}

Although Equation \eqref{eq:recoveryMap} provides a recovery map for every temperature, it is not strictly localized in the region $A$, but it is a quasi-local recovery map. In order to be able to bound the CMI, we need a recovery map that is localized in $AB$. This section computes the error made by approximating $\mathcal{R}_A^t$ by a strictly local map $\RR_{A,l}^t$ supported on $A(l)$; in particular, when considering $l=R=d(A,C)$, then note that $A(l)=A(R)=AB$, which is what is needed for bounding the CMI.

Firstly, we will relate the locality of the recovery map with the locality of the Lindbladian. For that purpose, we need to adapt the Lieb-Robinson bound for long-range Hamiltonians from a bound on the commutator to a bound on truncating the Hamiltonian.

\medskip
\begin{thm}[Truncation bound under the Hastings--Koma Lieb--Robinson estimate]\label{thm:LRboundLongRange}
Let $H = \summ{Z\subset \La} h_Z$ be a long-range Hamiltonian on $\La \subset \subset \mathbb{Z}^D$ such that Theorem \ref{th:HKLR} holds, and fix a region $X\subset \La$. 

Then, for every operator $O_X\in\A_X$ supported on $X$, if the interaction rate satisfies $\alpha>\eta>d$, 
then there exists a constant 
\begin{equation}
\big\|T_{H,t}(O_X) - T_{H_R,t}(O_X)\big\|
\ \leq\  c\,
\|O_X\|\, \abs{X}\,(\mathrm{e}^{v|t|}-1-v|t|)\,F_{\alpha-D}(d(X,Y)).
\label{eq:difTimeEvolutionsRMain}
\end{equation}

\end{thm}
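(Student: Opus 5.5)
The plan is to compare the two Heisenberg evolutions with a Duhamel interpolation and then control the integrand with the Hastings--Koma bound, Theorem~\ref{th:HKLR}. Here $H_R:=\sum_{Z\subset Y}h_Z$ denotes the Hamiltonian truncated to a region $Y\supset X$ with $R=d(X,\La\setminus Y)$; the distance $d(X,Y)$ in the statement is read as the distance from $X$ to the complement of the truncation region. Writing $T_{H,t}(O)=e^{iHt}Oe^{-iHt}$, the first step is the identity
\begin{equation}
T_{H,t}(O_X)-T_{H_R,t}(O_X)=\int_0^t T_{H,t-s}\bigl(i[H-H_R,\,T_{H_R,s}(O_X)]\bigr)\,ds .
\end{equation}
Since $T_{H,t-s}$ is an isometry in operator norm, it suffices to bound $\norm{[H-H_R,T_{H_R,s}(O_X)]}$. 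We have $H-H_R=\sum_{Z\not\subset Y}h_Z$, and every such $Z$ contains a site $y$ with $d(X,y)\ge R$.

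The second step organizes these boundary terms by their nearest point to $X$. For each $Z\not\subset Y$, let $r_Z:=d(X,Z)$. Apply Theorem~\ref{th:HKLR} to the evolution $T_{H_R,s}$; the constants are uniform in the interaction, and $H_R$ is a sub-interaction of $H$. This gives
\begin{equation}
\norm{[h_Z,T_{H_R,s}(O_X)]}\le c\,\norm{O_X}\norm{h_Z}\,\abs{X}\abs{Z}\,\frac{e^{v s}-1}{r_Z^{\alpha}} .
\end{equation}
The $k$-locality gives $\abs{Z}\le k$. For terms with $r_Z$ small, which are long-range terms touching the vicinity of $X$ but reaching outside $Y$, we instead use the trivial bound $2\norm{h_Z}\norm{O_X}$ together with $\norm{h_Z}\le J(1+\mathrm{diam} Z)^{-\alpha}$ and $\mathrm{diam}(Z)\ge R-r_Z$. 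This is the same $\Omega/\Theta$ dichotomy used in Theorem~\ref{thm:clusteringBoundary}.

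The third step does the summation. Sum over anchor sites $x$ at distance $r$ from $X$; by regularity there are at most $\kappa\abs{X}(1+r)^{D-1}$ of them. For each anchor, sum over $Z\ni x$ using the $F$-norm bound \eqref{eq:FNormLindbladian}, adapted to $H$. The terms with $r\ge R$ contribute $\abs{X}\sum_{r\ge R}(1+r)^{D-1}F_\alpha(r)\lesssim \abs{X}F_{\alpha-D}(R)$. The terms with $r<R$ require the term to reach distance $R$, so they are bounded by a convolution $\sum_{r<R}(1+r)^{D-1}F_{\alpha}(r)\,F_{\alpha-D+1}(R-r)$. Using the convolution property of $F$-functions, this is again $\lesssim F_{\alpha-D}(R)$, provided the intermediate exponent exceeds $D$; this is where the hypothesis $\alpha>\eta>D$ enters. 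Finally, integrate in $s$: $\int_0^t(e^{vs}-1)\,ds=(e^{vt}-1-vt)/v$, which produces the stated prefactor $\mathrm{e}^{v|t|}-1-v|t|$.

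The main obstacle is the short-distance terms: boundary interactions that touch near $X$ but extend beyond $Y$. Theorem~\ref{th:HKLR} gives them no decay, and the trivial bound must be paid by the power-law smallness of $\norm{h_Z}$. These terms also lack the $(e^{vs}-1)$ factor, so the resulting linear-in-$t$ contribution has to be absorbed into $c\,(e^{v|t|}-1-v|t|)$. This absorption is only possible up to a constant for $t$ bounded away from zero, or alternatively by noting that such terms enter the commutator only at order $s$. I would first try to treat them by expanding the commutator to first order in $s$ and reusing the $\Theta$-bound from Theorem~\ref{thm:clusteringBoundary}, and then check that the exponents close with $F_{\alpha-D}$.
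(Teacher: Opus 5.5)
Your outline (Duhamel interpolation, Hastings--Koma for the pieces of $H-H_R$ at positive distance from $X$, the trivial bound for pieces touching $X$, then anchoring and convolution sums down to $F_{\alpha-D}(R)$) is the same as the paper's three-term decomposition. You also correctly isolate the real difficulty: terms with $Z\cap X\neq\emptyset$ and $Z\not\subset B_R(X)$ carry no factor $(e^{vs}-1)$.

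The remedy you propose, however, fails. For such $Z$ the commutator $[h_Z,T_{H_R,s}(O_X)]$ equals $[h_Z,O_X]$ at $s=0$, which is generically nonzero. So these terms do \emph{not} enter at order $s$; only the terms with $Z\cap X=\emptyset$ vanish at $s=0$. In fact no argument can produce the prefactor $e^{v|t|}-1-v|t|=O(t^2)$, because the difference is genuinely first order in $t$: $T_{H,t}(O_X)-T_{H_R,t}(O_X)=it\,[H-H_R,O_X]+O(t^2)$. A concrete case: take $D=1$, $H=\sum_{i<j}J|i-j|^{-\alpha}\sigma^z_i\sigma^z_j$, $X=\{0\}$ and $O_X=\sigma^x_0$. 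Then $[H-H_R,\sigma^x_0]=2i\,\sigma^y_0\sum_{j\in\Lambda,\,|j|>R}J|j|^{-\alpha}\sigma^z_j$, whose norm is $2J\sum_{j\in\Lambda,\,|j|>R}|j|^{-\alpha}$. This is nonzero, and of order $R^{-(\alpha-1)}$ on large $\Lambda$. Hence the left-hand side behaves like $|t|\,F_{\alpha-D}(R)$ as $t\to0$, while the right-hand side is $O(t^2)$. The inequality as stated is false for small $|t|$, so your third step cannot be closed as planned. Your fallback of absorbing the linear term only for $t$ bounded away from zero is correct, but it gives a constant that blows up as $t_0\to0$.

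The paper's proof has the same defect. Its term (2) is bounded by $\frac{2\kappa|X|\|A_X\|}{\alpha-D}F_{\alpha-D}(R)$ with no time-dependent factor, yet the final step integrates everything as $c\,F_{\alpha-D}(R)\int_0^{|t|}(e^{vs}-1)\,ds$. What both arguments actually establish is
$c\,\|O_X\|\,|X|\,\bigl(e^{v|t|}-1-v|t|+|t|\bigr)F_{\alpha-D}(R)\le c'\,\|O_X\|\,|X|\,\bigl(e^{v|t|}-1\bigr)F_{\alpha-D}(R)$,
using $v|t|\le e^{v|t|}-1$. This mirrors how the $\Theta$-terms are treated in Theorem \ref{thm:clusteringBoundary}, where the linear-in-$t$ piece is absorbed into $e^{vt}$ rather than into $e^{vt}-1-vt$. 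You should state and prove this corrected version. It is all that Lemma \ref{lem:quasiLocality} needs, since there the time prefactor is only integrated against the Gaussian filter $f$, and $\int|f(t)|(e^{v|t|}-1)\,dt<\infty$.
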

\begin{proof}
    See Theorem \ref{thm:LRboundLongRangeAppendix} in the appendix.
\end{proof}

This directly yields the quasi-locality property of the CKG Lindbladian:
\medskip
\begin{lem}[Quasi-locality]\label{lem:quasiLocality}
For a $k$-local Hamiltonian $H$ with long-range interactions and a region $A\subset \La$, denote by 
$\mathcal{L}_A^{(R)}$ the generator of the Gibbs sampler with 
Metropolis weight $\gamma_M$ (see Equation \ref{eq:metropolisWeight}), single-qubit Pauli jumps as the jump operators $\vA_\alpha \in P_A^1$ (see Equation \eqref{eq:pauliJumpsInA}) and Hamiltonian 
\begin{equation}
    H_R:=\summ{Z\subset B_R(A)}h_Z.
\end{equation}

Then,
\begin{equation}
    \|\mathcal L_A - \mathcal L_A^{(R)}\|_{\infty\to\infty} = \hat{\mathcal{O}}(R^{-(\alpha-D)}).
\end{equation}
\end{lem}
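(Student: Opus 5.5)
The plan is to reduce the difference of generators to a difference of Heisenberg-evolved jump operators, and then apply Theorem~\ref{thm:LRboundLongRange} to each jump. For each Pauli jump $\vA_a$ with $a\in P_A^1$, supported on a single site $i\in A$, the Fourier-transformed jumps built from $H$ and from $H_R$ differ by
\begin{equation*}
\vA_a(\omega)-\vA_a^{(R)}(\omega)=\frac{1}{\sqrt{2\pi}}\int_{-\infty}^{\infty}\bigl(T_{H,t}(\vA_a)-T_{H_R,t}(\vA_a)\bigr)e^{-i\omega t}f(t)\,dt .
\end{equation*}
Theorem~\ref{thm:LRboundLongRange} with $X=\{i\}$ bounds the integrand by $c\,(e^{v|t|}-1-v|t|)\,F_{\alpha-D}(R)$.

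The first obstacle is the time integral. The Gaussian filter $f(t)\propto e^{-t^2/\beta^2}$ beats the factor $e^{v|t|}$, so the integral converges to a constant $C(\beta,v)\sim e^{v^2\beta^2/4}$. This gives
\begin{equation*}
\bigl\|\vA_a(\omega)-\vA_a^{(R)}(\omega)\bigr\|\lesssim C(\beta,v)\,R^{-(\alpha-D)}
\end{equation*}
uniformly in $\omega$. In addition, $\|\vA_a(\omega)\|\le\|f\|_1/\sqrt{2\pi}=O(1)$, with the same bound for the truncated version.

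Next I handle the dissipative part. Write the $\omega$-integrand as $\vA\rho\vA^\dagger-\tfrac12\{\vA^\dagger\vA,\rho\}$ and telescope, e.g. $\vA O\vA^\dagger-\vA'O\vA'^\dagger=(\vA-\vA')O\vA^\dagger+\vA'O(\vA-\vA')^\dagger$, and similarly for the anticommutator term. In $\infty\to\infty$ norm the difference is then bounded by $O(1)\cdot\|\vA(\omega)-\vA'(\omega)\|$.

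The integral over $\omega$ is the delicate point, because $\gamma(\omega)$ tends to $1$ as $\omega\to-\infty$ and is not integrable. There are two options:
\begin{itemize}
\item Use that $\int\|\vA(\omega)\|^2d\omega=\|f\|_2^2\|\vA\|^2$ by Plancherel. Then Cauchy--Schwarz in $\omega$ gives
\begin{equation*}
\int\|\vA-\vA'\|\,\|\vA\|\,d\omega\le\Bigl(\int\|\vA-\vA'\|^2\Bigr)^{1/2}\Bigl(\int\|\vA\|^2\Bigr)^{1/2}.
\end{equation*}
\item Work with the operator-valued Plancherel identity $\int \vA(\omega)^\dagger \vA(\omega)\,d\omega=\int |f(t)|^2\, \vA(t)^\dagger\vA(t)\,dt$, which is valid since $\gamma\le1$ (after splitting $\gamma$).
\end{itemize}
I would try the second: bound the difference through time-domain integrals of $\|\vA(t)-\vA^{(R)}(t)\|$ weighted by $|f(t)|^2$. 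This again gives $C'(\beta,v)F_{\alpha-D}(R)$. Summing over the $3|A|$ Pauli jumps contributes a factor $|A|$, which is absorbed into $\hat{\mathcal O}$.

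Finally, the coherent term $B$ restricted to $A$. It is a double time integral of products of jumps conjugated by $e^{\pm i\beta H t}$ with weights $b_1,b_2$ of unit $L^1$ norm. Telescoping the products and applying Theorem~\ref{thm:LRboundLongRange} to each conjugation gives $\|B_A-B_A^{(R)}\|\lesssim\int\!\!\int|b_1(t)||b_2(t')|\,e^{v\beta(|t|+3|t'|)}\,dt\,dt'\,F_{\alpha-D}(R)$. This is finite because $b_2$ is Gaussian and $b_1$ decays like $e^{-2\pi|t|}$. Convergence requires $v\beta<2\pi$, or else one uses that $b_1$ is a convolution with a Gaussian factor. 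If this fails, I would instead bound the operator $B$ through its frequency-domain representation, as in \cite{chen_efficient_2023}. The commutator $[B,\cdot]$ then contributes $2\|B_A-B_A^{(R)}\|$. Collecting all terms gives $\|\mathcal L_A-\mathcal L_A^{(R)}\|_{\infty\to\infty}=\hat{\mathcal O}(R^{-(\alpha-D)})$.
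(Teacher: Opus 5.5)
\textbf{Gap: the coherent term.} The argument for the coherent term fails in two places.

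First, for the Metropolis weight that the lemma specifies, the inner kernel is not Gaussian. It is $b_2^\eta(t')=\mathbf{1}(|t'|\ge\eta)\,\frac{1}{2\sqrt{2\pi}}\frac{e^{-2t'^2-it'}}{t'(2t'+i)}$, plus a separate $\vA^\dagger\vA$ term, and it behaves like $1/t'$ near the origin, so it exists only as a principal value. The Gaussian $b_2$ displayed in Section~\ref{subsec:CKGLindbladian} is the kernel for the Gaussian transition weight, not for the Metropolis weight. Hence $\int|b_2|=\infty$, and telescoping against a uniform-in-time truncation bound gives nothing. You have to use the odd cancellation of the $1/t'$ part. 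This controls the singular piece by the time derivative of the truncation error, i.e.\ by $\beta\sup_{|s|\lesssim\beta}\norm{[H,\vA]_H(s)-[H_R,\vA]_{H_R}(s)}$ together with $\sup\norm{\vA_H(s)-\vA_{H_R}(s)}$. This is what Corollary A.1 of \cite{chen_quantum_2025} does for term (b) in the paper's proof.

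Second, $b_1$ decays only like $e^{-2\pi|t|}$; convolving $1/\cosh(2\pi t)$ with a Gaussian does not improve this tail. The conjugations, however, act at physical times $\beta t$, so Theorem~\ref{thm:LRboundLongRange} contributes a factor $e^{v\beta|t|}$. Your double integral therefore diverges unless $v\beta<2\pi$, which is a high-temperature regime, while the lemma is needed at every $\beta$. If you truncate at $|t|\le T$ and optimize with an exponential light cone, you only get $R^{-(\alpha-D)\,2\pi/(2\pi+v\beta)}$.

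The paper instead truncates both time integrals at $T\sim\frac{\alpha-D}{c}\log R$. It then uses Kuwahara's linear light-cone bound (valid for $\alpha>2D+1$), whose error $(\beta T)^{D+1}/(R-v\beta T)^{\alpha-D}$ grows only polynomially in $T$. This is where the $\log(R)^{D+1}$ correction hidden in $\hat{\mathcal O}(R^{-(\alpha-D)})$ comes from. Your fallback to ``the frequency-domain representation'' supplies neither of these two ingredients.

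\textbf{The dissipative part is sound.} In fact it is more careful than the paper's argument, which integrates the pointwise estimate against $\Gamma_\alpha=\int|\gamma(\omega)|\,d\omega$. That constant is infinite for the Metropolis weight, exactly as you observed. Your second option closes this once you add the operator Cauchy--Schwarz inequality
$\norm{\int\gamma\, X(\omega)^\dagger O\, Y(\omega)\,d\omega}\le\norm{O}\,\norm{\int\gamma X^\dagger X\,d\omega}^{1/2}\norm{\int\gamma Y^\dagger Y\,d\omega}^{1/2}$.
Combine it with $0\le\gamma\le1$ and the operator Plancherel identity to get a bound $\lesssim\bigl(\int|f(t)|^2\norm{\vA_H(t)-\vA_{H_R}(t)}^2dt\bigr)^{1/2}\lesssim F_{\alpha-D}(R)$. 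Your first option is not valid as written: $\int\norm{\vA(\omega)}^2d\omega$ in operator norm is not computed by Plancherel.
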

\begin{proof}
The proof is divided into two parts, each bounding the decay of one specific part of the Lindbladian (namely, the coherent and the incoherent part). Consider the CKG Lindbladian
\begin{equation}
\label{eq:L-CKG}
\mathcal L_A[\rho]
=
\underbrace{-i[B_A,\rho]}_{\text{coherent part}}
+
\underbrace{\sum_{\alpha\in P^1_A}
\int_{-\infty}^{\infty}\!\gamma_\alpha(\omega)
\left(
\vA_\alpha(\omega)\,\rho\,\vA_\alpha(\omega)^\dagger
-\tfrac12\!\left\{\vA_\alpha(\omega)^\dagger\vA_\alpha(\omega),\rho\right\}
\right)\,d\omega}_{\text{incoherent part}},
\end{equation}
where each Fourier component is given by
\begin{equation}
\label{eq:Ahat}
\vA_\alpha(\omega)
=
\frac{1}{\sqrt{2\pi}}
\int_{-\infty}^{\infty}
e^{+iHt}\vA_\alpha e^{-iHt}\,e^{-i\omega t}\,f_\alpha(t)\,dt,
\qquad 
\| \vA_\alpha\|\le 1,
\end{equation}
for integrable kernels\footnote{Note that $e^{v\abs{t}}-1-v\abs{t}\leq (v\abs{t})^2e^{v\abs{t}}/2$ for every $t$ by Taylor.} $f_\alpha\in L^1(\mathbb R,(1+t^2)e^{v|t|}dt)$ 
and weights $\gamma_\alpha\in L^1(\mathbb R)$.

From now on, for simplicity, we will assume that every $\alpha$ is a jump operator in $P_A^1$.

We define the truncated operators
\[
\vA^{(R)}_\alpha(\omega)
=
\frac{1}{\sqrt{2\pi}}
\int_{-\infty}^{\infty}
e^{+iH_Rt}\vA_\alpha e^{-iH_Rt}\,e^{-i\omega t}\,f_\alpha(t)\,dt,
\]\[
\mathcal L_A^{(R)}[\rho]
:=
-i[B^{(R)},\rho]
+\sum_{\alpha\in P_A^1}\!\int\!\gamma_\alpha(\omega)
\mathcal \,D[\vA^{(R)}_\alpha(\omega)](\rho)\,d\omega,
\]
where $\mathcal D[L](X)=LXL^\dagger-\tfrac12\{L^\dagger L,X\}$.

For later convenience, for any $O_X\in\A_X$ supported in $X$, apply the Lieb-Robinson bound for long-range Hamiltonians given by Theorem \ref{thm:LRboundLongRange}
\[
\|T_{H,t}(O_X)-T_{H_R,t}(O_X)\|
\le
\frac{C}{v}\big(\mathrm{e}^{v|t|}-1-v|t|\big)
\,F_{\alpha-D}(R)\,\|O_X\|,
\]
and insert the previous estimate to the dissipative part of the Lindbladian to obtain
\[
\|\vA_\alpha(\omega)-\vA^{(R)}_\alpha(\omega)\|
\le
\frac{C}{\sqrt{2\pi}v}\,
F_{\alpha-D}(R)
\int_{-\infty}^{\infty}\!|f_\alpha(t)|\,
(\mathrm{e}^{v|t|}-1-v|t|)\,dt=:\Xi_\alpha(R)
\sim R^{-(\alpha-D)}<\infty.
\]
The quasi-locality of the jump operators is key to prove the quasi-locality of the incoherent part and the coherent part.

\underline{Incoherent part:} For the incoherent part, we have that for any operators $L,\widetilde L$ and $X$, the map
$\mathcal D[L]$
satisfies $(\mathcal{D}L - \mathcal{D}\tilde{L})(X) = (L-\tilde{L})X L^\dagger + \tilde{L} X (L-\tilde{L})^\dagger - \frac{1}{2}\{L^\dagger L - \tilde{L}^\dagger \tilde{L}, X\}$. Using $\|L^\dagger L - \tilde{L}^\dagger \tilde{L}\| \leq (\|L\| + \|\tilde{L}\|)\|L - \tilde{L}\|$. Then,
\[
\|\mathcal D[L]-\mathcal D[\widetilde L]\|
\le
3(\|L\|+\|\widetilde L\|)\|L-\widetilde L\|.
\]
Setting $L=\vA_\alpha(\omega)$ and $\widetilde L=\vA^{(R)}_\alpha(\omega)$, then,
\begin{equation}
\Big\|\int_{-\infty}^{\infty}\!\gamma_\alpha(\omega)
\big(\mathcal D[L]-\mathcal D[\widetilde L]\big)d\omega\Big\|\leq 3\,\Gamma_\alpha\,\left (\norm{\vA_\alpha(\w)}+\norm{\vA^{(R)}_\alpha(\w)}\right )\,\norm{\vA_\alpha(\w)-\vA^{(R)}_\alpha(\w)}.
\label{eq:boundIncoherentDifference}
\end{equation}

The norm of $\vA_\alpha(\w)$ and $\vA^{(R)}_\alpha(\w)$ can be bounded by a constant independent of $\w$.
\begin{equation}
    \norm{\vA_\alpha(\w)}\leq \norm{\vA_\alpha(t)}\underbrace{\frac{1}{\sqrt{2\pi}}\int_{-\infty}^{\infty}\!|f_\alpha(t)|\,dt}_{=:M_\alpha}\leq \norm{\vA_\alpha}M_\alpha\leq M_\alpha,
\end{equation}
(same with $\norm{\vA_\alpha^{(R)}(\w)}$) with the following constants:
\begin{equation}
\Gamma_\alpha :=\int_{-\infty}^{\infty}\!|\gamma_\alpha(\omega)|\,d\omega,
\qquad
M_\alpha:=\frac{1}{\sqrt{2\pi}}\int_{-\infty}^{\infty}\!|f_\alpha(t)|\,dt,
\end{equation}

Going back to Equation \eqref{eq:boundIncoherentDifference} and integrating
against $|\gamma_\alpha(\omega)|$ yields
\begin{equation}
\Big\|\int_{-\infty}^{\infty}\!\gamma_\alpha(\omega)
\big(\mathcal D[L]-\mathcal D[\widetilde L]\big)d\omega\Big\| \le
3\,\Gamma_\alpha\,(M_\alpha+\Xi_\alpha(R))\,\Xi_\alpha(R).
\end{equation}

\medskip
\underline{Coherent part:} We will follow the proof of the Corollary A.$2$ from \cite{chen_quantum_2025}. We aim at bounding the coherent part of the CKG Lindbladian with the Metropolis weight (Equation \eqref{eq:metropolisWeight}). The main differences with respect to finite or short-range comes from the Lieb-Robinson bounds used, while  the part of the proof corresponding to the manipulations of $B_a$ is unchanged with respect to~\cite{chen_quantum_2025}.

The coherent part is generated by the operator\[B_A=\sum_{a\in A}B_a,\]where the operator $B_a$ associated to the jump operator $\vA_a$ with Metropolis weight is
\begin{equation}
B_{a}
:=
\int_{-\infty}^{\infty}
b_1(t)\,
e^{-i\beta H t}
\lim_{\eta\to 0^+}
\left(
\int_{-\infty}^{\infty}
b_2^{\eta}(t')\,
\vA_{a}(\beta t')\,
\vA_{a}(-\beta t')\,dt'
+
\frac{1}{8\sqrt{2\pi}}\,
\vA_{a}^{\dagger}\vA_{a}
\right)
e^{i\beta H t}\,dt.
\end{equation}
with the functions
\begin{align}
b_1(t)
&:=
2\sqrt{\pi}\,e^{\frac18}
\left(
\frac{1}{\cosh(2\pi t)}
*
\sin(-t)\,e^{-2t^2}
\right),
\qquad
\|b_1\|_1 < 1,
\\[1ex]
b_{2}^{\eta}(t)
&:=
\mathbf{1}(|t|\ge \eta)\,
\underbrace{\frac{1}{2\sqrt{2\pi}}\,
\frac{\exp(-2t^2-it)}{t(2t+i)}}_{=b_2^M(t)}
\equiv
\mathbf{1}(|t|\ge \eta)\,
b_2^{M}(t).
\end{align}

    The goal is to show that $B_a$ is quasi-local, meaning that replacing $H$ by the Hamiltonian $H_R$ (containing only terms inside $A(R)$) only introduces a small error.

    The first step consists of truncating the double integral to $\abs{t},\abs{t'}\leq T$. Because $b_1\in L^1 $ ($\normm{b_1}{1}\leq 1$) with exponential decay and $b_2^M(t)\lesssim e^{-2t^2}/\abs{t}$ for large $\abs{t}$, both integrals are absolutely convergent outside any finite window. In particular, truncating the integral to $\abs{t}\leq T$ introduces an error
    \begin{equation}
        \normm{b_1(t) \Id(\abs{t}\geq T)}{1}\lesssim e^{-cT},
    \end{equation}
    and truncating the inner integral to $\abs{t'}\leq T$ introduces an error controlled by
    \begin{equation}
        \normm{b_2^M(t)\Id(\abs{t'}\geq T)}{1}\lesssim e^{-2cT^2}.
    \end{equation}
    Both terms are exponentially small in $T$. Defining $B^T$ as the double truncated version of $B$, we have that it can be locally approximated with an error
    \begin{equation}
        \normm{B_A-B_A^{(R)}}{}\leq \normm{B_A-B_A^T}{}-\norm{B_A^T-(B_A^{(R)})^T}\lesssim e^{-cT}+e^{-2cT^2}+\norm{B_A^T-(B_A^{(R)})^T},
    \end{equation}
    so the proof reduces to bounding the truncated difference $\normm{B_A^T-(B_A^{(R)})^T}{}$. For simplicity we may denote $B\equiv B_A$.

The truncated $B^T$ splits as
\begin{align}
B^T:= &\int_{-T}^{T} b_1(t) \left (
\lim_{\eta \to 0} \big(
\underbrace{\int_{-T}^{T} 
 \Id(|t| \ge 1) 
b_2^{\eta}(t') \vA^{\dagger}( \beta(t' - t)) \vA(-\beta(t' + t)) \, dt'}_{(a)}
\right. \\ \nonumber
&+\underbrace{\int_{-T}^{T} 
\Id(1 \ge |t| \ge \eta) 
b_2^{\eta}(t') \vA^{\dagger}( \beta(t' - t)) \vA(-\beta(t' + t)) \, dt'}_{(b)}\big)
\left.  +\underbrace{\frac{(\vA^{\dagger} \vA)(-\beta t)}{16 \sqrt{2\pi}}}_{(c)} \right ) dt.
    \end{align}

Each of the terms requires a different treatment.

The term $(c)$ involves the operator $\vA^\dagger \vA$ evolved in time, and the difference\[\normm{(\vA^\dagger \vA)_H(t)-(\vA^\dagger \vA)_{H_R}(t)}{} ,\]
is directly bounded by the Lieb-Robinson bound:
\begin{equation}
    (c)\lesssim \norm{\vA_X}^2 F_{\alpha-D}(R)(e^{vt}-1-vt).
\end{equation}

The term $(a)$ contributes the same order as term $(c)$. For $\abs{t'}\geq 1$ the kernel $b_2^M(t')$ is bounded and decays as $e^{-2t^2}$, so there is no singularity. The difference between $H$ and $H_R$ enters only through the time evolution $\vA_H(\beta t')\mapsto \vA_{H_R}(\beta t') $. Applying Lieb-Robinson bound for long-range interactions and using $\normm{b_1}{1},\normm{b_2^M(\abs{t'}\geq 1)}{1}\lesssim  1$, one obtains
 \begin{equation}
     (a)\lesssim \norm{\vA_X}^2 F_{\alpha-D}(R)(e^{2vt}-1-2vt).
 \end{equation}

The heart of the argument is controlling the term $(b)$. Due to the divergence of $b_2^M(t')\sim 1/t'$ at $t'=0$, one cannot simply bound $\normm{b_2^M(t')}{}\cdot \normm{\vA_H(\beta t')-\vA_{H_R}(\beta t')}{}$ and integrate.

The solution is the Corollary A.1 from \cite{chen_quantum_2025} which bounds the coherent part by commutators of the jump operatos with the Hamiltonian:

\begin{multline}
    \big( \Id (1 \ge |t| \ge \eta) \, \text{term} \big)\lesssim 
\beta \sup_{|t| \le \beta (T + 1)} 
\| [\vA, H]_H(t) - [\vA, H]_{H_R}(t) \|\\
\nonumber+ 
\sup_{|t| \le \beta (T + 1)} 
\| \vA_H(t) - \vA_{H_R}(t) \|.
\end{multline}

Now the term $(b)$ is controlled by the Lieb-Robinson bounds of the commutator $[\vA,H]$ and the jump operator $\vA$:
\begin{multline}
    ( c )
\lesssim 
\beta \sup_{|t| \le \beta (T + 1)} 
\| [\vA, H]_H(t) - [\vA, H]_{H_R}(t) \|
+ 
\sup_{|t| \le \beta (T + 1)} 
\| \vA_H(t) - \vA_{H_R}(t) \|,
\end{multline}
Using the Lieb-Robinson bounds for long-range interactions from \cite{kuwahara_strictly_2020}, and applying the same steps as in the proof of Theorem \ref{thm:LRboundLongRange}, the following is satisfied for $\alpha>2D+1$:\[\norm{\vA_H(t)-\vA_{H_R}(t)}\lesssim \dfrac{t^{D+1}}{(l-v t)^{\alpha-D}}.\]

Finally, considering the exponentially small error made by truncating the integral of the coherent part to a time $T$ (dependent on $R$) and using the quasi-locality of $B^T$, one ends up with the quasi-locality of the coherent part $B$
\begin{align}
    \norm{B-B^{(R)}}&\leq \norm{B-B^T}+\norm{B^T-B^{(R)}}\\
    &\nonumber\lesssim \normm{b_1(t)\Id(\abs{t}\geq T)}{1}+\normm{b_2^M(t)\Id (\abs{t}\geq T)}{1}+\abs{A}\dfrac{(\beta T)^{D+1}}{(l-v (\beta T))^{\alpha-D}}\\
    &\nonumber\lesssim e^{-cT}+e^{-2T^2}+\abs{A}\dfrac{(\beta T)^{D+1}}{(l-v (\beta T))^{\alpha-D}}\lesssim \dfrac{\log(R)^{D+1)}}{R^{\alpha -D}}=\hat{\mathcal{O}}(R^{-(\alpha-D)}),
\end{align}
by choosing\[T(R)=\dfrac{\alpha-D}{c}\log(R)-\dfrac{D+1}{c}\log \log (R)+\Theta(1).\]Thus, we get a polynomial decay of the coherent part with a logarithmic correction. 
    \begin{equation}
        \norm{B-B^{(R)}}\leq \hat{\mathcal{O}}(R^{-(\alpha-D)})
    \end{equation}
    for $\alpha >2D+1$.

  \underline{Quasi-locality of the Lindbladian:}    The original proof concludes by combining both the coherent and incoherent parts to give:
\begin{multline}
    \normm{\Li_{A,R}^\dagger-\Li_{A}^\dagger}{\infty \to \infty} \leq \normm{\mathcal{D}^\dagger - \mathcal{D}_R^\dagger}{\infty \to \infty }+2 \normm{B_A-B_A^{(R)}}{\infty\to\infty}
    \\
    \lesssim\,\abs{A}\left (\Gamma_\alpha\,(M_\alpha+\Xi_\alpha(R))\,\Xi_\alpha(R) + \dfrac{\log(R)^{D+1)}}{R^{\alpha -D}}\right )
    \\
    \lesssim \abs{A}\left ( \dfrac{\log(R)^{D+1}}{R^{\alpha -D}}\right )=\hat{\mathcal{O}}(R^{-(\alpha-D)}).
\end{multline}
This concludes the proof.
\end{proof}


The quasi-locality of the recovery map follows from the quasi-locality of the CKG Lindbladian, which was proven in Lemma \ref{lem:quasiLocality}. Indeed:
\medskip
\begin{lem}[Lemma VII.$2$ \cite{chen_quantum_2025}: Truncation error]\label{lem:truncationError}
Consider the time-averaging map $\mathcal{R}^{\dagger}_{A,t,R}$ associated with the Hamiltonian 
$H_R$ containing all jumps $A^{\alpha}$ with $\alpha \in P_{A_R}^1$. Then,
\begin{equation}
    \bigl\| (\mathcal{R}^t_{A,t,R} )^\dagger-  (\mathcal{R}^t_{A,t})^\dagger \bigr\|_{\infty \to \infty}
    \lesssim
    t \, \bigl\| \mathcal{L}^{\dagger}_{A,R} - \mathcal{L}^{\dagger}_{A} \bigr\|_{\infty \to \infty},
\end{equation}
where $\mathcal{L}_A$ (resp. $\mathcal{L}_{A,R}$) is the Lindbladian of the Gibbs sampler 
with Hamiltonian $H$ and jumps $A^{\alpha} \in P_A^1$ (respectively the one associated to the 
Hamiltonian $H_R$ with jumps $A^{\alpha} \in P_{A_R}^1$).
\end{lem}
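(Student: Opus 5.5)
The plan is to compare the two time-averaged maps at the level of their integrands and reduce everything to the generator difference through a Duhamel formula. Write $\Phi_s:=e^{s\Li_A^\dagger}$ and $\Psi_s:=e^{s\Li_{A,R}^\dagger}$ for the Heisenberg-picture semigroups. Then
\begin{equation}
(\mathcal{R}^t_{A,t,R})^\dagger-(\mathcal{R}^t_{A,t})^\dagger=\frac1t\int_0^t(\Psi_s-\Phi_s)\,ds ,
\end{equation}
and the triangle inequality gives $\|\cdot\|_{\infty\to\infty}\le \frac1t\int_0^t\|\Psi_s-\Phi_s\|_{\infty\to\infty}ds$.

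For fixed $s$, I will use the standard interpolation identity
\begin{equation}
\Psi_s-\Phi_s=-\int_0^s\frac{d}{du}\big(\Psi_{s-u}\Phi_u\big)\,du=\int_0^s\Psi_{s-u}\big(\Li_{A,R}^\dagger-\Li_A^\dagger\big)\Phi_u\,du .
\end{equation}
Differentiating gives $\Psi_{s-u}(\Li_A^\dagger-\Li_{A,R}^\dagger)\Phi_u$; the sign is irrelevant once norms are taken. Both $\Li_A$ and $\Li_{A,R}$ are genuine Lindbladians of Gibbs samplers, since the truncated one uses $H_R$ in place of $H$. Hence $\Psi_{s-u}$ and $\Phi_u$ are unital completely positive maps, and their $\infty\to\infty$ norm equals $\|\Psi(\Id)\|=1$ by Russo--Dye. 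Submultiplicativity then yields $\|\Psi_s-\Phi_s\|_{\infty\to\infty}\le s\,\|\Li_{A,R}^\dagger-\Li_A^\dagger\|_{\infty\to\infty}$.

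Integrating $s$ over $[0,t]$ and dividing by $t$ gives the bound $\frac t2\|\Li_{A,R}^\dagger-\Li_A^\dagger\|_{\infty\to\infty}$, which is the claimed $\lesssim t\,\|\cdots\|$. Combined with Lemma~\ref{lem:quasiLocality}, this gives an $\hat{\mathcal{O}}(tR^{-(\alpha-D)})$ truncation error.

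The step needing the most care is the contractivity claim, specifically that $e^{s\Li_{A,R}^\dagger}$ is unital and completely positive. It has to be checked that truncating $H\to H_R$ inside $\vA(\omega)$ and inside $B$ preserves the Lindblad form. It does: the jumps $\vA^{(R)}_\alpha(\omega)$ are still operators, the weights $\gamma\ge0$ are unchanged, and $B^{(R)}$ is still Hermitian, so the generator is of GKLS form for any choice of Hamiltonian.

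A second point is a mismatch in the jump sets. The statement uses jumps in $P^1_{A_R}$, while Lemma~\ref{lem:quasiLocality} uses jumps in $P^1_A$. I will interpret the lemma with the same jump set $P^1_A$ on both sides, which is how the bound is used. If the jump sets truly differ, the right-hand side simply absorbs the extra terms, and the inequality is unaffected in form.
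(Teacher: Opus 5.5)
Your proposal is correct and is essentially the argument behind this lemma: the paper gives no proof of its own and imports Lemma VII.2 of \cite{chen_quantum_2025}, which rests on the same three steps. These are the Duhamel interpolation, the fact that both Heisenberg semigroups are unital and completely positive (so each has $\infty\to\infty$ norm one), and the average $\frac1t\int_0^t s\,ds=\frac t2$. Your side remarks are also sound: the truncated CKG generator is still of GKLS form, and since the Duhamel bound holds for any pair of Lindbladians, the mismatch between $P^1_A$ and $P^1_{A_R}$ does not affect the inequality itself; it only matters when you combine it with Lemma~\ref{lem:quasiLocality}.
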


Now that we have proven that the recovery map recovers $\rho_\beta$ from $\p_{\beta,-A}$ (see Theorem \ref{thm:quasi-Local-RecoveryMap-longRange}) and that it is quasi-local (see Lemma \ref{lem:truncationError}), we have all the ingredients to prove the decay of the CMI:
\medskip
\begin{thm}\label{thm:cmiCKGLindbladian}
    Let $H$ be a $k$-local long-range Hamiltonian and a tripartition $A\sqcup B\sqcup C=\La$ as in Figure \ref{fig:geometryCMI}. Then,
    the CMI decays as
    \begin{equation}
        I(A : C \,|\, B)_\rho \le \log(\dim(C)) \,e^{\mu ''\abs{A}} \dfrac{(\log (R))^{\lambda'(D+1)/2}}{R^{\lambda ' (\alpha-D)/2}} =\tilde{\mathcal{O}}(R^{-\alpha'}),
    \end{equation}
    for $\alpha':=\lambda ' (\alpha-D)/2$, where $\lambda':=\lambda/(1+\lambda)$ and $\lambda$ is defined in Equation \eqref{eq:lambdaValues} (note that $\lambda$ depends on the inverse temperature $\beta$).
\end{thm}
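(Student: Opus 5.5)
The plan is to build a strictly local recovery map on $AB$ from the truncated time-averaged CKG dynamics, bound its recovery error in trace norm, and convert that into a CMI bound. Concretely, take $\RR_{B\to AB} := \mathcal{R}^t_{A,t,R'}$, the time-averaged Gibbs sampler with jumps in $P^1_A$ and Hamiltonian truncated to $H_{R'}$. Here $R'$ is chosen slightly smaller than $R=d(A,C)$, say $R'=R/2$ up to constants, so that the whole map is supported in $A(R')\subset AB$. Since every term of $H_{R'}$ and every jump is supported in $AB$, the map acts trivially on $C$ and is therefore a legitimate channel $B\to AB$ (it is CPTP as a time average of Lindblad semigroups).

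The triangle inequality splits the recovery error into two terms:
\begin{equation*}
\normm{\mathcal{R}^t_{A,t,R'}[\rho_{\beta,-A}]-\rho_\beta}{1}\le \normm{\mathcal{R}^t_A[\rho_{\beta,-A}]-\rho_\beta}{1}+\normm{(\mathcal{R}^t_{A,t,R'}-\mathcal{R}^t_A)[\rho_{\beta,-A}]}{1}.
\end{equation*}
The first term is controlled by Theorem \ref{thm:quasi-Local-RecoveryMap-longRange} and is at most $r e^{\mu|A|}t^{-\lambda}$. For the second term, duality between the trace norm and the $\infty\to\infty$ norm of the Heisenberg-picture maps gives
\begin{equation*}
\normm{(\mathcal{R}^t_{A,t,R'}-\mathcal{R}^t_A)[\rho_{\beta,-A}]}{1}\le \normm{(\mathcal{R}^t_{A,t,R'})^\dagger-(\mathcal{R}^t_A)^\dagger}{\infty\to\infty}.
\end{equation*}
By Lemma \ref{lem:truncationError} combined with Lemma \ref{lem:quasiLocality}, this is $\lesssim t\,|A|\,(\log R)^{D+1}R^{-(\alpha-D)}$. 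Writing $\epsilon(R):=(\log R)^{D+1}R^{-(\alpha-D)}$, the total error is $\lesssim e^{\mu|A|}\big(t^{-\lambda}+t\,\epsilon(R)\big)$, with the $|A|$ prefactor absorbed into the exponential.

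Next, optimise over $t$ by balancing $t^{-\lambda}=t\epsilon(R)$, i.e. $t=\epsilon(R)^{-1/(1+\lambda)}$. This leaves an error of order $e^{\mu'|A|}\epsilon(R)^{\lambda/(1+\lambda)}=e^{\mu'|A|}\epsilon(R)^{\lambda'}$. Replacing $R'$ by $R$ only changes constants, which are harmless since the decay is polynomial.

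To pass to the CMI, apply Proposition \ref{prop:FR} with $\min\{|A|,|C|\}\le|A|$. The square root turns the exponent into $\lambda'/2$, giving $(\log R)^{\lambda'(D+1)/2}R^{-\lambda'(\alpha-D)/2}$, and the polynomial prefactor in $|A|$ is absorbed into $e^{\mu''|A|}$. The $\log(\dim C)$ prefactor in the statement does not arise from Proposition \ref{prop:FR} with $\min\{|A|,|C|\}$. Instead I would derive it by a continuity-type argument, expressing $I(A:C|B)$ through the recovery error and using an Alicki--Fannes-type bound whose prefactor is $\log\dim C$. Either form suffices for the $\tilde{\mathcal{O}}$ claim.

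The main obstacle is the locality bookkeeping in the second step. The truncated CKG generator must be genuinely supported in $AB$, including the coherent term $B_A^{(R')}$ and the Fourier-transformed jumps built from $H_{R'}$. The regime of $\alpha$ must also match that of Lemma \ref{lem:quasiLocality}, which needs $\alpha>2D+1$ for the Kuwahara bound. Finally, I need to check that the choice $T(R)\sim\log R$ in that lemma stays compatible with the optimal $t$ chosen here, which should hold because the two time parameters enter independently.
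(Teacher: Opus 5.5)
Your proposal is correct and follows the paper's proof essentially step by step. It uses the same truncated time-averaged CKG recovery map and the same triangle-inequality split, controlled by Theorem~\ref{thm:quasi-Local-RecoveryMap-longRange} and Lemmas~\ref{lem:truncationError} and~\ref{lem:quasiLocality}. You also balance $t^{-\lambda}$ against $t\,\epsilon(R)$ as the paper does, and your Alicki--Fannes-type closing step $I(A:C|B)_\rho\le \Delta\log(\dim(C))+h_2(\Delta)\lesssim \log(\dim(C))\sqrt{\Delta}$ is exactly how the paper obtains the $\log(\dim(C))$ prefactor. Your truncation radius $R'<R$ (keeping the map strictly inside $AB$) and your caveat that Lemma~\ref{lem:quasiLocality} needs $\alpha>2D+1$ are harmless refinements of the paper's argument.
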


\begin{proof}
    The proof is straightforward once there is a quantum channel $\RR_{A,t}$ that recovers $\rho_\beta$ from $\rho_{\beta,-A}$ (see Theorem \ref{thm:quasi-Local-RecoveryMap-longRange}). In order for $\RR_{A,t}$ to be a recovery map, it needs to be localized in a region $A(R)=A\cup B$. Lemma \ref{lem:truncationError} proves that the quasi-locality of the recovery map is linearly related to the quasi-locality of the Lindbladian used in the definition of the recovery map. Then, using both Lemma \ref{lem:truncationError} and \ref{lem:quasiLocality}, the error made by localizing $\RR_{A,t}$ is
\begin{equation}
    \bigl\| (\mathcal{R}^t_{A,R})^* -  (\mathcal{R}^t_{A})^* \bigr\|_{\infty \to \infty}
    \lesssim
    t \, \abs{A}\,\dfrac{(\log (R))^{D+1}}{R^{\alpha-D}}.
\end{equation}
By choosing a time
\begin{equation}
    t_0(R)=\dfrac{e^{\mu \abs{A}/(1+\lambda)}R^{\beta/(1+\lambda)}}{(\log (R))^{(D+1)/(1+\lambda)}},
    \label{eq:timeRecovery}
\end{equation}
and using the triangular inequality, we get that the recovery map recovers the original state with a polynomial precision with logarithmic corrections, i.e.
\begin{equation}
     \bigl\| \mathcal{R}^{t_0}_{A,R}[\rho_{\beta,-A}] - \rho_\beta \bigr\|_1 \lesssim e^{\mu '\abs{A}} \dfrac{(\log (R))^{\lambda'(D+1)}}{R^{\lambda ' (\alpha-D)}},
     \label{eq:localRecovery}
\end{equation}
where $\lambda'=\lambda/(1+\lambda)\in (0,1)$, and $\lambda$ defined as:
\begin{equation}\label{eq:lambdaValues}
    \lambda =
    \begin{cases}
        \frac{128\beta_0^4}{\beta^3(\beta + 5\beta_0)}, & \text{if } \beta > 4\beta_0,\\[0.6em]
        \frac{2\beta_0}{\beta + 5\beta_0}, & \text{if } \beta \le 4\beta_0,
    \end{cases}
\end{equation}

Finally, the CMI can be bounded by the approximate recovery map
\begin{equation}
    I(A : C \,|\, B)_\rho 
\le 
\Delta \log(\dim(C)) + h_2(\Delta)
\lesssim 
\log(\dim(C)) \sqrt{\Delta},
\end{equation}
where $h_2(x) = -x \log_2(x) - (1 - x) \log_2(1 - x)
\lesssim \sqrt{x} \text{ for } x \in [0,1]$,
and\[
\quad
\Delta := \tfrac{1}{2} \bigl\| \mathcal{R}_{A,R}^t[\rho_{\beta,-A}] - \rho_\beta \bigr\|_1 .
\]
This proves the claim.
\end{proof}


\subsection{Numerical results}
\label{sec:numerics}
In this section, we simulate the long-range classical $1$D Ising model and study the CMI decay of its Gibbs state at positive temperatures. In Section \ref{sec:LR-TFIM} we add a transverse field to the model and study in which regimes of $h$ and decay rate $\alpha$ the decay of the CMI holds with the bounds proven above.

\subsubsection{Classical $1$D spin chains}\label{sec:classical}

\begin{figure}[t]
     \centering
     \includegraphics[width=0.8\linewidth]{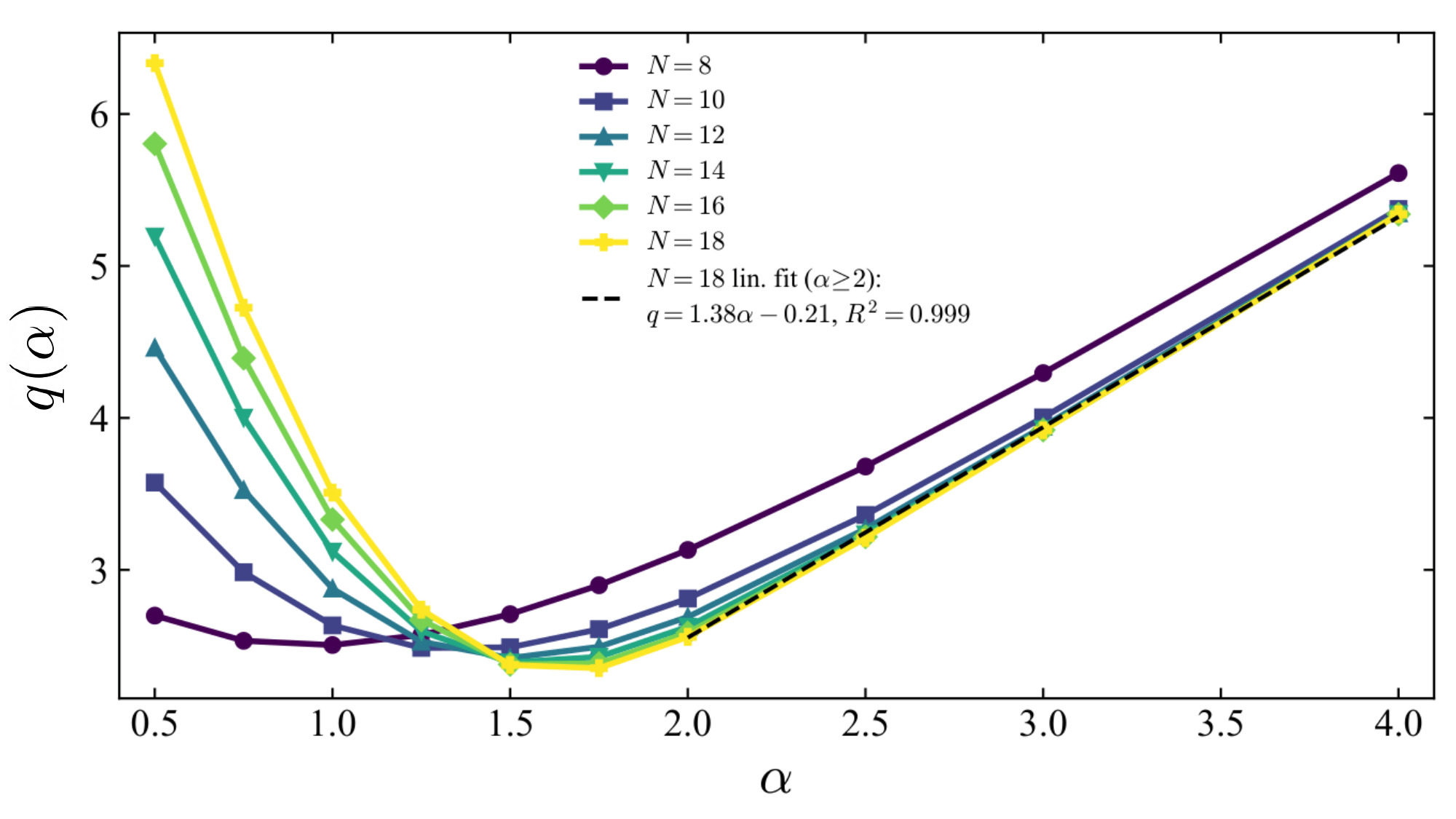}
     \caption{Decay exponent $q(\alpha)$ of the conditional mutual information at $\beta=5$ for different system sizes $N$. The dashed line indicates the linear fit for $N=18$ ($\alpha \geq 2$).}
     \label{fig:1}
\end{figure}

While our results above provide upper bounds, one might wonder whether there even are explicit examples of states where the CMI decays exactly as a polynomial of the size of the buffer region.
As a simple case study that shows such a decay we consider the $1$D Ising model with long-range interactions,
\begin{equation}
H = -\sum_{i<j}\frac{J_0}{|i-j|^\alpha}\, s_i s_j ,
\qquad s_i=\pm1.
\end{equation} We focus on the Gibbs distribution
$
p(\mathbf{s})=\frac{1}{Z}e^{-\beta H(\mathbf{s})},
$
which defines a classical thermal state. This model admits a quantum (commuting) counterpart,
\begin{equation}
H_Q=-\sum_{i<j}\frac{J_0}{|i-j|^\alpha}\,Z_iZ_j,
\end{equation}
whose Gibbs state is diagonal in the computational basis and therefore identical to the classical distribution above.

For each interaction exponent $\alpha$ and system size $N$, we compute numerically the conditional mutual information $I(A:C|B_\ell)$ as a function of the buffer size $\ell$. We find that the decay is well described by a power law,
\begin{equation}
I(A:C|B_\ell)\sim \ell^{-q(\alpha)},
\end{equation}
which allows us to define an effective decay exponent $q(\alpha)$ through a log--log regression.

In Fig.~\ref{fig:1} we show the decay exponent $q(\alpha)$ for system sizes up to $N=18$, together with a weighted linear regression. The data are well described by
\begin{equation}
q(\alpha)=a\,\alpha+b,
\end{equation}
with $
a= 1.38, \, b= -0.21$,
indicating an approximately linear growth of the CMI decay exponent with the interaction exponent.
Notably, the fact that $a>1$ implies that the CMI decays faster than the strength of the interactions.

In the Figure ~\ref{fig:1} we show $q(\alpha)$ for several system sizes $N=8,10,12,14,15,16,18$. While finite-size effects are visible for smaller $N$, the curves rapidly converge as $N$ increases, indicating that the observed scaling is not a finite-size artifact.

\subsubsection{Long-Range Transverse-Field Ising Model}
\label{sec:LR-TFIM}

A natural question is whether this decay survives in the
presence of quantum fluctuations.  We now address this
question numerically for the one-dimensional long-range
transverse-field Ising model (LR-TFIM). The Hamiltonian is
\begin{equation}
  H = -\sum_{i < j} \frac{J}{|i-j|^{\alpha}}\,Z_i Z_j
      - h \sum_{i} X_i ,
  \label{eq:H}
\end{equation}
where $\alpha > 1$ controls the interaction range and $h \geq 0$
is the transverse field. 

We compute the CMI by exact diagonalization, for system sizes up to $N=10$. The reason for the small sizes is that computing these quantities requires the calculation of the entire spectrum of $\rho_\beta$; we leave it for future work to develop methods to reach larger sizes.

Figure \ref{fig:cmiDecayCurves} shows the CMI decay profile evolves with both $\alpha$ and $h$. Despite the limited number of qubits and the possibility of finite-size effects, we observe the following:
\begin{itemize}
    \item For small $h$, one recovers the polynomial decay of the CMI found in Section \ref{sec:classical}.
    \item For high enough $\alpha$ the long-range interactions are so weak  that the system resembles a short- or even a finite-range one, suggesting an exponential decay of CMI. At $N=10$ the maximum buffer size is $R_\mathrm{max} = 8$, which does not allow a quantitative distinction between power-law and exponential decay; the fits only serve as qualitative guides.
    \item For $\alpha=3$, there is a change from a polynomial decay of the CMI between $h=0.3$ and $h=5$. This suggests that there might be a critical value of $h$ for which the CMI changes from a polynomial to an exponential decay, but the number of qubits are insufficient to conclude this.
\end{itemize}

\begin{figure}
    \centering
    \includegraphics[width=\linewidth]{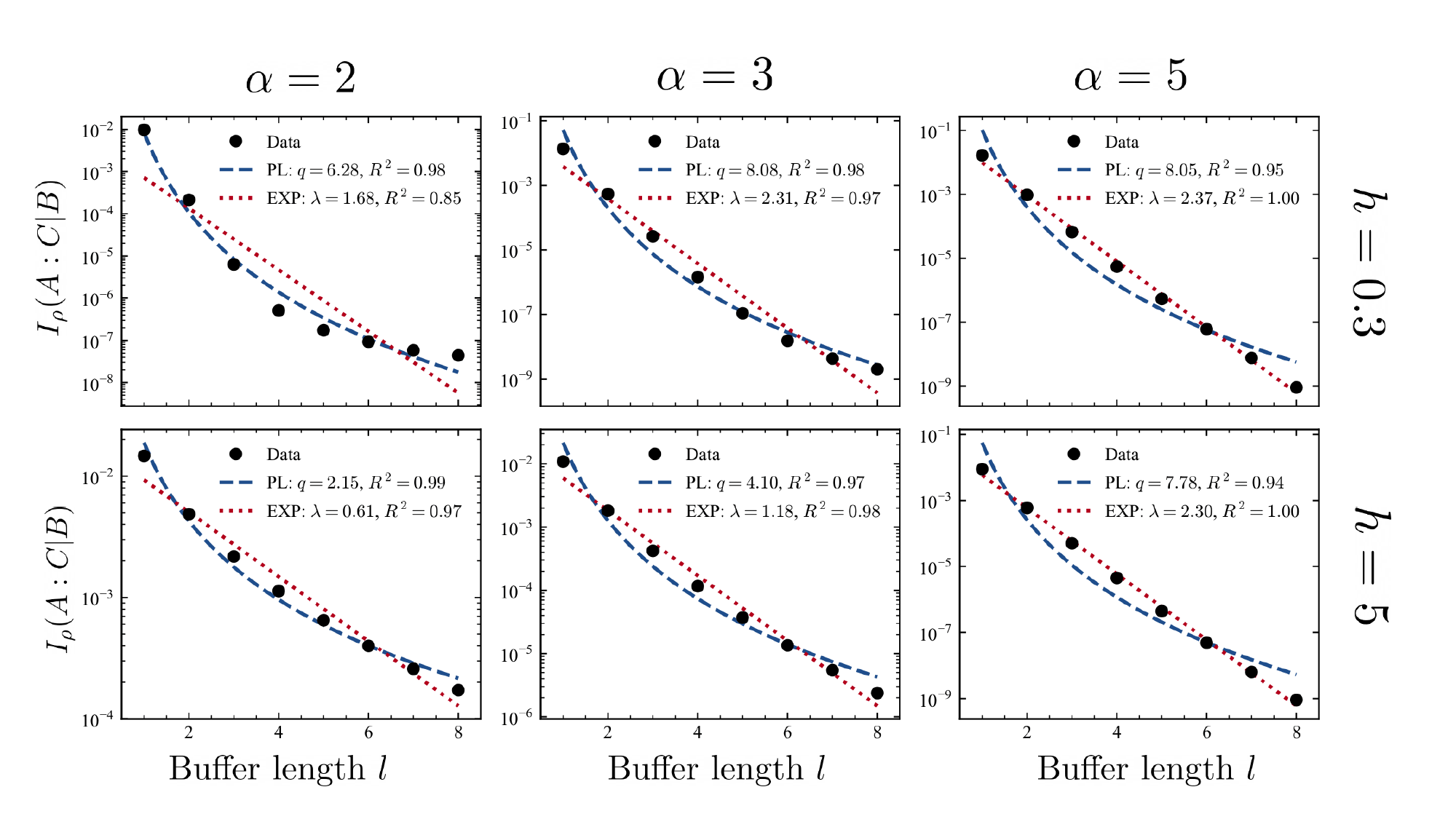}
    \caption{CMI decay $I(A:C|B_\ell)$ as a function of buffer size $\ell$ for selected values of $\alpha$ and transverse field $h$, at inverse temperature $\beta = 5$, $N=10$. Power-law (PL) and exponential (EXP) fits are shown; $R^2$ values quantify the relative quality of each ansatz. The running effective exponent $q_\mathrm{eff}(\ell)$ is shown in the legend.}
    \label{fig:cmiDecayCurves}
\end{figure}

For $T=0$, the ground state of the LR-TFIM has a phase transition for a critical $h_c=h_c(\alpha)$ dependent on $\alpha$, \cite{shiratani_stochastic_2024,koziol_quantum-critical_2021}. This further motivates the study of whether there is a critical value of $h$ for which there is a qualitative change in the behaviour of the CMI even for large $N$, both at $T\to 0$ and also at higher temperatures. We leave this question for future work.

\section{Conclusions}

We have established the decay of mutual information for fixed-point of long-range (local) Lindbladians under rapid mixing. If frustration-freeness is also satisfied by the Lindbladian, then the fixed point shows global Markov property, meaning that the conditional mutual information $I(A:C|B)$ decays polynomially with distance and scales polynomially with the sizes $\abs{A},\,\abs{C}$.

Additionally, we prove a local Markov property of the Gibbs state of a non-commuting long-range Hamiltonian. The proof is an extension of the proof from \cite{chen_quantum_2025} to long-range Hamiltonians. We use the CKG Lindbladian \cite{chen_efficient_2023} which is a quasi-local thermal Lindbladian whose fixed-point is a Gibbs state of a non-commuting Hamiltonian.

\paragraph{Tightness.}
The decay exponent $\alpha'=\dfrac{\lambda \alpha}{2(\lambda +v)}-\dfrac{nD}{2}$ in Theorem \ref{thm:decayCMIgeneralLindbladian} for the CMI
is not expected to be tight in general. A linear dependence on $\alpha$ is expected from the
classical case (Section~\ref{sec:classical}),
which for short-range systems
($\alpha\to\infty$) is consistent with
exponential decay.
The rate
 observed numerically in Section~\ref{sec:classical} (namely, $q(\alpha)\approx 1.38\alpha$) is larger than the theoretical prediction $q(\alpha)\simeq\lambda\alpha/2(\lambda+v)$.
We leave the question of the optimal decay exponent depending on parameters such as $D,v$ as an open problem.

\paragraph{Open problems.}
Several natural directions remain open:
\begin{enumerate}
\item \emph{Sharp threshold for rapid mixing.}
The condition $\alpha>3D+1$ for CKG rapid mixing is set by the method of
\cite{rouze_efficient_2025} and is likely not tight, due to overheads related to the Lieb-Robinson bound and other proof steps. In 1D specifically, the classical long-range Ising
model thermalizes rapidly for all $\alpha>1$. Closing this gap
would immediately improve the thermal CMI result, both in the decay with distance and in the dependence with the size of $\vert A \vert, \vert C \vert$. A similar situation occurs for the MI in Theorem \ref{thm:MutualInfoDecay}, where we believe that a decay rate of just $\alpha$ is possible \cite{kim_thermal_2025}.

\item \emph{Connection to classification of thermal phases.}
The work of \cite{MKS25} shows that CMI decay implies the existence of
low-depth quantum circuits that prepare Gibbs states within the same fixed thermal phase.
Our results establish the prerequisites for extending this classification to
long-range systems, although the polynomial tails appear too large for a straightforward extension of that result. An explicit circuit construction remains to be done.


\item \emph{Connection to Strong-to-Weak Spontaneous Symmetry Breaking (SW-SSB)}: SW-SSB is characterized by the coexistence of decaying ordinary correlations with a non-decaying information-theoretic diagnostic \cite{Lessa_2025}. Typically formulated as a non-decaying CMI \[I(A:C|B)\xrightarrow[d(A,C) \to \infty]{} c >0, \]and a decaying correlator function\[|\text{Cov}(O_A,O_C)|\xrightarrow[d(A,C)\to\infty]{} 0.\]. 

Our results constrain both sides of this diagnostic simultaneously. Theorem \ref{thm:MutualInfoDecay} shows that, under local rapid mixing and primitivity alone (no frustration-freeness required), the mutual information (and hence the ordinary correlations between $A$ and $C$) decays for any $\alpha>3D-1$. 

 On the other hand, Theorem \ref{thm:decayCMIgeneralLindbladian} shows that the CMI decays unconditionally in the short-range case (i.e. whenever $F$ decays exponentially), and Corollary \ref{cor:decayCMILongRange} extends this to long-range Lindbladians whenever $\alpha>\alpha^*:=n(\lambda+v)D/\lambda$

Hence, for $\alpha>\max \{\alpha^*,3D-1\}$ (and unconditionally in the short-range case) rapid mixing forbids SW-SSB, since neither the ordinary correlations nor the CMI survive at long distance. For $\alpha\leq \alpha^*$, it remains open whether the CMI could fail to decay while the ordinary correlations still do, which would allow SW-SSB to occur even under rapid mixing.

\item \emph{Removing frustration-freeness.}
Theorem~\ref{thm:decayCMIgeneralLindbladian} assumes frustration-freeness (FF).
Whether FF can be replaced by a weaker condition (e.g. detailed balance
or else) is an important open problem even in the short-range setting.

\item \textit{Larger numerical simulations.} Numerically, we confirm the polynomial decay of the CMI for the classical long-range Ising chain, consistent with the above bounds, and find a similar polynomial regime in the long-range transverse field Ising model at a small transverse field $h$ (Section~\ref{sec:LR-TFIM}). As $h$ or $\alpha$ increase, the decay instead appears to cross over to an exponential one. Clarifying whether this is a genuine effect or a finite-size effect, is left for future work.

\end{enumerate}

\section{Acknowledgements}
AMA acknowledges useful discussions with Shengqi Sang and Ruochen Ma. The authors would like to thank David P\'erez-Garc\'ia for pointing them to the proofs of \cite{brandao_area_2015}. The authors would like to thank to Tim Möbus for pointing out an error in the Fawzi-Renner bound in Proposition \ref{prop:FR}. PR, MS and AMA acknowledge support from the Spanish Agencia Estatal de Investigacion through the grants ``IFT Centro de Excelencia Severo Ochoa CEX2020-001007-S", ``PID2023-150847NA-I00'', ``EUR2025-164823'' and  ``PCI2024-153448'', funded by MICIU/AEI/10.13039/501100011033 and co-funded by the EU. AC acknowledges the support of the Deutsche Forschungsgemeinschaft (DFG, German Research Foundation) - Project-ID 470903074 - TRR 352.  This
project was funded within the QuantERA II Programme that has received funding from the EU’s H2020 research and innovation programme under the GA No 101017733.

\printbibliography

\appendix


\section{Auxiliary results}\label{app:auxiliary}

We now prove two technical lemmas that will allow us to prove some properties of open quantum systems with long-range interactions on $D$-regular graphs.

\medskip
\begin{lem}\label{lem:technicalLemmaFfunction}
    Let $\Gamma$ be a $D$-regular graph and let $\Li=\sum_{Z\in\mathcal{P}_0(\Gamma)}L_Z$ be a local interaction governed by the $F$-function $F_\alpha(r) :=1/(1+r)^\alpha;\,r\geq0$. Fix an $l>0$, then
    \begin{equation}
        \sup_{x\in\G}\sum\limits_{\substack{Z\in\mathcal{P}_0(\Gamma)\\ Z\ni x  \\ {\rm diam}(Z)= l}}\norm{L_Z}\leq \kappa \normm{\Li}{F}\,F_{\alpha-D}(l/2).
    \end{equation}
\end{lem}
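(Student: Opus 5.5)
The plan is to reduce the sum over sets $Z$ of fixed diameter containing $x$ to a sum over pairs of points, and then apply the doubly anchored bound \eqref{eq:FNormLindbladian} together with the regularity of the lattice. Fix $x\in\Gamma$ and $l>0$.

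\emph{Step 1 (anchoring a far point).} For any $Z\ni x$ with $\mathrm{diam}(Z)=l$ there are $y,z\in Z$ with $d(y,z)=l$. By the triangle inequality, $d(x,y)+d(x,z)\geq l$, so at least one of them, say $y$, satisfies $d(x,y)\geq l/2$. Hence every such $Z$ contains $x$ together with some $y$ in the set
\[
S_l:=\{y: l/2\leq d(x,y)\leq l\},
\]
the upper bound holding because $\mathrm{diam}(Z)=l$. Using $\norm{L_Z}\leq\normm{L_Z}{cb}$ and overcounting each $Z$ once for every admissible $y$, we get
\[
\sum_{\substack{Z\ni x\\ \mathrm{diam}(Z)=l}}\norm{L_Z}\;\leq\;\sum_{y\in S_l}\;\sum_{Z\ni x,y}\normm{L_Z}{cb}.
\]

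\emph{Step 2 (doubly anchored bound).} By \eqref{eq:FNormLindbladian}, the inner sum is at most $\normm{\Li}{F}F_\alpha(d(x,y))$. Since $F_\alpha$ is non-increasing and $d(x,y)\geq l/2$, this is at most $\normm{\Li}{F}F_\alpha(l/2)$.

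\emph{Step 3 (counting).} Regularity \eqref{eq:regularity} with $\nu=D$ gives $|S_l|\leq\kappa l^D$. Combining,
\[
\sum_{\substack{Z\ni x\\ \mathrm{diam}(Z)=l}}\norm{L_Z}\;\leq\;\kappa\normm{\Li}{F}\,l^D(1+l/2)^{-\alpha}\;\leq\;2^D\kappa\normm{\Li}{F}\,F_{\alpha-D}(l/2),
\]
where the last step uses $l^D\leq 2^D(1+l/2)^D$.

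\emph{Main obstacle.} The delicate part is the constant. The argument above produces a factor $2^D$, which is either absorbed into $\kappa$ (the paper treats $\kappa$ as a generic geometric constant) or removed by a finer count. For the finer count one could use the shell bound \eqref{eq:SR}: counting only points at distance exactly $\lceil l/2\rceil$ yields the $(1+r)^{D-1}$ estimate. However, a far point need not lie at exactly that distance. So I would instead keep the shell sum $\sum_{r\geq l/2}\kappa(1+r)^{D-1}(1+r)^{-\alpha}$, which gives $F_{\alpha-D}$ up to a factor $1/(\alpha-D)$, again absorbed into $\kappa$.

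The sup over $x$ is immediate, since none of the bounds depend on $x$. For $l<1$ (so that $l=0$ and $Z=\{x\}$), the claim reduces to $\normm{L_{\{x\}}}{cb}\leq\normm{\Li}{F}$ by taking $y=x$, which is consistent with $F_{\alpha-D}(0)=1$.
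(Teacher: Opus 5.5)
Your reduction is the same as the paper's: every $Z\ni x$ with $\mathrm{diam}(Z)=l$ contains some $y$ with $l/2\le d(x,y)\le l$, and the doubly anchored bound \eqref{eq:FNormLindbladian} controls $\sum_{Z\ni x,y}\normm{L_Z}{cb}$ by $\normm{\Li}{F}F_\alpha(d(x,y))$.

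The gap is in the counting, and it is exactly the obstacle you flag. You bound $|S_l|$ by a ball volume and bound every decay factor by the worst case $F_\alpha(l/2)$. This gives $2^D\kappa\normm{\Li}{F}F_{\alpha-D}(l/2)$, not the claimed $\kappa\normm{\Li}{F}F_{\alpha-D}(l/2)$. Absorbing $2^D$ into $\kappa$ proves a weaker statement than the one given.

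Your proposed refinement does not repair this. Summing shells over all $r\ge l/2$ throws away the cutoff $d(x,y)\le l$ that you established in Step 1. It produces a factor of order $1/(\alpha-D)$, which is not a geometric constant: it diverges as $\alpha\downarrow D$, so it cannot legitimately be absorbed into $\kappa$. Moreover, the ball bound \eqref{eq:regularity} alone does not tell you how the points of $S_l$ split between radius $\approx l/2$ and radius $\approx l$. Any argument that uses only the ball volume therefore necessarily loses a constant.

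The missing step is the paper's shell-by-shell count, with the cutoff kept.
\begin{itemize}
\item Split $S_l$ into the shells $\{y: d(x,y)=r\}$, $r=\lceil l/2\rceil,\dots,l$.
\item Bound each shell by $\kappa(1+r)^{D-1}$. This is the sense in which $\kappa$ enters the lemma, cf.\ \eqref{eq:SR}.
\item Pair the shell size with the decay at the same radius, $F_\alpha(r)$. The product is $\kappa F_{\alpha-D+1}(r)$, which is decreasing in $r$, so it is at most $\kappa F_{\alpha-D+1}(l/2)$.
\item Because $d(x,y)\le l$, there are at most $l/2+1$ shells.
\end{itemize}
Hence
\[
\sum_{\substack{Z\ni x\\ \mathrm{diam}(Z)=l}}\norm{L_Z}\;\le\;\kappa\normm{\Li}{F}\Bigl(1+\frac{l}{2}\Bigr)F_{\alpha-D+1}(l/2)\;=\;\kappa\normm{\Li}{F}F_{\alpha-D}(l/2),
\]
with no extra $2^D$ and no $1/(\alpha-D)$.
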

\begin{proof}For any $x\in \G$,
\begin{equation}
\begin{aligned}
\sum_{\substack{
    Z\in\mathcal{P}_0(\Gamma),\, Z\ni x \\
    \mathrm{diam}(Z)= l
}}
\norm{L_Z}
&\leq
\sum_{n=0}^{l/2}
\underbrace{
    \sum_{\substack{
        y\in\G \\
        d(x,y)=l/2+n
    }}
}_{=\kappa (1+l/2+n)^{D-1}}
\underbrace{
    \sum_{\substack{
        Z\in\mathcal{P}_0(\Gamma) \\
        Z\ni x,y
    }}
    \norm{L_Z}
}_{\;\;\;\leq \normm{\Li}{F} F_\alpha (l/2+n)}
\\
&\leq
\kappa \normm{\Li}{F}
\sum_{n=0}^{l/2}
(1+l/2+n)^{D-1}
F_\alpha (l/2+n)
\\
&=
\kappa \normm{\Li}{F}
\sum_{n=0}^{l/2}
F_{\alpha -D+1}(l/2+n)
\\
&\leq
\kappa \normm{\Li}{F}
(1+l/2)
F_{\alpha-D+1}(l/2)
\\
&=
\kappa \normm{\Li}{F}
F_{\alpha -D}(l/2),
\end{aligned}
\end{equation}
where the last inequality follows from upper-bounding $F(l/2+n)\leq (l/2)$. Taking the superior gives the claim.
\end{proof}
\begin{lem}\label{lem:technicalLemmaFfunction2}
    Let $\Gamma$ be a $D$-regular graph and let $\Li=\sum_{Z\in\mathcal{P}_0(\Gamma)}L_Z$ be a local interaction governed by the $F$-function $F_\alpha(r) :=1/(1+r)^\alpha;\,r\geq0$. Fix an $l>0$, then
    \begin{equation}
    \sup_{x\in\G}\sum\limits_{\substack{Z\in\mathcal{P}_0(\Gamma)\\ Z\ni x  \\ d_x(Z)= l}}\norm{L_Z}\leq \kappa \normm{\Li}{F} F_{\alpha -D+1}(l)
    \end{equation}
\end{lem}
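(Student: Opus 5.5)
The plan is to mirror the proof of Lemma~\ref{lem:technicalLemmaFfunction}, but to organise the sum by the far endpoint of $Z$ as seen from $x$ rather than by diameter. Fix $x\in\G$. Every $Z\ni x$ with $d_x(Z)=l$ contains at least one site $y$ with $d(x,y)=l$, namely a point realising the supremum, which is attained because $Z$ is finite. We can therefore overcount:
\begin{equation}
\sum_{\substack{Z\ni x\\ d_x(Z)=l}}\norm{L_Z}\;\le\;\sum_{\substack{y\in\G\\ d(x,y)=l}}\;\sum_{\substack{Z\ni x,y}}\norm{L_Z}.
\end{equation}
A set $Z$ may contain several sites at distance exactly $l$ from $x$, so it may be counted more than once. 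This is harmless for an upper bound.

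Next I bound the inner sum using the doubly anchored condition \eqref{eq:FNormLindbladian}. Together with $\norm{L_Z}\le\normm{L_Z}{cb}$, it gives $\sum_{Z\ni x,y}\norm{L_Z}\le\normm{\Li}{F}F_\alpha(d(x,y))=\normm{\Li}{F}F_\alpha(l)$. This bound is uniform in $y$ on the sphere, so the outer sum only contributes the number of lattice points with $d(x,y)=l$.

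I control that number with the shell estimate used in Lemma~\ref{lem:technicalLemmaFfunction}, which comes from the $(\kappa,D-1)$-regularity \eqref{eq:regularity}: $|\{y: d(x,y)=l\}|\le\kappa(1+l)^{D-1}$. Combining the two pieces gives
\begin{equation}
\sum_{\substack{Z\ni x\\ d_x(Z)=l}}\norm{L_Z}\le \kappa\normm{\Li}{F}(1+l)^{D-1}(1+l)^{-\alpha}=\kappa\normm{\Li}{F}F_{\alpha-D+1}(l),
\end{equation}
and taking the supremum over $x$ proves the claim.

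The argument is simpler than the diameter version because no sum over $n$ appears: fixing $d_x(Z)=l$ pins the far point to a single shell. The only point needing care is the sphere-counting constant. One has to check that the regularity assumption is used in its shell form $\kappa(1+r)^{D-1}$, as in \eqref{eq:SR} with $S=\{x\}$, rather than the ball form $\kappa r^{\nu}$. For $l\ge1$ these agree up to the choice of $\kappa$, and I would absorb any discrepancy into $\kappa$, as the paper does.
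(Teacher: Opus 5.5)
Your proof is correct and follows essentially the same route as the paper's. The paper also overcounts each $Z$ by a far point $y$ with $d(x,y)=l$, bounds $\sum_{Z\ni x,y}\norm{L_Z}\le\normm{\Li}{F}F_\alpha(l)$ via the doubly anchored condition, and multiplies by the shell count $\kappa(1+l)^{D-1}$ to obtain $\kappa\normm{\Li}{F}F_{\alpha-D+1}(l)$; your remarks on why such a $y$ exists, on multiple counting, and on using the regularity in shell form make explicit what the paper leaves implicit.
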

\begin{proof}For any $x\in \G$,
\begin{align}
\sum_{\substack{
    Z\in\mathcal{P}_0(\Gamma),\, Z\ni x \\
    d_x(Z)= l
}}
\norm{L_Z}
&\leq
\underbrace{
    \sum_{\substack{
        y\in\G \\
        d(x,y)=l
    }}
}_{=\kappa (1+l)^{D-1}}
\underbrace{
    \sum_{\substack{
        Z\in\mathcal{P}_0(\Gamma) \\
        Z\ni x,y
    }}
    \norm{L_Z}
}_{\leq \normm{\Li}{F} F_\alpha (l)}
\\
&\leq
\kappa \normm{\Li}{F}
(1+l)^{D-1}
F_\alpha (l)
\nonumber\\
&=
\kappa \normm{\Li}{F}
F_{\alpha -D+1}(l).
\nonumber
\end{align}
Taking the superior gives the claim.
\end{proof}

\subsection{Imaginary-time Lieb-Robinson bounds}
\begin{lem}[Lemma \ref{lem:imaginary_time_lr}]
\label{lem:imaginary_time_lr_appendix}
Let $H = \sum_{Z \subset \Lambda} h_Z$ be a $k$-local long-range Hamiltonian 
on a $(\kappa, D)$-regular lattice. Define:
\begin{align}
    E &:= \sup_{i \in \Lambda}\sum_{\substack{Z \subset \Lambda \\ Z \ni i}} 
    \|h_Z\|, \label{eq:local_energy}\\
    u &:= \sup_{i \in \Lambda} \sum_{j \in \Lambda} F_\alpha(d(i,j)) 
    \leq \|F_\alpha\| < \infty, \label{eq:u_def}\\
    g &:= \max_{Z} \frac{\|h_Z\|}{E} \leq 1. \label{eq:g_def}
\end{align}
Here $E$ is the local energy density (maximum single-site energy contribution), 
$u = \|F_\alpha\|$ is the F-function norm from Definition~\ref{def:Ffunction}, 
and $g \in (0,1]$ is a normalized coupling constant.

For any $O_Z \in \A_Z$ and $x \in \mathbb{C}$ with 
$\mathrm{Re}(x) > 0$ and $|x| < \frac{1}{8guk}$:
\begin{equation}
    \left\|e^{xH} O_Z\, e^{-xH}\right\| \leq \frac{\|O\|\cdot \abs{Z}}{1 - 8gEu|x|}.
    \label{eq:imag_lr}
\end{equation}
The temperature threshold is $\beta_{\max} := \frac{1}{8guk}$.
\end{lem}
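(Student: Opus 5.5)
We expand $e^{xH}O_Z e^{-xH}$ as the nested-commutator series
\begin{equation*}
e^{xH}O_Z e^{-xH}=\sum_{m\ge 0}\frac{x^m}{m!}\,\mathrm{ad}_H^m(O_Z),\qquad \mathrm{ad}_H(\cdot)=[H,\cdot],
\end{equation*}
and bound it term by term; absolute convergence in the stated disc then gives the claim, so no analytic-continuation argument is needed. Write $H=\sum_Z h_Z$, so that $\mathrm{ad}_H^m(O_Z)=\sum_{Z_1,\dots,Z_m}[h_{Z_m},[\dots,[h_{Z_1},O_Z]\dots]]$. A term survives only if the sequence is connected in the polymer sense of \cite{sanchez-segovia_high-temperature_2025}: each $Z_j$ must intersect $Z\cup Z_1\cup\dots\cup Z_{j-1}$. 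Each commutator contributes a factor $2\|h_{Z_j}\|$, so
\begin{equation*}
\|\mathrm{ad}_H^m(O_Z)\|\le 2^m\|O_Z\|\,N_m,\qquad N_m:=\sum_{\text{connected }(Z_1,\dots,Z_m)}\prod_j\|h_{Z_j}\|.
\end{equation*}

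The core step is a combinatorial estimate of the form $N_m\le |Z|\,m!\,(cgEuk)^m$ with $c\le 4$. We count sequentially. At step $j$ the accumulated support $S_{j-1}=Z\cup Z_1\cup\dots\cup Z_{j-1}$ has at most $|Z|+(j-1)k$ sites, by $k$-locality. Choosing a site $i\in S_{j-1}$ and then a $Z_j\ni i$ costs at most $\sum_{Z_j\ni i}\|h_{Z_j}\|\le E$. Multiplying gives $\prod_j(|Z|+(j-1)k)E\lesssim |Z|\,m!\,(kE)^m$ times a factor depending only on $|Z|$. The growth $m!$ cancels the $1/m!$ in the series, leaving a geometric series in $|x|$.

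The long-range structure enters through the constants $u$ and $g$. To produce them, we will bound the weight of a term anchored at $i$ and reaching a new site $j$ by $\|H\|_F F_\alpha(d(i,j))$, and sum over $j$ to get a factor $u$. We will also use $\|h_{Z_j}\|\le gE$ to pull out $g$. Choosing which of these two bounds to apply at each step, so that the result is exactly $8gEu|x|$ and the prefactor is $|Z|$ rather than something growing with $|Z|$, is bookkeeping. We expect this bookkeeping to be the main obstacle, since the natural sequential count yields a factor like $\binom{|Z|/k+m}{m}$ rather than exactly $|Z|$. Our first attempt will be to distribute $O_Z$ over single sites by splitting it into terms attached to each site of $Z$, paying the factor $|Z|$ once through a triangle inequality, and then running the single-site count with ratio $2\cdot 4gEu|x|$.

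Summing the geometric series for $|x|$ below threshold gives $\|O\|\,|Z|/(1-8gEu|x|)$. The disc $|x|<1/(8guk)$ is taken to be the region where the ratio is below $1$, with the energy normalisation absorbed so that $E\lesssim k$. This yields the threshold $\beta_{\max}$ of \eqref{eq:temperatureThreshold}. The condition $\mathrm{Re}(x)>0$ plays no role in the argument, since the bound holds on the whole disc.
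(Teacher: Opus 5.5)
\textbf{There is a genuine gap, and it sits in the step you defer as bookkeeping.} Your skeleton matches the paper's: a nested-commutator series, a connected-sequence bound on $\norm{\mathrm{ad}_H^m(O_Z)}$, then a geometric series. But the step $N_m\le \abs{Z}\,m!\,(cgEuk)^m$ carries all the content, and it cannot be carried out, because the target inequality is false.

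\emph{The prefactor.} Take $H=\sum_i Z_i$, so $k=g=E=1$, and $O_Z=\bigotimes_{i\in Z}X_i$. Then $e^{xH}O_Ze^{-xH}=\bigotimes_{i\in Z}\bigl(e^{2x}|0\rangle\langle 1|+e^{-2x}|1\rangle\langle 0|\bigr)$, whose norm is $e^{2x\abs{Z}}$ for real $x>0$; moreover $\norm{\mathrm{ad}_H^m(O_Z)}\ge (2\abs{Z})^m$. At $x=1/(16u)$ the claimed bound $2\abs{Z}$ fails once $\abs{Z}$ is large. Your proposed fix cannot help: an operator like this is not a sum of $\abs{Z}$ pieces that the expansion sees through a single site each. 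Every piece still has support $Z$, and clusters may attach anywhere on it.

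\emph{The rate.} Take $H=\sum_{1\le d(i,j)\le L}Z_iZ_j$, which is $2$-local and long-range with $J=(1+L)^\alpha$, and $O=X_0$. Now $gE=\max_Z\norm{h_Z}=1$, while $E=N_L:=\abs{\{j:1\le d(0,j)\le L\}}$. The operator $e^{xH}X_0e^{-xH}$ is a weighted permutation matrix of norm $e^{2N_L x}$. This exceeds the claimed $1/(1-8ux)=2$ at $x=1/(16u)$, a point inside the stated disc, as soon as $N_L>8u\log 2$. So the rate genuinely scales with $E$, not with $gEu=u\max_Z\norm{h_Z}$. No choice between $\norm{h_{Z_j}}\le gE$ and the $F$-anchored bound can produce $8gEu$. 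Likewise, assuming $E\lesssim k$ to reconcile the disc $\abs{x}<1/(8guk)$ with the denominator is not justified; in this example the denominator becomes negative inside that disc.

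\textbf{What your count does prove.} Your sequential count gives the standard estimate. Summing $Z_j$ over sets meeting a support of size at most $\abs{Z}+(j-1)k$ gives $N_m\le E^m\prod_{j=0}^{m-1}(\abs{Z}+jk)$. The binomial series then yields $\norm{e^{xH}O_Ze^{-xH}}\le\norm{O_Z}(1-2kE\abs{x})^{-\abs{Z}/k}$ for $\abs{x}<1/(2kE)$. This is exponential in $\abs{Z}$, as the first example shows it must be. For $\abs{Z}\le k$, in particular for the single-site Pauli jumps to which the lemma is applied in Section~\ref{sec:thermal}, it reduces to $\norm{O_Z}/(1-2kE\abs{x})$, with threshold $1/(2kE)$ in place of $1/(8guk)$.

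\textbf{The paper's proof.} It does not supply the missing step either. It quotes $\norm{\mathrm{ad}_H^n(O_Z)}\lesssim\norm{O_Z}\abs{Z}\,n!\,(8guk)^n$ from S\'anchez-Segovia et al.\ and sums to $\abs{Z}/(1-8guk\abs{x})$; note $guk$ there, not the $gEu$ of the statement. The first example contradicts that quoted estimate already at $n=2$ for large $\abs{Z}$. The right move is to prove and use the corrected bound above rather than the lemma as stated.
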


\begin{proof}
    Expanding the exponential function in a power series ($H$ is bounded),
    \begin{equation}
        \norm{e^{xH}O_Z e^{-xH}}\leq \sum_{n=0}^\infty \dfrac{x^n}{n!}\norm{\text{ad}_H^n(O_Z)}
    \end{equation}
    One can bound each term $\text{ad}_H^n(v_Z)$ as the product of the norms of the elements of a polymer, as done in Appendix A.1 of \cite{sanchez-segovia_high-temperature_2025}. 
    \begin{multline}
        \norm{\text{ad}_H^n(O_Z)}=\| [\underbrace{H,\dots ,[H}_{n},O_Z]]\|=\|\summ{Z_1,\dots,Z_n}[h_{Z_1},\dots,[h_{Z_n},O_Z]]\| \\
        \leq \norm{v_Z}2^n \ n! \ \abs{Z} \ \summ{\gamma \nsim Z\\\norm{\gamma}=n}\prod_{ Z_i \in \gamma}\norm{h_{Z_i}}\leq  \norm{O_Z} \abs{Z}(8g u  k)^n,
    \end{multline}
    for which we have used the bound obtained in \cite{sanchez-segovia_high-temperature_2025}. 
    Then,
    \begin{equation}
         \norm{e^{xH}O_Z e^{-xH}}\leq \sum_{n=0}^\infty \dfrac{\abs{x}^n}{n!}\norm{\text{ad}_H^n(O_Z)}\leq   \norm{O_Z} \abs{Z} \sum_{n=0}^\infty (8guk \abs{x})^n=\dfrac{\norm{O_Z}\abs{Z}}{1-8guk \abs{x}}, 
    \end{equation}
    if $\abs{x}<\dfrac{1}{8guk}$.
\end{proof}

\subsection{Lieb-Robinson bounds}
\medskip
\begin{thm}[Truncation bound under the Hastings--Koma Lieb--Robinson estimate]\label{thm:LRboundLongRangeAppendix}
Let $H = \sum_Z h_Z$ be a Hamiltonian on a lattice of bounded degree in dimension $d$ such that Theorem \ref{th:HKLR} hold, 
and fix a finite region $X$. 

Then, for every operator $A_X$ supported on $X$, if the interaction satisfies a power--law decay
$\|h_Z\| \lesssim (1+\mathrm{diam}(Z))^{-\alpha}$ with $\alpha>\eta>d$, 
then there exists a constant $C_{d,\alpha,\eta}>0$ such that

\begin{equation}
\big\|T_{H,t}(A_X) - T_{H_R,t}(A_X)\big\|
\ \leq\ \tilde c
\|A_X\|\, \abs{X}\,(\mathrm{e}^{v|t|}-1-v|t|)\,F_{\alpha-D}(d(X,Y)).
\label{eq:difTimeEvolutionsR}
\end{equation}
where $\tilde c$ depends only on $c,v$ in Equation ~\eqref{eq:Hastings}. 

\end{thm}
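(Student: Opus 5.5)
We prove Theorem~\ref{thm:LRboundLongRangeAppendix} with a Duhamel interpolation between the two Heisenberg evolutions. The estimate for the integrand comes from the Hastings--Koma bound of Theorem~\ref{th:HKLR}, applied term by term to the interactions that $H_R$ discards. Here $H_R=\sum_{Z\subset Y}h_Z$ with $Y=X(R)$, so $d(X,Y^c)\ge R$, and the difference $V:=H-H_R=\sum_{Z\not\subset Y}h_Z$ consists only of terms that reach outside $Y$.

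First, we write
\begin{equation*}
T_{H,t}(A_X)-T_{H_R,t}(A_X)=\int_0^t \frac{d}{ds}\Bigl(T_{H_R,t-s}\circ T_{H,s}\Bigr)(A_X)\,ds=\int_0^t T_{H_R,t-s}\bigl(i[V,T_{H,s}(A_X)]\bigr)\,ds .
\end{equation*}
Unitarity of $T_{H_R,t-s}$ then gives
\begin{equation*}
\|T_{H,t}(A_X)-T_{H_R,t}(A_X)\|\le\int_0^{|t|}\sum_{Z\not\subset Y}\|[h_Z,T_{H,s}(A_X)]\|\,ds .
\end{equation*}
The order of composition matters: the evolution applied to $A_X$ inside the commutator must be the full one, $T_{H,s}$, since that is the evolution to which Theorem~\ref{th:HKLR} applies.

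Second, we bound each commutator with Theorem~\ref{th:HKLR}, which for fixed $Z$ gives
\begin{equation*}
\|[T_{H,s}(A_X),h_Z]\|\le c\,\|A_X\|\,\|h_Z\|\,|X|\,|Z|\,\frac{e^{v s}-1}{d(X,Z)^{\alpha}} .
\end{equation*}
Since the Hamiltonian is $k$-local, $|Z|\le k$. Integrating $e^{vs}-1$ over $[0,|t|]$ produces $(e^{v|t|}-1-v|t|)/v$, which is exactly the time factor in \eqref{eq:difTimeEvolutionsR}.

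Third, we carry out the spatial sum
\begin{equation*}
\Sigma:=\sum_{Z\not\subset Y}\frac{\|h_Z\|}{d(X,Z)^{\alpha}},
\end{equation*}
and this is where the difficulty lies. Terms with $Z\cap X=\emptyset$ have $d(X,Z)\ge1$. Terms with $Z\cap X\neq\emptyset$ have distance zero, so Theorem~\ref{th:HKLR} is vacuous for them. However, such a $Z$ must still leave $Y$, which forces $\mathrm{diam}(Z)\ge R$. For these terms we instead use the trivial bound $2\|h_Z\|\|A_X\|$ and control $\sum_{Z\ni x,\ \mathrm{diam}Z\ge R}\|h_Z\|$ by Lemmas~\ref{lem:technicalLemmaFfunction}--\ref{lem:technicalLemmaFfunction2} adapted to $\|H\|_F$. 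This gives $\lesssim |X|F_{\alpha-D}(R)$ times $|t|$, and the factor $|t|$ is absorbed into $(e^{v|t|}-1-v|t|)$ at the cost of a constant, via $t\le c_3 e^{vt}$ as in the proof of Theorem~\ref{thm:clusteringBoundary}.

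For the remaining terms we group by an anchor point $y\in Z\setminus Y$. Each such $Z$ either lies at distance $\ge R/2$ from $X$, or it contains points both near $X$ and outside $Y$ and so again has diameter $\ge R/2$. In both cases the two decays combine through the convolution condition \eqref{eq:convolutionCF}, after summing over the $\kappa r^{D-1}$ sites at distance $r$. Summing $r^{D-1}(1+r)^{-\alpha}$ over $r\ge R/2$ yields $F_{\alpha-D}(R/2)\lesssim F_{\alpha-D}(R)$, with constants depending on $\alpha,D,k,\kappa$.

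This geometric summation is the main obstacle. We need it to produce a single factor $|X|$ (one sum over $x\in X$ handled by $\sup_x$) rather than $|X|^2$, and we need it to lose only $D$ powers of decay, not $2D$. I would first try anchoring each $Z$ at its point closest to $X$ and then applying Lemma~\ref{lem:technicalLemmaFfunction2}. If the loss turns out to be $2D$, the statement still holds with $F_{\alpha-2D}$, under the standing assumption that $\alpha$ is large. Collecting all terms gives the claimed constant $\tilde c=\tilde c(c,v,k,\kappa,\alpha,D)$.
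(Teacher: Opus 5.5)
\textbf{Gap: absorbing the linear-in-$t$ term.} The step that fails is the absorption of the factor $|t|$ that comes from the terms with $Z\cap X\neq\emptyset$, $Z\not\subset Y$. The inequality $t\le c_3e^{vt}$ lets you absorb $t$ into $e^{vt}$, which is how it is used in the proof of Theorem~\ref{thm:clusteringBoundary}. It does not let you absorb $t$ into $e^{v|t|}-1-v|t|$, which vanishes like $v^2t^2/2$ at $t=0$; no constant gives $|t|\le C(e^{v|t|}-1-v|t|)$ near $0$.

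No other argument rescues the stated bound either. We have $T_{H,t}(A_X)-T_{H_R,t}(A_X)=it[H-H_R,A_X]+O(t^2)$ with $[H-H_R,A_X]=\sum_{Z\cap X\neq\emptyset,\,Z\not\subset Y}[h_Z,A_X]$. This commutator is generically nonzero for power-law couplings, e.g.\ Ising terms $J(1+d(x,y))^{-\alpha}Z_xZ_y$ with $x\in X$, $y\notin Y$, and $A_X$ a transverse Pauli at $x$. So the left-hand side is linear in $t$ while the right-hand side is quadratic, and the inequality fails for small $t$.

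The paper's proof has the same defect: its term (2) carries no factor $e^{vs}-1$, yet it is merged into $\int_0^{|t|}(e^{vs}-1)\,ds$ without justification. What both arguments actually prove is the bound with $e^{v|t|}-1$ in place of $e^{v|t|}-1-v|t|$, since $|t|+\frac{1}{v}(e^{v|t|}-1-v|t|)=\frac{1}{v}(e^{v|t|}-1)$. This weaker form is all Lemma~\ref{lem:quasiLocality} needs, because the Gaussian filter keeps $\int|f(t)|(e^{v|t|}-1)\,dt$ finite.

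\textbf{Smaller points.}

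Your Duhamel ordering is legitimate, but your stated reason is wrong. Theorem~\ref{th:HKLR} applies equally to $H_R$, whose couplings are a subfamily of those of $H$. The paper uses the opposite ordering, with $T_{H_R,s}(A_X)$ inside the commutator. Since that operator is supported in $Y$, every $Z\subset Y^c$ drops out exactly; in your ordering those far terms must instead be summed with the Lieb--Robinson bound.

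Your worry about losing $2D$ is unfounded. Anchor each $Z$ in $\Sigma$ at its point $z$ closest to $X$, with $r=d(X,z)$.
\begin{itemize}
\item The part of $\Sigma$ with $r\ge R/2$ is $\lesssim\sum_{r\ge R/2}|X|\kappa r^{D-1}E\,r^{-\alpha}\lesssim|X|F_{\alpha-D}(R)$, where $E=\sup_z\sum_{Z\ni z}\|h_Z\|<\infty$.
\item For $r<R/2$, $Z$ must reach distance $>R/2$ from $z$. Lemma~\ref{lem:technicalLemmaFfunction2}, adapted to $\|H\|_F$, then bounds this part by $\lesssim|X|F_{\alpha-D}(R)\sum_{r\ge1}r^{D-1-\alpha}$.
\end{itemize}
No convolution is needed, and your fallback to $F_{\alpha-2D}$ would prove a different statement anyway.

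The single factor $|X|$ does not follow from Theorem~\ref{th:HKLR} as stated. Its own prefactor $|A||C|=|X||Z|$ multiplies the $|X|$ already present in $\Sigma$, giving $|X|^2$. To get $|X|$ you need the pair-sum form $\sum_{x\in X}\sum_{z\in Z}F(d(x,z))$ of the bound, as in Theorem~\ref{thm:LR_open}. Then $\sum_{x\in X}$ factors out and the remainder is a $\sup_x$. The paper does not settle this either: it drops the $|X||Z|$ prefactor when invoking the bound.
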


\begin{proof}
Since both $H$ and $H_R$ are self-adjoint (finite-volume lattice, bounded operators), the maps $t\mapsto T_{H,t}(A_X)$ and $t\mapsto T_{H_R,t}(A_X)$ are differentiable in operator norm. Write
\[
T_{H,t}(A_X)-T_{H_R,t}(A_X)=\int_0^t \dfrac{d}{ds}\left [ T_{H,t}\circ T_{H_R,t-s}(A_X)\right ]ds.
\]
Computing the derivative: since \[\frac{d}{ds}T_{H,s}(B) = iT_{H,s}([H,B]),\] and
\[\frac{d}{ds}T_{H_R,t-s}(A_X) = -iT_{H_R,t-s}([H_R, A_X]),\]then,
\begin{align}
\frac{d}{ds}\left[T_{H,s}\!\left(T_{H_R,t-s}(A_X)\right)\right]
= iT_{H,s}\!\left([H,T_{H_R,t-s}(A_X)]\right)
- iT_{H,s}\!\left(T_{H_R,t-s}([H_R, A_X])\right).
\end{align}
Since $T_{H,s}$ is a $*$-automorphism, this simplifies as
\begin{align*}
[H,T_{H_R,t-s}(A_X)] &- T_{H_R,t-s}([H_R,A_X])\\
&= [H - H_R,\, T_{H_R,t-s}(A_X)]
+ [H_R,\, T_{H_R,t-s}(A_X)]
- T_{H_R,t-s}([H_R, A_X]).
\end{align*}
The last two terms cancel because $T_{H_R,t-s}$ is generated by $H_R$:
\begin{align}
[H_R,\, T_{H_R,t-s}(A_X)] = T_{H_R,t-s}([H_R, A_X]).
\end{align}
Hence:
\begin{align}
\frac{d}{ds}\left[T_{H,s} \circ T_{H_R,t-s}(A_X)\right]
= i\,T_{H,s}\!\left([H - H_R,\, T_{H_R,t-s}(A_X)]\right).
\end{align}

The Duhamel formula gives
\begin{equation}
T_{H,t}(A_X)-T_{H_R,t}(A_X)
= i\int_0^{t} T_{t,s}^{H}\!\big([H-H_R,\,T_{H_R,s}(A_X)]\big)\,ds.
\end{equation}
Using that $T_{t,s}^H$ contracts in operator norm,
\begin{equation}
\big\|T_{H,t}(A_X)-T_{H_R,t}(A_X)\big\|
\le \int_0^{|t|}\!\big\|[H-H_R,\,T_{H_R,s}(A_X)]\big\|\,ds.
\label{eq:duhamel}
\end{equation}
Expanding $H-H_R=\sum_{Z\not\subset B_R(X)} h_Z$, we get
\begin{multline}
\big\|[H-H_R,\,T_{H_R,s}(A_X)]\big\|
\le \underbrace{\summ{Z\subset (B_R(X))^C } \|h_Z\|\,
\big\|[T_{H_R,s}(A_X), O_Z]\big\|}_{(1)} \\
+\underbrace{\summ{Z\not\subset B_R(X)\\Z\cap X \neq \emptyset} \|h_Z\|\,
\big\|[T_{H_R,s}(A_X), O_Z]\big\|}_{(2)}
+\underbrace{\summ{Z\not\subset B_R(X)\\Z\cap X = \emptyset} \|h_Z\|\,
\big\|[T_{H_R,s}(A_X), O_Z]\big\|}_{(3)},
\end{multline}
where $\quad O_Z := \frac{h_Z}{\|h_Z\|}.$

For the first term, applying the bound in Theorem~\ref{th:HKLR} with 
$O_X=A_X$, $O_Z=O_Z$, $r=0$, and noting that $d(X,Z)\ge R$, we have
\[
\big\|[T_{H_R,s}(A_X), O_Z]\big\|
\le
\frac{c\,(e^{v s}-1)}{[1+d(X,Z)]^{\eta}}\,\|A_X\|.
\]
Then,

\begin{multline}
    (1)=\summ{Z\subset (B_R(X))^C } \|h_Z\|\,
\big\|[T_{H_R,s}(A_X), O_Z]\big\|=\summ{l=1}^\infty \sum_{r=R+1}^\infty \underbrace{\summ{y\in\La\\ d(X,y)=r}}_{\leq \kappa r^{D-1}}\underbrace{\summ{Z\ni y\\ d_y(Z)=l}\norm{h_Z}}_{\leq \kappa \normm{\Li}{F}F_{\alpha-D+1}(l)}F_\alpha(r)\\
\leq \kappa^2 \normm{\Li}{F}\underbrace{\summ{l=1}^\infty F_{\alpha-D+1}(l)}_{\leq \dfrac{1}{\alpha-D}} \underbrace{\sum_{r=R+1}^\infty F_{\alpha-D+1}(r)}_{\leq \dfrac{F_{\alpha-D}(R)}{\alpha-D}}\leq \dfrac{\kappa^2 \normm{\Li}{F}}{(\alpha-D)^2}F_{\alpha-D}(R).
\end{multline}

For the second term $(2)$, note that every element $Z$ in the sum has a diameter bigger or equal to $R$. However, Lieb-Robinson bounds cannot be used since $Z\cap X\neq \emptyset$, then, the trivial bound with the contractive property of $T^H$ is applied in $\norm{[T_{H_R,s}(A_X), O_Z]}\leq 2 \norm{A_X}$. Using that fact, one can derive the following bound
\begin{multline}
    (2)=\summ{Z\not\subset B_R(X)\\Z\cap X \neq \emptyset} \|h_Z\|\,
\big\|[T_{H_R,s}(A_X), O_Z]\big\|\leq 2 \summ{Z\not\subset B_R(X)\\Z\cap X \neq \emptyset} \|h_Z\|\,
\big\|A_X\big\|\\
\leq 2 \norm{A_X} \sum_{r=R}^\infty \summ{x,y\\x\in X\\d(X,y)=r}\summ{Z\ni x,y}\norm{h_Z}\leq 2 \kappa \abs{X}\norm{A_X}\sum_{r=R}^\infty r^{D-1} F_\alpha (r)\\
\leq \dfrac{ 2 \kappa \abs{X}\norm{A_X}}{\alpha-D}F_{\alpha-D}(R).
\end{multline}

The last term, $(3)$, we can use the Lieb-Robinson bounds, 
\begin{multline}
    (3)=\summ{Z\not\subset B_R(X)\\Z\cap X = \emptyset} \|h_Z\|\,
\big\|[T_{H_R,s}(A_X), O_Z]\big\|\leq \sum_{l=1}^\infty\sum_{r=1}^R \summ{x\in \mathsf X_r }\summ{Z\ni x,\\ d_x(Z)=R-r+l}\norm{h_Z}\underbrace{\big\|[T_{H_R,s}(A_X), O_Z]\big\|}_{\leq\frac{c\,\big(e^{v(t-r)} - 1\big)}{[1+r]^{\alpha}}\,\|A_X\|,}\\
\leq c\,\kappa\norm{A_X}\abs{X}\,\big(e^{v(t-r)} - 1\big) \sum_{l=1}^\infty\underbrace{\sum_{r=1}^R F_{\alpha-D+1}(R-r+l)F_\alpha(r)}_{\leq C_F F_{\alpha-D+1}(R+l)}\leq \\
c\,\kappa^2C_F\norm{A_X}\normm{\Li}{F}\abs{X}\,\big(e^{v(t-r)} - 1\big) \sum_{l=1}^\infty F_{\alpha-D+1}(R+l)\\
\leq \dfrac{c \,C_F\kappa^2\norm{A_X}\normm{\Li}{F}\abs{X}\,\big(e^{v(t-r)} - 1\big)}{\alpha-D}F_{\alpha-D}(R),
\end{multline}
where we have used Lemma \ref{lem:technicalLemmaFfunction2}.
Inserting into~\eqref{eq:duhamel},
\begin{align}
\big\|T_{H,t}(A_X)-T_{H_R,t}(A_X)\big\|
&\le c\, F_{\alpha-D}(R)\int_0^{|t|}\!(e^{v s}-1)\,ds.
\end{align}
Evaluating the integral gives
\[
\int_0^{|t|}(e^{v s}-1)\,ds
= \frac{1}{v}(e^{v|t|}-1) - |t|
\ =\ \frac{1}{v}\,(\mathrm{e}^{v|t|}-1-v|t|),
\]
which proves~\eqref{eq:difTimeEvolutionsR} with $\tilde c = c/v$.
\end{proof}
\end{document}